\newtheorem{theorem}{Theorem}
\newtheorem{corollary}[theorem]{Corollary}
\newtheorem{remark}[theorem]{Remark}
\newenvironment{proof}[1][Proof]{\noindent\textbf{#1.} }{\ \rule{0.5em}{0.5em}}
\newcommand{\be}{\begin{equation}}
\newcommand{\ee}{\end{equation}}
\newcommand{\ba}{\begin{eqnarray}}
\newcommand{\ea}{\end{eqnarray}}
\newcommand{\bee}{\begin{equation*}}
\newcommand{\eee}{\end{equation*}}
\newcommand{\baa}{\begin{eqnarray*}}
\newcommand{\eaa}{\end{eqnarray*}}
\newcommand{\nn}{\nonumber\\}
\begin{document}

\title{\textbf{Testing Composite Hypothesis based on the Density Power Divergence }}
\author{Basu, A.$^{1}$; Mandal, A.$^{2}$; Martin, N.$^{3}$ and Pardo, L.$%
^{4} $ \\
$^{1}${\small Indian Statistical Institute, Kolkata 700108, India}\\
$^{2}${\small Department of Statistics, University of Pittsburgh, Pittsburgh  15260, USA}\\
$^{3}${\small Department of Statistics and O.R. II, Complutense University of Madrid, 28003 Madrid, Spain}\\
$^{4}${\small Department of Statistics and O.R. I, Complutense University of
Madrid, 28040 Madrid, Spain} }
\date{\today}
\maketitle

%
%
%
%
%

\begin{abstract}
In any parametric inference problem, the robustness of the procedure is a
real concern. A procedure which retains a high degree of efficiency under
the model and simultaneously provides stable inference under data
contamination is preferable in any practical situation over another
procedure which achieves its efficiency at the cost of robustness or vice
versa. The density power divergence family of  \cite{MR1665873} provides a
flexible class of divergences where the adjustment between efficiency and
robustness is controlled by a single parameter $\beta$.  In this paper we consider general tests of 
parametric hypotheses based on the density power divergence. We establish the asymptotic null distribution of the test statistic and explore
its asymptotic power function. Numerical
results illustrate the performance of the theory developed.

\end{abstract}


%
\noindent
\underline{\textbf{AMS 2001 Subject Classification}}\textbf{:} 62F03, 62F35

\noindent\underline{\textbf{keywords and phrases}}: density power
divergence, linear combination of chi-squares, robustness, tests of
hypotheses.

\section{Introduction}

\label{SEC:intro}

Hypothesis testing is one of the fundamental paradigms of statistical
inference. The likelihood ratio test is a key component of the classical
theory of hypothesis testing; however, this test is known to be notoriously
nonrobust under model misspecification and the presence of outliers. 
Many density based minimum distance procedures have been observed to have strong robustness
properties in estimation and testing together with high efficiency, eg.,  \cite{MR2183173} and \cite{MR2830561}. 
Among the available robust tests in the literature, those based on the
class of disparities (\citealp{MR999667} and \citealp{MR1292557}) are known to perform well in practical situations and have many
theoretical advantages. However the effectiveness of these procedures in
continuous models is tempered by the fact that it is necessary to construct
a continuous density estimate of the data generating density as an
intermediate step. The procedure thus becomes substantially more
complicated and loses a part of its appeal. In contrast, none of the density
power divergences require any density estimation to implement their
minimization routines.
 \cite{MR3011625} considered parametric hypothesis testing based on the  
density power divergence
for simple null hypotheses. In this paper we extend, in a nontrivial way, the problem for composite null hypotheses in general populations. To do that we have introduced the minimum density power divergence estimator restricted to a general null hypothesis, i.e. the restricted minimum density power divergence estimator. In order to derive the asymptotic distribution of the new family of test statistics proposed in this paper for testing composite null hypotheses, we need the asymptotic distribution of the restricted minimum density power divergence estimator. Thus the theoretical results presented in this paper require a fresh approach and represent a non-trivial generalization of the \cite{MR3011625} paper.

Let $\left\{ P_{\boldsymbol{\theta }}:\boldsymbol{\theta }\in \Theta
\right\} $ be some identifiable parametric family of probability measures
on a measurable space $(\mathcal{X}$,$\mathcal{A)}$ with an open parameter space $\Theta
\subset {\mathbb{R}}^{p},$ $p\geq 1.$ Measures $P_{\boldsymbol{\theta }}$
are assumed to be described by densities $f_{\theta }=dP_{\theta }/d\mu $
absolutely continuous with respect to a dominating $\sigma $-finite measure $%
\mu $ on $\mathcal{X}$ $.$ Let $X_{1},...,X_{n}$ be a random sample from a
density belonging to the family $\left\{ f_{\boldsymbol{\theta }}:\boldsymbol{\theta 
}\in \Theta \right\} $, where the support of the random variables is independent of the parameter $\boldsymbol{\theta}$. Consider a general null
hypothesis of interest which restricts the parameter to a proper subset $%
\Theta _{0}$ of $\Theta $, i.e.%
\begin{equation}
H_{0}:\boldsymbol{\theta }\in \Theta _{0}~\text{against}~H_{1}:\boldsymbol{%
\theta }\notin \Theta _{0}.  \label{1}
\end{equation}%
In many practical hypothesis testing problems, the restricted parameter
space $\Theta _{0}$ is defined by a set of $r<p$ restrictions of the form
\begin{equation}
\boldsymbol{g}(\boldsymbol{\theta )=0}_{r}  \label{0}
\end{equation}%
on $\Theta $, where $\boldsymbol{g}:\mathbb{R}^{p}\rightarrow \mathbb{R}^{r}$
is a vector-valued function such that the $p\times r$ matrix
\begin{equation}
\mathbf{G}\left( \boldsymbol{\theta }\right) =\frac{\partial \boldsymbol{g}%
^{T}(\boldsymbol{\theta )}}{\partial \boldsymbol{\theta }}  \label{0.0}
\end{equation}%
exists and is continuous in $\boldsymbol{\theta }$ and rank$\left( \mathbf{G}%
\left( \boldsymbol{\theta }\right) \right) =r$. Here $\boldsymbol{0}_r$ denotes the null vector of dimension $r$, and the superscript $T$ in the
above represents the transpose of the matrix.

In general, however, there are no uniformly most powerful tests for solving the
class of problems formulated in (\ref{1}). The canonical approaches for
problems like these include the likelihood ratio test statistic, the Wald
test statistic and the Rao test statistic; see, for instance,  \cite{MR0381045}.
The tests based on disparities (or divergences), already mentioned earlier,
also provide attractive theoretical alternatives for performing the above
tests.

In this paper we will solve the hypothesis testing problem presented in (\ref%
{1}) using the family of density power divergences. Let $\mathcal{G}$ denote
the set of all distributions having densities with respect to the dominating
measure. Given any two densities $h$ and $f$ in $\mathcal{G}$, the density power divergence
between them is defined, as the function of a nonnegative tuning parameter $%
\beta$, as 
\begin{equation}
d_{\beta}(h,f)=\left\{
\begin{array}{ll}
\int\left\{ f^{1+\beta}(x)-\left( 1+\frac{1}{\beta}\right) f^{\beta }(x)h(x)+%
\frac{1}{\beta}h^{1+\beta}(x)\right\} dx, & \text{for}\mathrm{~}\beta>0, \\%
[2ex]
\int h(x)\log\left( \displaystyle\frac{h(x)}{f(x)}\right) dx, & \text{for}%
\mathrm{~}\beta=0.%
\end{array}
\right.  \label{Uno.1}
\end{equation}
The case corresponding to $\beta= 0$ may be derived from the general case by
taking the continuous limit as $\beta \rightarrow 0$, and in this case $%
d_0(h, f)$ is the classical Kullback-Leibler divergence. The quantities
defined in equation (\ref{Uno.1}) are genuine divergences in the sense $%
d_{\beta }(h,f)\geq0$ for all $h,f\in\mathcal{G}$ and all $\beta \geq 0$,
and $d_{\beta}(h,f)$ is equal to zero if and only if the densities $h$ and $%
f $ are identically equal.

In Section  \ref{sec:dpd} we introduce the restricted minimum density power divergence
estimator (RMDPDE); we also study its asymptotic distribution and its
relation with the minimum density power divergence
estimator (MDPDE) in this section. The new family of test statistics
and their asymptotic distributions are presented in Section  \ref{sec:test}. In Section  \ref{SEC:likelihood_ratio}
we describe the relation of the proposed test with the likelihood ratio test
for the normal model, and in Section \ref{sec:weibull} we have considered testing hypotheses for the Weibull model. Numerical results including real data examples are
presented in Section  \ref{sec:simulation}. The problem of tuning parameter selection is taken up in Section \ref{tuning}.  Some concluding remarks are given in Section  \ref{SEC:concluding}.

In the rest of the paper, we will frequently use the standard assumptions of asymptotic inference as given by Assumptions A, B, C and D of  \citet[p. 429]{MR702834}. We will
refer to them as the Lehmann conditions. Some of the proofs will also require the conditions D1--D5 of \citet[p. 304]{MR2830561} which we will refer to as Basu et al. conditions. In order to avoid arresting the flow of the paper, these conditions have been presented in the Appendix.


\section{Restricted Minimum Density Power Divergence Estimator}\label{sec:dpd}


We consider the parametric model of densities $\{f_{\boldsymbol \theta}:
{\boldsymbol \theta}\in \Theta \subset {\mathbb{R}}^{p}\}$; suppose that
we are interested in the estimation of ${{\boldsymbol{\theta }}}$. Let $H$
represent the distribution function corresponding to the density $h$. The
minimum density power divergence functional $T_{\beta }(H)$ at $H$ is
defined by the requirement $d_{\beta }(h,f_{T_{\beta }(H)})=\min_{{{%
\boldsymbol{\theta }}}\in \Theta }d_{\beta }(h,f_{{\boldsymbol{\theta }}})$.
Clearly the term $\int h^{1+\beta }(x)dx$ in (\ref{Uno.1}) has no role in
the minimization of $d_{\beta }(h,f_{{\boldsymbol{\theta }}})$ over ${{%
\boldsymbol{\theta }}}\in \Theta $. Thus the essential objective function to
be minimized in the computation of the minimum density power divergence
functional $T_{\beta }(H)$ reduces to
\begin{equation}
\int \left\{ f_{{\boldsymbol{\theta }}}^{1+\beta }(x)-\left( 1+\frac{1}{%
\beta }\right) f_{{\boldsymbol{\theta }}}^{\beta }(x)h(x)\right\} dx=\int f_{%
{\boldsymbol{\theta }}}^{1+\beta }(x)dx-\left( 1+\frac{1}{\beta }\right)
\int f_{{\boldsymbol{\theta }}}^{\beta }(x)dH(x).  \label{B1}
\end{equation}%
Notice that in the above objective function the density $h$ appears only as
a linear term (unlike, say, the computation of the of the minimum Hellinger
distance functional where the square root of the density $h$ is the relevant
quantity). Thus given a random sample $X_{1},\ldots ,X_{n}$ from the
distribution $H$ we can approximate the above objective function by
replacing $H$ with its empirical distribution function $H_{n}$. For a given tuning
parameter $\beta $, therefore, the MDPDE $\widehat{\boldsymbol{\theta }}%
_{\beta }$ of ${\boldsymbol{\theta }}$ can be obtained by minimizing
\begin{align}
\int f_{{\boldsymbol{\theta }}}^{1+\beta }(x)dx-\left( 1+\frac{1}{\beta }%
\right) \int f_{{\boldsymbol{\theta }}}^{\beta }(x)dH_{n}(x)& =\int f_{{%
\boldsymbol{\theta }}}^{1+\beta }(x)dx-\left( 1+\frac{1}{\beta }\right)
\frac{1}{n}\sum_{i=1}^{n}f_{{\boldsymbol{\theta }}}^{\beta }(X_{i})
\notag \\
& =\frac{1}{n}\sum_{i=1}^{n}V_{\boldsymbol{\theta }}(X_{i})  \label{B2}
\end{align}%
over ${{\boldsymbol{\theta }}}\in \Theta $, where $V_{\boldsymbol{\theta }%
}(x)=\int f_{{\boldsymbol{\theta }}}^{1+\beta }(y)dy-\left( 1+\frac{1}{\beta
}\right) f_{{\boldsymbol{\theta }}}^{\beta }(x)$. In the special case $\beta
=0$, the objective function reduces to $-\frac{1}{n}\sum_{i=1}^{n}\log f_{%
\boldsymbol{\theta }}(X_{i})$; the corresponding minimizer turns out to be
the maximum likelihood estimator (MLE) of $\boldsymbol{\theta }$. The
minimization of the expression in (\ref{B2}) over ${{\boldsymbol{\theta }}}$
does not require the use of a nonparametric density estimate of the true
unknown distribution $H$. Existing theory (e.g. \citealp{de1992smoothing})
shows that in general there is little or no advantage in introducing
smoothing for such functionals which may be empirically estimated using the
empirical distribution function alone, except in very special cases. Using $H_n$ as a substitute for $H$, if possible, is therefore a natural step.

Let $\boldsymbol{u}_{{\boldsymbol{\theta }}}(x)=\frac{\partial }{\partial
\boldsymbol{\theta }}\log f_{{\boldsymbol{\theta }}}(x)$ be the likelihood score
function of the model. Under differentiability of the model the minimization
of the objective function in equation (\ref{B2}) leads to an estimating
equation of the form
\begin{equation}
\frac{1}{n}\sum_{i=1}^{n}\boldsymbol{u}_{{\boldsymbol{\theta }}}(X_{i})f_{{%
\boldsymbol{\theta }}}^{\beta }(X_{i})-\int \boldsymbol{u}_{{\boldsymbol{%
\theta }}}(x)f_{{\boldsymbol{\theta }}}^{1+\beta }(x)dx=\boldsymbol{0}_{p},
\label{2.2}
\end{equation}%
which is an unbiased estimating equation under the model. Since the
corresponding estimating equation weights the score $\boldsymbol{u}_{%
\boldsymbol{\theta }}(X_{i})$ with the power of the density $f_{\boldsymbol{%
\theta }}^{\beta }(X_{i})$, the outlier resistant behavior of the estimator
is intuitively apparent. See  \cite{MR1665873} and  \cite{MR1859416} for
more details.

The functional $T_{\beta }(H)$ is Fisher consistent; it takes the value $%
\boldsymbol{\theta }_{0}$ when the true density $h=f_{\boldsymbol{\theta }%
_{0}}$ is in the model. When it is not, $\boldsymbol{\theta }_{\beta
}^{h}=T_{\beta }(H)$ represents the best fitting parameter. For brevity we
will suppress the $h$ superscript in the notation for $\boldsymbol{\theta
}_{\beta }^{h}$; $f_{\boldsymbol{\theta }_\beta}$ is the model element closest
to the density $h$ in the density power divergence sense corresponding to tuning parameter $\beta$.

Let $h$ be the true data generating density, and ${\boldsymbol{\theta }}_\beta 
=T_{\beta }(H)$ be the best fitting parameter. To set up the notation we
define the quantities
\begin{align}
\boldsymbol{J}_{\beta }(\boldsymbol{\theta })& =\int \boldsymbol{u}_{{%
\boldsymbol{\theta }}}(x)\boldsymbol{u}_{{\boldsymbol{\theta }}}^{T}(x)f_{{%
\boldsymbol{\theta }}}^{1+\beta }(x)dx+\int \{\boldsymbol{I}_{\boldsymbol{%
\theta }}(x)-\beta \boldsymbol{u}_{\boldsymbol{\theta }}(x)\boldsymbol{u}_{\boldsymbol{%
\theta }}^{T}(x)\}\{h(x)-f_{\boldsymbol{\theta }}(x)\}f_{{\boldsymbol{\theta
}}}^{\beta }(x)dx,  \label{2.3} \\
\boldsymbol{K}_{\beta }(\boldsymbol{\theta })& =\int \boldsymbol{u}_{{%
\boldsymbol{\theta }}}(x)\boldsymbol{u}_{{\boldsymbol{\theta }}}^{T}(x)f_{{%
\boldsymbol{\theta }}}^{2\beta }(x)h(x)dx-\boldsymbol{\xi }_{\beta }({{%
\boldsymbol{\theta }}})\boldsymbol{\xi }_{\beta }^{T}({{\boldsymbol{\theta }}%
}),  \label{2.4}
\end{align}%
where $\boldsymbol{\xi }_{\beta }({{\boldsymbol{\theta }}})=\int \boldsymbol{%
u}_{\mathbf{\theta }}(x)f_{\boldsymbol{\theta }}^{\beta }(x)h(x)dx$, and $%
\boldsymbol{I}_{\boldsymbol{\theta }}(x)=-\frac{\partial }{\partial
\boldsymbol{\theta }}\boldsymbol{u}_{\boldsymbol{\theta }}(x)$ is the so
called information function of the model.

The following results, proved in  \cite{MR2830561}, form the basis of our
subsequent developments.

\begin{theorem}
\label{Theorem1} We assume that the Basu et al. conditions are true. Then

\begin{enumerate}
\item[a)] The minimum density power divergence estimating equation (\ref{2.2}%
) has a consistent sequence of roots $\widehat{\boldsymbol{\theta }}_{n,\beta }$  (denoted, hereafter, as $\widehat{\boldsymbol{\theta }}_{\beta
}$), i.e. $\widehat{\boldsymbol{\theta }}_{\beta
}\underset{n\rightarrow \infty }{\overset{\mathcal{P}}%
{\longrightarrow }}\boldsymbol{\theta }_\beta$.

\item[b)] $n^{1/2}(\widehat{\boldsymbol{\theta }}_{\beta }-{\boldsymbol{\theta }}%
_\beta)$ has an asymptotic multivariate normal distribution with (vector) mean
zero and covariance matrix $\boldsymbol{J}^{-1}\boldsymbol{K}\boldsymbol{J}%
^{-1}$, where $\boldsymbol{J}=\boldsymbol{J}_{\beta }({{\boldsymbol{\theta }}%
}_\beta)$, $\boldsymbol{K}=\boldsymbol{K}_{\beta }({{\boldsymbol{\theta }}}%
_\beta)$ are as in (\ref{2.3}) and (%
\ref{2.4}) respectively.
\end{enumerate}

\end{theorem}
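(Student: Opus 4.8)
The plan is to treat the MDPDE as an M-estimator (equivalently, a Z-estimator) defined through the estimating equation (\ref{2.2}) and to run the classical Cram\'er-type consistency argument, followed by a Taylor expansion of the estimating equation to obtain asymptotic normality. Write $\boldsymbol{\psi}_{\boldsymbol{\theta}}(x)=\boldsymbol{u}_{\boldsymbol{\theta}}(x)f_{\boldsymbol{\theta}}^{\beta}(x)-\int \boldsymbol{u}_{\boldsymbol{\theta}}(y)f_{\boldsymbol{\theta}}^{1+\beta}(y)\,dy$, so that (\ref{2.2}) reads $\boldsymbol{\Psi}_{n}(\boldsymbol{\theta}):=\frac{1}{n}\sum_{i=1}^{n}\boldsymbol{\psi}_{\boldsymbol{\theta}}(X_{i})=\boldsymbol{0}_{p}$, and let $L(\boldsymbol{\theta})=E_{h}[V_{\boldsymbol{\theta}}(X)]$ be the population version of the objective (\ref{B2}). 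Since $L(\boldsymbol{\theta})$ equals $d_{\beta}(h,f_{\boldsymbol{\theta}})$ up to an additive constant free of $\boldsymbol{\theta}$, it is minimized at the best fitting parameter $\boldsymbol{\theta}_{\beta}=T_{\beta}(H)$, which pins down the population target of the estimator.

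For part (a), I would follow Assumptions A--D of Lehmann. Fix a small $a>0$ and expand the sample objective $\frac{1}{n}\sum_{i}V_{\boldsymbol{\theta}}(X_{i})$ to third order around $\boldsymbol{\theta}_{\beta}$ on the sphere $\|\boldsymbol{\theta}-\boldsymbol{\theta}_{\beta}\|=a$. The linear term has mean zero, because the estimating equation is unbiased at $\boldsymbol{\theta}_{\beta}$, and is $o_{\mathcal{P}}(1)$ by the weak law of large numbers; the quadratic term converges in probability to a positive definite form governed by $\boldsymbol{J}$; and the Basu et al. conditions supply the dominating functions needed to make the third-order remainder uniformly negligible. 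Consequently, with probability tending to one the sample objective on the sphere exceeds its value at $\boldsymbol{\theta}_{\beta}$, so a local minimum---hence a root of (\ref{2.2})---lies in the interior of the ball. Letting $a\to 0$ produces the consistent sequence of roots $\widehat{\boldsymbol{\theta}}_{\beta}$.

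For part (b), expand $\boldsymbol{0}_{p}=\boldsymbol{\Psi}_{n}(\widehat{\boldsymbol{\theta}}_{\beta})$ about $\boldsymbol{\theta}_{\beta}$ to obtain
\[
\sqrt{n}(\widehat{\boldsymbol{\theta}}_{\beta}-\boldsymbol{\theta}_{\beta})=-\left[\frac{1}{n}\sum_{i}\frac{\partial \boldsymbol{\psi}_{\boldsymbol{\theta}}(X_{i})}{\partial \boldsymbol{\theta}^{T}}\Big|_{\boldsymbol{\theta}^{\ast}}\right]^{-1}\sqrt{n}\,\boldsymbol{\Psi}_{n}(\boldsymbol{\theta}_{\beta}),
\]
where $\boldsymbol{\theta}^{\ast}$ lies between $\widehat{\boldsymbol{\theta}}_{\beta}$ and $\boldsymbol{\theta}_{\beta}$. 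Because $\boldsymbol{\psi}_{\boldsymbol{\theta}_{\beta}}(X)$ has mean zero under $h$, a direct variance computation identifies $\mathrm{Var}_{h}[\boldsymbol{\psi}_{\boldsymbol{\theta}_{\beta}}(X)]=\int \boldsymbol{u}_{\boldsymbol{\theta}_{\beta}}\boldsymbol{u}_{\boldsymbol{\theta}_{\beta}}^{T}f_{\boldsymbol{\theta}_{\beta}}^{2\beta}h\,dx-\boldsymbol{\xi}_{\beta}\boldsymbol{\xi}_{\beta}^{T}=\boldsymbol{K}$, so the multivariate central limit theorem gives $\sqrt{n}\,\boldsymbol{\Psi}_{n}(\boldsymbol{\theta}_{\beta})\overset{\mathcal{L}}{\longrightarrow}N(\boldsymbol{0}_{p},\boldsymbol{K})$. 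Differentiating $\boldsymbol{\psi}_{\boldsymbol{\theta}}$ using $\partial \boldsymbol{u}_{\boldsymbol{\theta}}/\partial\boldsymbol{\theta}^{T}=-\boldsymbol{I}_{\boldsymbol{\theta}}$ and $\partial f_{\boldsymbol{\theta}}^{\beta}/\partial\boldsymbol{\theta}=\beta f_{\boldsymbol{\theta}}^{\beta}\boldsymbol{u}_{\boldsymbol{\theta}}$, and taking expectation under $h$, one checks that $-E_{h}[\partial \boldsymbol{\psi}_{\boldsymbol{\theta}_{\beta}}/\partial\boldsymbol{\theta}^{T}]=\boldsymbol{J}$, matching (\ref{2.3}). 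Consistency from part (a) plus a uniform weak law on a neighborhood of $\boldsymbol{\theta}_{\beta}$ then yields $\frac{1}{n}\sum_{i}\partial\boldsymbol{\psi}_{\boldsymbol{\theta}^{\ast}}/\partial\boldsymbol{\theta}^{T}\overset{\mathcal{P}}{\longrightarrow}-\boldsymbol{J}$, and Slutsky's theorem gives $\sqrt{n}(\widehat{\boldsymbol{\theta}}_{\beta}-\boldsymbol{\theta}_{\beta})\overset{\mathcal{L}}{\longrightarrow}N(\boldsymbol{0}_{p},\boldsymbol{J}^{-1}\boldsymbol{K}\boldsymbol{J}^{-1})$, using the symmetry of $\boldsymbol{J}$.

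The main obstacle I anticipate is not the algebra---identifying $\boldsymbol{J}$ and $\boldsymbol{K}$ and applying the central limit theorem is routine---but the analytic bookkeeping: justifying the interchange of differentiation and integration in $\int \boldsymbol{u}_{\boldsymbol{\theta}}f_{\boldsymbol{\theta}}^{1+\beta}\,dx$, securing integrable dominating functions for the first three derivatives of $V_{\boldsymbol{\theta}}$ uniformly over a neighborhood of $\boldsymbol{\theta}_{\beta}$, and guaranteeing that $\boldsymbol{J}$ is nonsingular. These are precisely the roles of the Basu et al. conditions D1--D5, so the crux of a careful write-up is verifying that those conditions deliver the uniform control required in both the Cram\'er expansion of part (a) and the convergence of the empirical Hessian in part (b).
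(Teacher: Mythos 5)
Your proposal is correct and follows essentially the same route as the source: the paper does not prove Theorem \ref{Theorem1} itself but cites \cite{MR2830561}, where the argument is precisely the Cram\'er--Lehmann expansion of the objective on a small sphere for consistency, followed by a Taylor expansion of the estimating equation with the CLT and a uniform law of large numbers for asymptotic normality, with $\boldsymbol{J}$ and $\boldsymbol{K}$ identified exactly as you compute them from $-E_h[\partial\boldsymbol{\psi}_{\boldsymbol{\theta}_\beta}/\partial\boldsymbol{\theta}^T]$ and $\mathrm{Var}_h[\boldsymbol{\psi}_{\boldsymbol{\theta}_\beta}(X)]$. Your closing remarks correctly locate the technical burden in the Basu et al.\ conditions (dominating functions for the third derivatives, differentiation under the integral sign, positive definiteness of $\boldsymbol{J}$), which is where the cited proof spends its effort as well.
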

The above result is similar, in content and spirit, to those of \cite{MR640163}.
When the true distribution $H$ belongs to the model so that $H=F_{%
\boldsymbol{\theta }}$ for some $\boldsymbol{\theta }\in \Theta $, the
formula for $\boldsymbol{J}$, $\boldsymbol{K}$ and $\boldsymbol{\xi }$ simplify to
\begin{align}
\boldsymbol{J}& =\boldsymbol{J}_{\beta }(\boldsymbol{\theta })=\int
\boldsymbol{u}_{\boldsymbol{\theta }}(x)\boldsymbol{u}_{\boldsymbol{\theta }%
}^{T}(x)f_{\boldsymbol{\theta }}^{1+\beta }(x)dx,  \label{model_variance} \\
\boldsymbol{K}& =\boldsymbol{K}_{\beta }(\boldsymbol{\theta })=\int
\boldsymbol{u}_{\boldsymbol{\theta }}(x)\boldsymbol{u}_{\boldsymbol{\theta }%
}^{T}(x)f_{\boldsymbol{\theta }}^{1+2\beta }(x)dx-\boldsymbol{\xi }\boldsymbol{\xi }^{T},  \label{m_v} \\
\boldsymbol{\xi }& =\boldsymbol{\xi }_{\beta }(\boldsymbol{\theta })=\int
\boldsymbol{u}_{\boldsymbol{\theta }}(x)f_{\boldsymbol{\theta }}^{1+\beta
}(x)dx.  \notag
\end{align}%

The restricted minimum density power divergence functional $T_{\beta
}^{0}(H) $ at $H$, on the other hand, is the value in the parameter space
which satisfies 
\begin{equation*}
d_{\beta }(h,f_{T_{\beta }^{0}(H)})=\min_{{{\boldsymbol{\theta }}}\in \Theta
_{0}}d_{\beta }(h,f_{{\boldsymbol{\theta }}}),
\end{equation*}%
provided such a minimizer exists. When a random sample $X_{1},\ldots ,X_{n}$ is
available from the distribution $H$, the restricted minimum density power
divergence estimator of $\boldsymbol{\theta }$ minimizes (\ref{B2}) subject
to $\boldsymbol{g}(\boldsymbol{\theta )=0}_{r}$. Under this set up we will
determine, in the next theorem, the asymptotic distribution of the
restricted minimum density power divergence estimator (RMDPDE) $\widetilde{%
\boldsymbol{\theta }}_{\beta }$ of $\boldsymbol{\theta}$. 


\begin{theorem} \label{theorem:dpd}
Assume that the Lehmann and Basu et al. conditions hold. 
Suppose that the true distribution belongs to the model, and $\boldsymbol{\theta}_0 \in \boldsymbol{\Theta}_0$
is the true parameter. Then
the minimum density power divergence estimator $\widetilde{\boldsymbol{%
\theta }}_{\beta }$ of $\boldsymbol{\theta }$ obtained under the constraints
$\boldsymbol{g}(\boldsymbol{\theta })=\boldsymbol{0}_{r}$ of the null
hypothesis has the distribution
\begin{equation*}
n^{1/2}(\widetilde{\boldsymbol{\theta }}_{\beta }-\boldsymbol{\theta }_{0})%
\underset{n\rightarrow \infty }{\overset{\mathcal{L}}{\longrightarrow }}%
\mathcal{N}(\boldsymbol{0}_{p},\boldsymbol{\Sigma }_{\beta }(\boldsymbol{%
\theta }_{0})\boldsymbol{)}
\end{equation*}%
where
\begin{equation*}
\boldsymbol{\Sigma }_{\beta }(\boldsymbol{\theta }_{0})=\boldsymbol{P}%
_{\beta }(\boldsymbol{\theta }_{0})\boldsymbol{K}_{\beta }(\boldsymbol{%
\theta }_0)\boldsymbol{P}_{\beta }(\boldsymbol{\theta }_{0}),
\end{equation*}%
\begin{equation}
\boldsymbol{P=P}_{\beta }(\boldsymbol{\theta }_{0})=\boldsymbol{J}_{\beta
}^{-1}(\boldsymbol{\theta }_{0})-\boldsymbol{Q}_{\beta }(\boldsymbol{\theta }%
_{0})\boldsymbol{G}^{T}(\boldsymbol{\theta }_{0}) \boldsymbol{J}_{\beta }^{-1}(\boldsymbol{\theta }_{0}),  \label{P}
\end{equation}%
%
%
%
\begin{equation}
\boldsymbol{Q=Q}_{\beta }(\boldsymbol{\theta }_{0})=\boldsymbol{J}_{\beta
}^{-1}(\boldsymbol{\theta }_{0})\boldsymbol{G}(\boldsymbol{\theta }_{0})%
\left[ \boldsymbol{G}^{T}(\boldsymbol{\theta }_{0})\boldsymbol{J}_{\beta
}^{-1}(\boldsymbol{\theta }_{0})\boldsymbol{G}(\boldsymbol{\theta }_{0})%
\right] ^{-1}. \label{Q}
\end{equation}
and $\boldsymbol{J}_{\beta}(\boldsymbol{\theta }_{0})$ is as defined in (\ref{model_variance}),
evaluated at $\boldsymbol{\theta} = \boldsymbol{\theta}_0$.
\end{theorem}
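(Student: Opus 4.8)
The plan is to realize $\widetilde{\boldsymbol{\theta}}_{\beta}$ as the solution of the constrained minimization of (\ref{B2}) subject to $\boldsymbol{g}(\boldsymbol{\theta})=\boldsymbol{0}_r$ and to linearize the Lagrangian stationarity conditions around $\boldsymbol{\theta}_0$. Writing the estimating function as
\begin{equation*}
\boldsymbol{\Psi}_n(\boldsymbol{\theta})=\frac{1}{n}\sum_{i=1}^{n}\boldsymbol{u}_{\boldsymbol{\theta}}(X_i)f_{\boldsymbol{\theta}}^{\beta}(X_i)-\int \boldsymbol{u}_{\boldsymbol{\theta}}(x)f_{\boldsymbol{\theta}}^{1+\beta}(x)\,dx,
\end{equation*}
the unconstrained equation (\ref{2.2}) reads $\boldsymbol{\Psi}_n(\boldsymbol{\theta})=\boldsymbol{0}_p$. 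Since the gradient of the objective in (\ref{B2}) is a negative multiple of $\boldsymbol{\Psi}_n$, introducing a Lagrange multiplier $\boldsymbol{\lambda}_n\in\mathbb{R}^r$ (into which the constant factor is absorbed) gives the first-order conditions $\boldsymbol{\Psi}_n(\widetilde{\boldsymbol{\theta}}_{\beta})=\boldsymbol{G}(\widetilde{\boldsymbol{\theta}}_{\beta})\boldsymbol{\lambda}_n$ and $\boldsymbol{g}(\widetilde{\boldsymbol{\theta}}_{\beta})=\boldsymbol{0}_r$. Two facts underlying the proof of Theorem~\ref{Theorem1} will be used: at $\boldsymbol{\theta}_0$ the summands of $\boldsymbol{\Psi}_n(\boldsymbol{\theta}_0)$ are i.i.d.\ with mean $\boldsymbol{0}_p$ and covariance $\boldsymbol{K}_\beta(\boldsymbol{\theta}_0)$, so that $n^{1/2}\boldsymbol{\Psi}_n(\boldsymbol{\theta}_0)\overset{\mathcal{L}}{\to}\mathcal{N}(\boldsymbol{0}_p,\boldsymbol{K}_\beta(\boldsymbol{\theta}_0))$ by the central limit theorem; and $\partial\boldsymbol{\Psi}_n(\boldsymbol{\theta})/\partial\boldsymbol{\theta}^T\overset{\mathcal{P}}{\to}-\boldsymbol{J}_\beta(\boldsymbol{\theta}_0)$ uniformly in a neighbourhood of $\boldsymbol{\theta}_0$.

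Next I would establish consistency of $\widetilde{\boldsymbol{\theta}}_\beta$ together with $\boldsymbol{\lambda}_n=O_p(n^{-1/2})$, and then Taylor expand both conditions. Writing $\boldsymbol{v}_n=n^{1/2}(\widetilde{\boldsymbol{\theta}}_\beta-\boldsymbol{\theta}_0)$, $\boldsymbol{W}_n=n^{1/2}\boldsymbol{\Psi}_n(\boldsymbol{\theta}_0)$ and $\boldsymbol{m}_n=n^{1/2}\boldsymbol{\lambda}_n$, a first-order expansion of $\boldsymbol{\Psi}_n$ about $\boldsymbol{\theta}_0$ together with $\boldsymbol{G}(\widetilde{\boldsymbol{\theta}}_\beta)\overset{\mathcal{P}}{\to}\boldsymbol{G}(\boldsymbol{\theta}_0)$ yields $\boldsymbol{W}_n-\boldsymbol{J}_\beta(\boldsymbol{\theta}_0)\boldsymbol{v}_n=\boldsymbol{G}(\boldsymbol{\theta}_0)\boldsymbol{m}_n+o_p(1)$, while expanding the constraint and using $\boldsymbol{g}(\boldsymbol{\theta}_0)=\boldsymbol{0}_r$ (because $\boldsymbol{\theta}_0\in\boldsymbol{\Theta}_0$) gives $\boldsymbol{G}^T(\boldsymbol{\theta}_0)\boldsymbol{v}_n=o_p(1)$. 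Solving the first relation for $\boldsymbol{v}_n=\boldsymbol{J}_\beta^{-1}\boldsymbol{W}_n-\boldsymbol{J}_\beta^{-1}\boldsymbol{G}\boldsymbol{m}_n$, substituting into the constraint to obtain $\boldsymbol{m}_n=[\boldsymbol{G}^T\boldsymbol{J}_\beta^{-1}\boldsymbol{G}]^{-1}\boldsymbol{G}^T\boldsymbol{J}_\beta^{-1}\boldsymbol{W}_n+o_p(1)$ (all matrices at $\boldsymbol{\theta}_0$), and back-substituting produces
\begin{equation*}
\boldsymbol{v}_n=\bigl[\boldsymbol{J}_\beta^{-1}-\boldsymbol{J}_\beta^{-1}\boldsymbol{G}(\boldsymbol{G}^T\boldsymbol{J}_\beta^{-1}\boldsymbol{G})^{-1}\boldsymbol{G}^T\boldsymbol{J}_\beta^{-1}\bigr]\boldsymbol{W}_n+o_p(1)=\boldsymbol{P}\boldsymbol{W}_n+o_p(1),
\end{equation*}
which is exactly the matrix $\boldsymbol{P}=\boldsymbol{P}_\beta(\boldsymbol{\theta}_0)$ of (\ref{P})--(\ref{Q}). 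Since $\boldsymbol{W}_n\overset{\mathcal{L}}{\to}\mathcal{N}(\boldsymbol{0}_p,\boldsymbol{K}_\beta(\boldsymbol{\theta}_0))$ and $\boldsymbol{P}$ is deterministic, Slutsky's theorem gives $\boldsymbol{v}_n\overset{\mathcal{L}}{\to}\mathcal{N}(\boldsymbol{0}_p,\boldsymbol{P}\boldsymbol{K}_\beta(\boldsymbol{\theta}_0)\boldsymbol{P}^T)$; because $\boldsymbol{J}_\beta(\boldsymbol{\theta}_0)$ is symmetric so is $\boldsymbol{P}$, whence $\boldsymbol{P}\boldsymbol{K}_\beta(\boldsymbol{\theta}_0)\boldsymbol{P}^T=\boldsymbol{\Sigma}_\beta(\boldsymbol{\theta}_0)$, as claimed.

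The linear algebra above is routine; the main obstacle is the rigorous justification of the two probabilistic ingredients on which the linearization rests. First, one must establish the existence and consistency of the constrained root $\widetilde{\boldsymbol{\theta}}_\beta$ and the attendant order $\boldsymbol{\lambda}_n=O_p(n^{-1/2})$, which I would obtain by a Brouwer fixed-point argument applied to the Lagrangian system in the spirit of the proof of Theorem~\ref{Theorem1}, using the Lehmann and Basu et al.\ conditions to guarantee that $\boldsymbol{J}_\beta(\boldsymbol{\theta}_0)$ and, thanks to $\mathrm{rank}(\boldsymbol{G}(\boldsymbol{\theta}_0))=r$, also $\boldsymbol{G}^T\boldsymbol{J}_\beta^{-1}\boldsymbol{G}$ are nonsingular. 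Second, one must control the second-order remainders in both Taylor expansions uniformly over a shrinking neighbourhood of $\boldsymbol{\theta}_0$ so that they are genuinely $o_p(1)$ after multiplication by $n^{1/2}$; this is precisely where the differentiability and domination hypotheses of the Lehmann conditions (integrable bounds on the third derivatives of the relevant integrands) are indispensable.
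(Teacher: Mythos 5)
Your proposal is correct and follows essentially the same route as the paper's own proof: linearize the Lagrangian first-order conditions of the constrained minimization about $\boldsymbol{\theta}_{0}$, use the asymptotic normality of the score-type quantity with covariance $\boldsymbol{K}_{\beta }(\boldsymbol{\theta }_{0})$ and the convergence of the Hessian to $\boldsymbol{J}_{\beta }(\boldsymbol{\theta }_{0})$, and solve the resulting linear system in $\bigl(n^{1/2}(\widetilde{\boldsymbol{\theta }}_{\beta }-\boldsymbol{\theta }_{0}),\,n^{-1/2}\boldsymbol{\lambda }_{n}\bigr)$ to obtain the matrix $\boldsymbol{P}$. The only cosmetic difference is that you solve the system by elimination and back-substitution whereas the paper inverts the block matrix $\left(\begin{smallmatrix}\boldsymbol{J} & \boldsymbol{G}\\ \boldsymbol{G}^{T} & \boldsymbol{0}\end{smallmatrix}\right)$ directly; these are the same computation.
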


\begin{proof}
See the Appendix.
\hspace*{\fill}\bigskip
\end{proof}



\section{Testing Parametric Composite Hypotheses using Density Power
Divergence} \label{sec:test}

Suppose $\widehat{\boldsymbol{\theta }}_{\beta }$ is the unconstrained estimator of $\boldsymbol{\theta }$, whereas $\widetilde{\boldsymbol{\theta }}_{\beta }$ is the RMDPDE under the null hypothesis  given in (\ref{1}).  
In this section we will present the family of the density power divergence test statistics (DPDTS) for testing the composite null
hypothesis in (\ref{1}). This family of test statistics has the
expression
\begin{equation}
T_{\boldsymbol{\gamma }}(\widehat{\boldsymbol{\theta }}_{\beta },\widetilde{%
\boldsymbol{\theta }}_{\beta })=2nd_{\gamma }(f_{\widehat{\boldsymbol{\theta
}}_{\beta }},f_{\widetilde{\boldsymbol{\theta }}_{\beta }}) , \label{TRES.1}
\end{equation}%
%
%
%
%
%
%
%
%
%
where $d_{\gamma }(f_{\widehat{\boldsymbol{\theta }}_{\beta }},f_{\widetilde{%
 \boldsymbol{\theta }}_{\boldsymbol{\beta }}})$ is given in (\ref{Uno.1}). 
%
%
%
%
%
%
%
%
%
In the following theorem we present the asymptotic distribution of the
family of DPDTS defined in (\ref{TRES.1}).

\begin{theorem}
\label{Theorem3} Assume that the Lehmann and Basu et al. conditions hold. The asymptotic distribution of $T_{\boldsymbol{\gamma }}(%
\widehat{\boldsymbol{\theta }}_{\beta },\widetilde{\boldsymbol{\theta }}%
_{\beta })$ defined in (\ref{TRES.1}) coincides, under the null hypothesis $%
H_{0}$ given in (\ref{1}), with the distribution of the random variable
\begin{equation*}
{\textstyle\sum\limits_{i=1}^{k}}\lambda _{i}^{\beta ,\gamma }(\boldsymbol{%
\theta }_{0})Z_{i}^{2},
\end{equation*}%
%
%
%
where $Z_{1},\ldots ,Z_{k}$ are independent standard normal variables, $%
\lambda _{1}^{\beta ,\gamma }(\boldsymbol{\theta }_{0}),\ldots ,\lambda _{k}^{\beta ,\gamma }(\boldsymbol{\theta }_{0})$ are the
nonzero eigenvalues of
$\boldsymbol{A}_{\gamma }\left( \boldsymbol{\theta }_{0}\right) \boldsymbol{B%
}_{\beta }\left( \boldsymbol{\theta }_{0}\right) \boldsymbol{K}_{\beta }(%
\boldsymbol{\theta }_{0})\boldsymbol{B}_{\beta }\left( \boldsymbol{\theta }%
_{0}\right) $
and 
\begin{equation}
k = rank\left(\boldsymbol{B}_{\beta }\left(\boldsymbol{\theta}_{0}\right)
\boldsymbol{K}_{\beta }  (\boldsymbol{\theta }_{0})
\boldsymbol{B}_{\beta }\left(\boldsymbol{\theta}_{0}\right)  
\boldsymbol{A}_{\gamma}\left(\boldsymbol{\theta}_{0}\right)
\boldsymbol{B}_{\beta }\left(\boldsymbol{\theta}_{0}\right)
\boldsymbol{K}_{\beta } (\boldsymbol{\theta }_{0}) \boldsymbol{B}_{\beta }\left(\boldsymbol{\theta}_{0}\right)
\right)  . \label{k}
\end{equation}%
%
%
%
The matrices $\boldsymbol{A}_{\gamma }\left( \boldsymbol{\theta }_{0}\right)
$ and $\boldsymbol{B}_{\beta }\left( \boldsymbol{\theta }_{0}\right) $ are
defined by%
\begin{equation}
\boldsymbol{A}_{\gamma }\left( \boldsymbol{\theta }_{0}\right) =\left(
a_{ij}^{\gamma }\left( \boldsymbol{\theta }_{0}\right) \right)
_{i,j=1,...,p}=\left( 1+\gamma \right) \left( \int\nolimits_{\mathcal{X}}f_{%
\boldsymbol{\theta }_{0}}^{\gamma -1}\left( x\right) \frac{\partial f_{%
\boldsymbol{\theta }_{0}}\left( x\right) }{\partial \theta _{j}}\frac{%
\partial f_{\boldsymbol{\theta }_{0}}\left( x\right) }{\partial \theta _{i}}%
dx\right) _{i,j=1,...,p} , \label{AA}
\end{equation}%
%
%
%
%
and
\begin{equation}
\boldsymbol{B}_{\beta }\left( \boldsymbol{\theta }_{0}\right) =\boldsymbol{J}%
_{\beta }^{-1}(\boldsymbol{\theta }_{0})\boldsymbol{G}(\boldsymbol{\theta }%
_{0})\left[ \boldsymbol{G}^{T}(\boldsymbol{\theta }_{0})\boldsymbol{J}%
_{\beta }^{-1}(\boldsymbol{\theta }_{0})\boldsymbol{G}(\boldsymbol{\theta }%
_{0})\right] ^{-1}\boldsymbol{G}^{T}(\boldsymbol{\theta }_{0}) \boldsymbol{J}%
_{\beta }^{-1}(\boldsymbol{\theta }_{0}).  \label{B}
\end{equation}%
In the above $\boldsymbol{\theta }_{0} \in \Theta_0$ represents the true unknown value of $\boldsymbol{\theta }$.
%
%
%
\end{theorem}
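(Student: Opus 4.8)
The plan is to reduce the test statistic to a quadratic form in the asymptotically normal vector $\sqrt{n}(\widehat{\boldsymbol{\theta}}_{\beta}-\widetilde{\boldsymbol{\theta}}_{\beta})$ and then invoke the classical theory of quadratic forms in normal variates. First I would regard $(\boldsymbol{\theta}_{1},\boldsymbol{\theta}_{2})\mapsto d_{\gamma}(f_{\boldsymbol{\theta}_{1}},f_{\boldsymbol{\theta}_{2}})$ as a smooth function near the diagonal. Since $d_{\gamma}(f_{\boldsymbol{\theta}},f_{\boldsymbol{\theta}})=0$ identically and each model density is the global minimizer of its own divergence, both the value and all first-order partial derivatives of $d_{\gamma}(f_{\boldsymbol{\theta}_{1}},f_{\boldsymbol{\theta}_{2}})$ vanish on the diagonal $\boldsymbol{\theta}_{1}=\boldsymbol{\theta}_{2}$. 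A second-order Taylor expansion about $(\boldsymbol{\theta}_{0},\boldsymbol{\theta}_{0})$ therefore retains only the quadratic term. Differentiating under the integral sign (legitimised by the Lehmann and Basu et al. conditions), I would verify that the $\int f^{\gamma}\nabla^{2}f$ contributions cancel and that the block Hessian equals $\boldsymbol{A}_{\gamma}(\boldsymbol{\theta}_{0})$ of (\ref{AA}) in each block (up to sign), so that the cross terms combine into a single quadratic form in the difference. This yields the key reduction
\[
T_{\boldsymbol{\gamma}}(\widehat{\boldsymbol{\theta}}_{\beta},\widetilde{\boldsymbol{\theta}}_{\beta})
= n(\widehat{\boldsymbol{\theta}}_{\beta}-\widetilde{\boldsymbol{\theta}}_{\beta})^{T}
\boldsymbol{A}_{\gamma}(\boldsymbol{\theta}_{0})
(\widehat{\boldsymbol{\theta}}_{\beta}-\widetilde{\boldsymbol{\theta}}_{\beta})+o_{P}(1),
\]
valid under $H_{0}$ because both estimators are consistent for $\boldsymbol{\theta}_{0}$ and the remainder is controlled by the $\sqrt{n}$-boundedness of the two estimators.

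Next I would obtain the joint behaviour of the two estimators by expressing each as a linear image of one common empirical average. Writing $\overline{\boldsymbol{W}}_{n}$ for the density power divergence score $\tfrac{1}{n}\sum_{i}\boldsymbol{u}_{\boldsymbol{\theta}_{0}}(X_{i})f_{\boldsymbol{\theta}_{0}}^{\beta}(X_{i})-\int\boldsymbol{u}_{\boldsymbol{\theta}_{0}}f_{\boldsymbol{\theta}_{0}}^{1+\beta}$ evaluated at the true value, the estimating equation (\ref{2.2}) linearises to $\sqrt{n}(\widehat{\boldsymbol{\theta}}_{\beta}-\boldsymbol{\theta}_{0})=\boldsymbol{J}_{\beta}^{-1}(\boldsymbol{\theta}_{0})\sqrt{n}\,\overline{\boldsymbol{W}}_{n}+o_{P}(1)$, which is the expansion underlying Theorem \ref{Theorem1}(b), with $\sqrt{n}\,\overline{\boldsymbol{W}}_{n}\overset{\mathcal{L}}{\longrightarrow}\mathcal{N}(\boldsymbol{0}_{p},\boldsymbol{K}_{\beta}(\boldsymbol{\theta}_{0}))$. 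The proof of Theorem \ref{theorem:dpd}, obtained from the Lagrangian first-order conditions of the constrained minimisation, supplies the companion expansion $\sqrt{n}(\widetilde{\boldsymbol{\theta}}_{\beta}-\boldsymbol{\theta}_{0})=\boldsymbol{P}_{\beta}(\boldsymbol{\theta}_{0})\sqrt{n}\,\overline{\boldsymbol{W}}_{n}+o_{P}(1)$ driven by the \emph{same} $\overline{\boldsymbol{W}}_{n}$, with $\boldsymbol{P}_{\beta}$ as in (\ref{P}).

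The crucial point is that both expansions reference the identical random vector $\overline{\boldsymbol{W}}_{n}$, so subtracting them gives $\sqrt{n}(\widehat{\boldsymbol{\theta}}_{\beta}-\widetilde{\boldsymbol{\theta}}_{\beta})=(\boldsymbol{J}_{\beta}^{-1}(\boldsymbol{\theta}_{0})-\boldsymbol{P}_{\beta}(\boldsymbol{\theta}_{0}))\sqrt{n}\,\overline{\boldsymbol{W}}_{n}+o_{P}(1)$. A short matrix computation using (\ref{P}) and (\ref{Q}) shows $\boldsymbol{J}_{\beta}^{-1}-\boldsymbol{P}_{\beta}=\boldsymbol{Q}_{\beta}\boldsymbol{G}^{T}\boldsymbol{J}_{\beta}^{-1}=\boldsymbol{J}_{\beta}^{-1}\boldsymbol{G}[\boldsymbol{G}^{T}\boldsymbol{J}_{\beta}^{-1}\boldsymbol{G}]^{-1}\boldsymbol{G}^{T}\boldsymbol{J}_{\beta}^{-1}=\boldsymbol{B}_{\beta}(\boldsymbol{\theta}_{0})$, precisely the matrix in (\ref{B}). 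Since $\boldsymbol{B}_{\beta}$ is symmetric, I conclude $\sqrt{n}(\widehat{\boldsymbol{\theta}}_{\beta}-\widetilde{\boldsymbol{\theta}}_{\beta})\overset{\mathcal{L}}{\longrightarrow}\mathcal{N}(\boldsymbol{0}_{p},\boldsymbol{B}_{\beta}\boldsymbol{K}_{\beta}\boldsymbol{B}_{\beta})$.

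Finally, combining this with the first-paragraph reduction through Slutsky's theorem, $T_{\boldsymbol{\gamma}}$ converges in law to $\boldsymbol{Z}^{T}\boldsymbol{A}_{\gamma}(\boldsymbol{\theta}_{0})\boldsymbol{Z}$ with $\boldsymbol{Z}\sim\mathcal{N}(\boldsymbol{0}_{p},\boldsymbol{\Sigma})$, $\boldsymbol{\Sigma}=\boldsymbol{B}_{\beta}\boldsymbol{K}_{\beta}\boldsymbol{B}_{\beta}$. Writing $\boldsymbol{Z}=\boldsymbol{\Sigma}^{1/2}\boldsymbol{Y}$ with $\boldsymbol{Y}\sim\mathcal{N}(\boldsymbol{0}_{p},\boldsymbol{I})$ and diagonalising the symmetric positive semidefinite matrix $\boldsymbol{\Sigma}^{1/2}\boldsymbol{A}_{\gamma}\boldsymbol{\Sigma}^{1/2}$ turns the quadratic form into $\sum_{i=1}^{k}\lambda_{i}^{\beta,\gamma}(\boldsymbol{\theta}_{0})Z_{i}^{2}$, where the $\lambda_{i}^{\beta,\gamma}$ are its nonzero eigenvalues; by the cyclic invariance of nonzero eigenvalues these coincide with the nonzero eigenvalues of $\boldsymbol{A}_{\gamma}\boldsymbol{\Sigma}=\boldsymbol{A}_{\gamma}\boldsymbol{B}_{\beta}\boldsymbol{K}_{\beta}\boldsymbol{B}_{\beta}$. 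Because the range of $\boldsymbol{\Sigma}^{1/2}\boldsymbol{A}_{\gamma}\boldsymbol{\Sigma}^{1/2}$ lies inside the range of $\boldsymbol{\Sigma}$, conjugation by $\boldsymbol{\Sigma}^{1/2}$ preserves its rank, so the number $k$ of nonzero eigenvalues equals the rank expression (\ref{k}). I expect the main obstacle to be neither the Taylor reduction nor the eigenvalue bookkeeping, but rather verifying that the restricted estimator's linearization is genuinely driven by the \emph{same} $\overline{\boldsymbol{W}}_{n}$ as the unrestricted one; this coupling, which I would extract carefully from the Lagrangian conditions used in the proof of Theorem \ref{theorem:dpd}, is exactly what collapses the difference of two dependent asymptotically normal vectors into a single linear image, and it must be handled with attention to the $o_{P}(1)$ remainders.
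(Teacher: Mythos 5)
Your proposal is correct and follows essentially the same route as the paper: a second-order Taylor expansion reducing $T_{\gamma}$ to the quadratic form $n(\widehat{\boldsymbol{\theta}}_{\beta}-\widetilde{\boldsymbol{\theta}}_{\beta})^{T}\boldsymbol{A}_{\gamma}(\boldsymbol{\theta}_{0})(\widehat{\boldsymbol{\theta}}_{\beta}-\widetilde{\boldsymbol{\theta}}_{\beta})+o_{P}(1)$, linearization of both estimators through the same score term (the paper's $\tfrac{\partial}{\partial\boldsymbol{\theta}}h_{n}$ is your $-\overline{\boldsymbol{W}}_{n}$) so that the difference becomes $\boldsymbol{B}_{\beta}\sqrt{n}\,\overline{\boldsymbol{W}}_{n}+o_{P}(1)$ with limit $\mathcal{N}(\boldsymbol{0}_{p},\boldsymbol{B}_{\beta}\boldsymbol{K}_{\beta}\boldsymbol{B}_{\beta})$, and then the eigenvalue representation of the quadratic form. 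The only cosmetic differences are that you expand $d_{\gamma}$ jointly about $(\boldsymbol{\theta}_{0},\boldsymbol{\theta}_{0})$ where the paper expands in the first argument about $\widetilde{\boldsymbol{\theta}}_{\beta}$, and you prove the final quadratic-form step by direct diagonalization where the paper cites Corollary 2.1 of Dik and de Gunst.
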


\begin{proof}
See the Appendix.
\hspace*{\fill}\bigskip
\end{proof}


\begin{remark} 
 The main point to note in the above proof is that it is by no means a trivial or simple extension of Theorem 1 of \cite{MR3011625}. The proof of the latter theorem simply requires the results involving the unrestricted MDPD estimator which has been very well studied in the literature. In the present scenario, one has to deal with both the restricted and unrestricted MDPD estimators. The random nature of the second argument of the DPD makes the derivations substantially more complicated and entirely different techniques have to be applied to the proof of Theorem \ref{Theorem3} in this paper. The restricted MDPD estimator which we have employed here only has a limited presence in the literature. In some sense Theorem \ref{theorem:dpd} may also be considered to be a part of Theorem \ref{Theorem3}, but here we have presented them separately for pedagogical reasons as well as to keep a clear focus in our presentations. 
\end{remark}

\begin{remark} 
We observe that the ranks of the matrices
$
\boldsymbol{B}_{\beta }\left(
\boldsymbol{\theta }_{0}\right) \boldsymbol{K}_{\beta }(\boldsymbol{\theta }%
_{0})\boldsymbol{B}_{\beta }\left( \boldsymbol{\theta }_{0}\right) $
and $\boldsymbol{B}_{\beta }\left(\boldsymbol{\theta}_{0}\right)
\boldsymbol{K}_{\beta }  (\boldsymbol{\theta }_{0})
\boldsymbol{B}_{\beta }\left(\boldsymbol{\theta}_{0}\right)  
\boldsymbol{A}_{\gamma}\left(\boldsymbol{\theta}_{0}\right)
\boldsymbol{B}_{\beta }\left(\boldsymbol{\theta}_{0}\right)
\boldsymbol{K}_{\beta } (\boldsymbol{\theta }_{0}) \boldsymbol{B}_{\beta }\left(\boldsymbol{\theta}_{0}\right)$ 
are equal. 
Moreover, it can be easily shown that
$rank\left(\boldsymbol{B}_{\beta }\left(
\boldsymbol{\theta }_{0}\right) \boldsymbol{K}_{\beta }(\boldsymbol{\theta }%
_{0})\boldsymbol{B}_{\beta }\left( \boldsymbol{\theta }_{0}\right) \right)
=rank(\boldsymbol{G}(\boldsymbol{\theta }_{0}))=r$. So $k=r$, i.e. there will be exactly $r$ non-zero eigenvalues.
\end{remark}
%
%

\begin{corollary} 
\label{corollary4} For the special case when we test the null
hypothesis $H_{0}:\mu =\mu _{0}$ against $H_{1}:\mu \neq \mu _{0}$ under the $\mathcal{N}(\mu ,\sigma ^{2})$
model with $\sigma ^{2}$ unknown, the matrix $\boldsymbol{A}_{\gamma }\left(
\boldsymbol{\theta }_{0}\right) \boldsymbol{B}_{\beta }\left( \boldsymbol{%
\theta }_{0}\right) \boldsymbol{K}_{\beta }(\boldsymbol{\theta }_{0})%
\boldsymbol{B}_{\beta }\left( \boldsymbol{\theta }_{0}\right) $ has the form
\begin{equation*}
\boldsymbol{A}_{\gamma }\left( \boldsymbol{\theta }_{0}\right) \boldsymbol{B}%
_{\beta }\left( \boldsymbol{\theta }_{0}\right) \boldsymbol{K}_{\beta }(%
\boldsymbol{\theta }_{0})\boldsymbol{B}_{\beta }\left( \boldsymbol{\theta }%
_{0}\right) =\left(
\begin{array}{cc}
\displaystyle\frac{1}{\sigma ^{\gamma }}\frac{(\beta +1)^{3}}{\sqrt{\gamma +1%
}(2\beta +1)^{3/2}(2\pi )^{\gamma /2}} & 0 \\
0 & 0%
\end{array}%
\right) ,
\end{equation*}%
%
%
%
%
%
%
%
%
%
%
so that the  DPDTS $T_{\boldsymbol{\gamma }}(\widehat{%
\boldsymbol{\theta }}_{\beta },\widetilde{\boldsymbol{\theta }}_{\beta
})=2nd_{\gamma }(f_{\widehat{\boldsymbol{\theta }}_{\beta }},f_{\widetilde{%
\boldsymbol{\theta }}_{\beta }})$ has the same asymptotic distribution as
that of $\lambda _{1}\chi ^{2}(1)$, where $\lambda _{1}$ is the only nonzero
eigenvalue of the above matrix (equal to its $(1,1)$th element). In
particular when $\gamma =0$ and $\beta =0$, this eigenvalue becomes one, so
that the DPDTS
\begin{equation}
T_{\gamma =0}(\hat{\boldsymbol{\theta }}_{\beta =0},\tilde{\boldsymbol{%
\theta }}_{\beta =0}) = 2nd_{\gamma=0}\left( f_{\hat{\boldsymbol{\theta }}_{\beta =0}},f_{\tilde{\boldsymbol{%
\theta }}_{\beta =0}}\right)
\end{equation}
has a simple asymptotic $\chi ^{2}(1)$ distribution. We
will revisit this problem again in Section \ref{SEC:likelihood_ratio}.
\end{corollary}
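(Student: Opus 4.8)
The plan is to reduce the matrix identity to three scalar Gaussian integrals and then read off the limiting law directly from Theorem \ref{Theorem3}. Write $\boldsymbol{\theta}=(\mu,\sigma)^{T}$, so that $p=2$ and the single constraint $g(\boldsymbol{\theta})=\mu-\mu_{0}$ gives $r=1$ and $\boldsymbol{G}(\boldsymbol{\theta}_{0})=(1,0)^{T}$. The likelihood scores are $u_{\mu}(x)=(x-\mu)/\sigma^{2}$, which is odd about $\mu$, and $u_{\sigma}(x)=-1/\sigma+(x-\mu)^{2}/\sigma^{3}$, which is even about $\mu$. The key structural observation is that against any power $f_{\boldsymbol{\theta}_{0}}^{1+t}$, which is symmetric about $\mu$, every mixed moment of $u_{\mu}$ and $u_{\sigma}$ integrates to zero. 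Hence $\boldsymbol{J}_{\beta}(\boldsymbol{\theta}_{0})$, $\boldsymbol{K}_{\beta}(\boldsymbol{\theta}_{0})$ and $\boldsymbol{A}_{\gamma}(\boldsymbol{\theta}_{0})$ are all diagonal $2\times2$ matrices, and the first component of the vector $\boldsymbol{\xi}$ in (\ref{m_v}) vanishes.

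Next I would exploit the special form of $\boldsymbol{G}$. Since $\boldsymbol{J}_{\beta}^{-1}$ is diagonal and $\boldsymbol{G}=(1,0)^{T}$, a direct substitution into (\ref{B}) shows that $\boldsymbol{B}_{\beta}(\boldsymbol{\theta}_{0})$ collapses to the rank-one matrix whose only nonzero entry is the $(1,1)$ entry, equal to $1/J_{11}$, where $J_{11}$ denotes the $(\mu,\mu)$ entry of $\boldsymbol{J}_{\beta}(\boldsymbol{\theta}_{0})$. Consequently $\boldsymbol{B}_{\beta}\boldsymbol{K}_{\beta}\boldsymbol{B}_{\beta}$ again has a single nonzero entry, $K_{11}/J_{11}^{2}$ (the subtracted $\boldsymbol{\xi}\boldsymbol{\xi}^{T}$ term does not contribute because $\xi_{1}=0$), and since $\boldsymbol{A}_{\gamma}$ is diagonal the full product $\boldsymbol{A}_{\gamma}\boldsymbol{B}_{\beta}\boldsymbol{K}_{\beta}\boldsymbol{B}_{\beta}$ inherits exactly the claimed shape, with the single surviving entry $A_{11}K_{11}/J_{11}^{2}$. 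A useful simplification is to observe from (\ref{AA}) that, because $\partial f_{\boldsymbol{\theta}_{0}}/\partial\theta_{i}=f_{\boldsymbol{\theta}_{0}}\,u_{i}$, one has $\boldsymbol{A}_{\gamma}(\boldsymbol{\theta}_{0})=(1+\gamma)\boldsymbol{J}_{\gamma}(\boldsymbol{\theta}_{0})$; that is, $A_{11}$ is just $(1+\gamma)$ times the $(\mu,\mu)$ entry of the $\boldsymbol{J}$-matrix in (\ref{model_variance}) evaluated with tuning parameter $\gamma$ in place of $\beta$. Note that only the $\mu$-coordinate survives, so the chosen parametrization of the scale is immaterial to the final entry.

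It then remains to evaluate the scalars $J_{11}$, $K_{11}$ and $A_{11}$. Here I would use that $f_{\boldsymbol{\theta}_{0}}^{1+t}$ is proportional to a normal density with variance $\sigma^{2}/(1+t)$, the proportionality constant being $[(2\pi)^{t/2}\sigma^{t}(1+t)^{1/2}]^{-1}$, so that $\int(x-\mu)^{2}f_{\boldsymbol{\theta}_{0}}^{1+t}\,dx=\sigma^{2-t}/[(2\pi)^{t/2}(1+t)^{3/2}]$. Applying this with $t=\beta$ yields $J_{11}$, with $t=2\beta$ yields $K_{11}$, and with $t=\gamma$ yields $A_{11}$ through the identity above. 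Substituting into $A_{11}K_{11}/J_{11}^{2}$ and collecting the powers of $2\pi$, $\sigma$, $(1+\beta)$, $(1+2\beta)$ and $(1+\gamma)$ produces precisely the stated $(1,1)$ entry $\sigma^{-\gamma}(\beta+1)^{3}/[\sqrt{\gamma+1}\,(2\beta+1)^{3/2}(2\pi)^{\gamma/2}]$.

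Finally, since the resulting matrix is rank one it has a single nonzero eigenvalue $\lambda_{1}$, equal to its $(1,1)$ entry; Theorem \ref{Theorem3} (with $k=1$) then gives $T_{\boldsymbol{\gamma}}(\widehat{\boldsymbol{\theta}}_{\beta},\widetilde{\boldsymbol{\theta}}_{\beta})\overset{\mathcal{L}}{\longrightarrow}\lambda_{1}Z_{1}^{2}=\lambda_{1}\chi^{2}(1)$, and setting $\gamma=\beta=0$ forces $\lambda_{1}=1$, yielding the $\chi^{2}(1)$ limit. The only genuine labor is the bookkeeping of the Gaussian normalizing constants in the previous step; the conceptual point that the whole product is forced to be rank one is immediate from the diagonal structure together with the fact that $\boldsymbol{G}$ selects only the $\mu$-coordinate, so I anticipate no real obstacle beyond careful algebra.
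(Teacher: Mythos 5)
Your proposal is correct and takes essentially the same route as the paper's own treatment, which is carried out in Section \ref{SEC:likelihood_ratio}: explicit evaluation of the Gaussian integrals defining $\boldsymbol{J}_{\beta}$, $\boldsymbol{K}_{\beta}$, $\boldsymbol{A}_{\gamma}$ and $\boldsymbol{B}_{\beta}$ for the normal model, followed by matrix multiplication and an appeal to Theorem \ref{Theorem3} with $k=r=1$. Your only departure is a mild streamlining --- using the symmetry-induced diagonality and the rank-one form of $\boldsymbol{B}_{\beta}$ to reduce everything to the three scalars $A_{11}K_{11}/J_{11}^{2}$, so that the $(\sigma,\sigma)$ entries the paper writes out in full are never needed --- and every scalar you compute agrees with the paper's displayed matrices.
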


A simple approach to approximate the critical region
of the DPDTS and perform the test could be the following. The $k$ eigenvalues described in Theorem \ref{Theorem3} can be expressed as a function of the parameter $\boldsymbol{\theta }_0$. Under the null they can be consistently estimated by replacing $\widetilde{\boldsymbol{\theta }}_{\beta }$ in place of $\boldsymbol{\theta }_0$. Let $\hat{\lambda}_1, \hat{\lambda}_2, \cdots , \hat{\lambda}_k$ represent the corresponding estimated eigenvalues. Generating independent observations $Z_1, Z_2, \cdots, Z_k$ from the $N(0,1)$ distribution repeatedly, one can estimate the quantiles of the distribution of $\sum_{i=1}^k \hat{\lambda}_i Z_i^2$, where $\hat{\lambda}_i$'s are kept fixed during this exercise. The quantiles are then consistent approximations of the true quantiles of the asymptotic null distribution of the statistic in Theorem \ref{Theorem3};  the experimenter can then perform the test based on the critical values thus obtained. In particular when $k=1$ one can perform the test by comparing $T_{\gamma }(\widehat{\boldsymbol{%
\theta }}_{\beta },\widetilde{\boldsymbol{\theta }}_{\beta })/\hat\lambda_1$ with the appropriate upper quantile of $\chi^2(1)$ distribution.
Tables of the
cumulative distribution of ${\sum\limits_{i=1}^{k}}c_{i}Z_{i}^{2}$ are also available in   \cite{soloman1960distribution}, \cite{johnson1968tables}, \cite{eckler1969survey} and  \cite{MR0152069}, which may be helpful in performing the test. \cite{davies1980algorithm} has proposed an algorithm to calculate the critical region corresponding to a linear combination of $\chi^2$ random variables. 
Several other conservative approximations of the critical value of the
DPDTS are provided in \cite{MR3011625}.

\subsection{The Power Function}

By  Theorem \ref{Theorem3} the null hypothesis should be rejected if $%
T_{\gamma }(\widehat{\boldsymbol{\theta }}_{\beta },\widetilde{\boldsymbol{%
\theta }}_{\beta })\geq c_{\alpha }^{\beta ,\gamma },$ where $c_{\alpha
}^{\beta ,\gamma ,}$ is the quantile of order $(1-\alpha )$ of the
asymptotic distribution of $T_{\gamma }(\widehat{\boldsymbol{\theta }}%
_{\beta },\widetilde{\boldsymbol{\theta }}_{\beta })$ under $H_0$. The following theorem
can be used to approximate the power function. 

 \begin{theorem}  
Suppose Lehmann and Basu et al. conditions are satisfied. Assume that $\boldsymbol{%
\theta }^{}\notin \Theta_0$ is the true value of the parameter such that   $\widehat{\boldsymbol{\theta }}_{\beta }\overset{p}{\underset{%
n\rightarrow \infty }{\longrightarrow }}\boldsymbol{\theta }^{}$  under $H_1$. Suppose  there exists $\boldsymbol{\theta }^*\in \Theta_0$ such that the
RMDPDE $\widetilde{\boldsymbol{\theta }}_{\beta }$ of $\boldsymbol{%
\theta }$ satisfies $\widetilde{\boldsymbol{%
\theta }}_{\beta }\overset{p}{\underset{n\rightarrow \infty }{%
\longrightarrow }}\boldsymbol{\theta }^*$. Further assume that
%
\begin{equation}
n^{1/2}\left( (\widehat{\boldsymbol{\theta }}_{\beta },\widetilde{%
\boldsymbol{\theta }}_{\beta })-\left( \boldsymbol{\theta }^{},%
\boldsymbol{\theta }^*\right) \right)^T \overset{\mathcal{L}}{\underset{%
n\rightarrow \infty }{\longrightarrow }}\mathcal{N}\left( \left(
\begin{array}{l}
\boldsymbol{0}_{p} \\
\boldsymbol{0}_{p}%
\end{array}%
\right) ,\left(
\begin{array}{ll}
\boldsymbol{J}_{\beta }^{-1}(\boldsymbol{\theta }^{})\boldsymbol{K}_{\beta
}(\boldsymbol{\theta }^{})\boldsymbol{J}_{\beta }^{-1}(\boldsymbol{\theta }%
^{}) & {\boldsymbol{A}}_{12}\left(\boldsymbol{\theta}^{}, \boldsymbol{\theta
}^*\right) \\
{\boldsymbol{A}}_{12}\left(\boldsymbol{\theta}^{}, \boldsymbol{\theta
}^*\right)^T & \boldsymbol{\Sigma}\left(\boldsymbol{\theta}^{}, \boldsymbol{\theta
}^*\right)%
\end{array}%
\right) \right) ,
\label{a}
\end{equation}%
%
%
%
%
%
%
%
%
%
%
where ${\boldsymbol{A}}_{12}\left(
\boldsymbol{\theta }^{},\boldsymbol{\theta }^*\right) $ and
$\boldsymbol{\Sigma }\left( \boldsymbol{\theta }^*, \boldsymbol{\theta}^{}\right) $
are appropriate $p\times p$
matrices. Then, under $H_1$, we have the following convergence
\begin{equation*}
n^{1/2}\left( d_{\gamma }(f_{\widehat{\boldsymbol{\theta }}_{\beta }},f_{%
\widetilde{\boldsymbol{\theta }}_{\beta }})-d_{\gamma }(f_{\boldsymbol{%
\theta }^{}},f_{\boldsymbol{\theta }^*})\right) \overset{L}{\underset{%
n\rightarrow \infty }{\longrightarrow }}\mathcal{N}\left( 0,\sigma _{\beta
,\gamma }^{2}\left( \mathbf{\boldsymbol{\theta }}^{},\boldsymbol{\theta
}^*\right) \right),
\end{equation*}%
%
%
%
%
%
%
where
\begin{equation}
\sigma _{\beta ,\gamma }^{2}\left( \mathbf{\boldsymbol{\theta }}^{},%
\boldsymbol{\theta }_0\right) =\boldsymbol{t}^{T}\boldsymbol{J}_{\beta
}^{-1}(\boldsymbol{\theta }^{})\boldsymbol{K}_{\beta }(\boldsymbol{\theta }%
^{})\boldsymbol{J}_{\beta }^{-1}(\boldsymbol{\theta }^{})\boldsymbol{t+}2%
\boldsymbol{t}^{T}\boldsymbol{A}_{12}\left(\boldsymbol{\theta}^{}, \boldsymbol{\theta }^*\right)\boldsymbol{s+s}^{T}\boldsymbol{\Sigma }%
\left(\boldsymbol{\theta}^{}, \boldsymbol{\theta }^*\right) \boldsymbol{s},  \label{P1}
\end{equation}%
%
%
%
%
%
%
and 
\begin{equation*}
\boldsymbol{t=}\left( \frac{\partial d_{\gamma }(f_{\boldsymbol{\theta }%
_{1}},f_{\boldsymbol{\theta }^*})}{\partial \boldsymbol{\theta }_{1}}%
\right) _{\boldsymbol{\theta }_{1}=\mathbf{\boldsymbol{\theta }}^{}}%
\text{ and }\boldsymbol{s=}\left( \frac{\partial d_{\gamma }(f_{\mathbf{%
\boldsymbol{\theta }}^{}},f_{\boldsymbol{\theta }_{2}})}{\partial
\boldsymbol{\theta }_{2}}\right) _{\boldsymbol{\theta }_{2}=\mathbf{%
\boldsymbol{\theta }}^*}.
\end{equation*}%
%
%
%
%
%
%
\end{theorem}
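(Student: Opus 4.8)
The plan is to recognize the functional $d_\gamma(f_{\widehat{\boldsymbol{\theta}}_\beta},f_{\widetilde{\boldsymbol{\theta}}_\beta})$ as a smooth function of the pair of estimators and then invoke the multivariate delta method, using the joint asymptotic normality supplied by assumption (\ref{a}). Concretely, I would set $\phi(\boldsymbol{\theta}_1,\boldsymbol{\theta}_2)=d_\gamma(f_{\boldsymbol{\theta}_1},f_{\boldsymbol{\theta}_2})$, so that the quantity of interest is $n^{1/2}\big(\phi(\widehat{\boldsymbol{\theta}}_\beta,\widetilde{\boldsymbol{\theta}}_\beta)-\phi(\boldsymbol{\theta},\boldsymbol{\theta}^*)\big)$. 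Everything then reduces to a Taylor expansion of $\phi$ about the probability limit $(\boldsymbol{\theta},\boldsymbol{\theta}^*)$, whose relevance is guaranteed by the assumed consistency of $\widehat{\boldsymbol{\theta}}_\beta$ and $\widetilde{\boldsymbol{\theta}}_\beta$ under $H_1$.

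The first concrete step is to establish that $\phi$ is continuously differentiable (indeed twice continuously differentiable) in a neighborhood of $(\boldsymbol{\theta},\boldsymbol{\theta}^*)$ and to identify its gradient. Differentiating the two arguments of $d_\gamma$ separately, using the explicit form in (\ref{Uno.1}) with tuning parameter $\gamma$, the partial gradients of $\phi$ with respect to its first and second arguments evaluated at $(\boldsymbol{\theta},\boldsymbol{\theta}^*)$ are exactly the vectors $\boldsymbol{t}$ and $\boldsymbol{s}$ of the statement. A first-order Taylor expansion then gives
\begin{equation*}
\phi(\widehat{\boldsymbol{\theta}}_\beta,\widetilde{\boldsymbol{\theta}}_\beta)-\phi(\boldsymbol{\theta},\boldsymbol{\theta}^*)=\boldsymbol{t}^T(\widehat{\boldsymbol{\theta}}_\beta-\boldsymbol{\theta})+\boldsymbol{s}^T(\widetilde{\boldsymbol{\theta}}_\beta-\boldsymbol{\theta}^*)+R_n,
\end{equation*}
where $R_n$ is quadratic in the deviations. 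Multiplying by $n^{1/2}$ and using that $n^{1/2}(\widehat{\boldsymbol{\theta}}_\beta-\boldsymbol{\theta})$ and $n^{1/2}(\widetilde{\boldsymbol{\theta}}_\beta-\boldsymbol{\theta}^*)$ are jointly $O_P(1)$ by (\ref{a}), while the deviations themselves are $o_P(1)$ by consistency, the scaled remainder $n^{1/2}R_n$ is $o_P(1)$ provided the Hessian of $\phi$ is bounded near $(\boldsymbol{\theta},\boldsymbol{\theta}^*)$.

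Once the remainder is negligible, the leading term is the linear form $(\boldsymbol{t}^T,\boldsymbol{s}^T)$ applied to $n^{1/2}\big((\widehat{\boldsymbol{\theta}}_\beta,\widetilde{\boldsymbol{\theta}}_\beta)-(\boldsymbol{\theta},\boldsymbol{\theta}^*)\big)^T$. By (\ref{a}) and Slutsky's theorem this converges in law to a univariate normal with mean zero and variance
\begin{equation*}
\begin{pmatrix}\boldsymbol{t}^T & \boldsymbol{s}^T\end{pmatrix}\begin{pmatrix}\boldsymbol{J}_\beta^{-1}(\boldsymbol{\theta})\boldsymbol{K}_\beta(\boldsymbol{\theta})\boldsymbol{J}_\beta^{-1}(\boldsymbol{\theta}) & \boldsymbol{A}_{12}\\ \boldsymbol{A}_{12}^T & \boldsymbol{\Sigma}\end{pmatrix}\begin{pmatrix}\boldsymbol{t}\\ \boldsymbol{s}\end{pmatrix},
\end{equation*}
which, on expanding the quadratic form and using $\boldsymbol{s}^T\boldsymbol{A}_{12}^T\boldsymbol{t}=\boldsymbol{t}^T\boldsymbol{A}_{12}\boldsymbol{s}$, is precisely $\sigma_{\beta,\gamma}^2(\boldsymbol{\theta},\boldsymbol{\theta}^*)$ in (\ref{P1}).

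I expect the main obstacle to lie in the two regularity steps underpinning the delta method rather than in the algebra: first, justifying differentiation under the integral sign in (\ref{Uno.1}) so that $\phi$ is $C^1$ with gradient $(\boldsymbol{t},\boldsymbol{s})$ and is $C^2$, which is where the Lehmann and Basu et al.\ conditions are invoked; and second, obtaining uniform control of the Hessian on a shrinking neighborhood of $(\boldsymbol{\theta},\boldsymbol{\theta}^*)$ so that $n^{1/2}R_n=o_P(1)$. One degenerate case also warrants a remark: if $\sigma_{\beta,\gamma}^2(\boldsymbol{\theta},\boldsymbol{\theta}^*)=0$ the limit degenerates to a point mass, so one should either assume positivity of the variance or read the conclusion in that light.
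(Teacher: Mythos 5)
Your proposal is correct and follows essentially the same route as the paper: the paper's proof consists precisely of the first-order Taylor expansion of $d_{\gamma }(f_{\widehat{\boldsymbol{\theta }}_{\beta }},f_{\widetilde{\boldsymbol{\theta }}_{\beta }})$ about $(\boldsymbol{\theta },\boldsymbol{\theta }^*)$ followed by the delta method applied to the assumed joint asymptotic normality. Your additional remarks on controlling the remainder and on the degenerate case $\sigma _{\beta ,\gamma }^{2}=0$ are sensible refinements of the same argument.
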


\begin{proof}
The result follows in a straightforward manner by considering a first order
Taylor expansion of $d_{\gamma }(f_{\widehat{\boldsymbol{\theta }}_{\beta
}},f_{\widetilde{\boldsymbol{\theta }}_{\beta }})$, which yields
\begin{equation*}
d_{\gamma }(f_{\widehat{\boldsymbol{\theta }}_{\beta }},f_{\widetilde{%
\boldsymbol{\theta }}_{\beta }})=d_{\gamma }(f_{\boldsymbol{\theta }^{}},f_{\boldsymbol{\theta }^*})+\boldsymbol{t}^{T}(\widehat{\boldsymbol{%
\theta }}_{\beta }-\boldsymbol{\theta }^{})+\boldsymbol{s}^{T}(%
\widetilde{\boldsymbol{\theta }}_{\beta }-\boldsymbol{\theta }^*)+o\left(
\left\Vert \widehat{\boldsymbol{\theta }}_{\beta }-\boldsymbol{\theta }%
^{}\right\Vert +\left\Vert \widetilde{\boldsymbol{\theta }}_{\beta }-%
\boldsymbol{\theta }^*\right\Vert \right) .
\end{equation*}%
%
\end{proof}

\begin{remark} 
On the basis of the previous theorem we  get an approximation of the
power function as
%
%
\begin{eqnarray}
\pi _{n,\alpha }^{\beta ,\gamma }\left( \mathbf{\boldsymbol{\theta }}^{}\right) &=&\text{P}_{\mathbf{\boldsymbol{\theta }}^{}}\left( T_{\gamma
}(\widehat{\boldsymbol{\theta }}_{\beta },\widetilde{\boldsymbol{\theta }}%
_{\beta })\geq c_{\alpha }^{\beta ,\gamma }\right) \nonumber\\
&=&1-\Phi \left( \frac{n^{1/2}}{\sigma _{\beta ,\gamma }\left(
\mathbf{\boldsymbol{\theta }}^{},\boldsymbol{\theta }^*\right) }%
\left( \frac{c_{\alpha }^{\beta ,\gamma }}{2n}-d_{\gamma }(f_{\boldsymbol{%
\theta }^{}},f_{\boldsymbol{\theta }^*})\right) \right) ,  \label{P2}
\end{eqnarray}%
%
%
%
%
%
%
%
%
%
where $\Phi (\cdot)$ is the standard normal distribution function, $c_{\alpha
}^{\beta ,\gamma }$ is the quantile of order $1-\alpha $ of the asymptotic
distribution of $T_{\gamma }(\widehat{\boldsymbol{\theta }}_{\beta },%
\widetilde{\boldsymbol{\theta }}_{\beta })$ under the null hypothesis, and $\sigma _{\beta ,\gamma
}^{2}\left( \mathbf{\boldsymbol{\theta }}^{},\boldsymbol{\theta }%
^*\right) $ is as defined in (\ref{P1}).

If some $\mathbf{\boldsymbol{\theta }}^{}$ $\neq \boldsymbol{\theta }%
^*$ is the true parameter, then the probability of rejecting $H_0$ 
for a fixed  size $\alpha $ tends to one as $n\rightarrow
\infty $. So the test statistic is consistent in the \citeauthor{MR0093863}'s (\citeyear{MR0093863}) sense.

Obtaining the approximate sample size $n$ to guarantee a power of $\pi$ at
a given alternative $\mathbf{\boldsymbol{\theta }}^{}$ is an
interesting application of formula (\ref{P2}). Let $n^{}$ be the
positive root of the equation
\begin{equation*}
\pi =1-\Phi \left( \frac{n^{1/2}}{\sigma _{\beta ,\gamma }\left( \mathbf{%
\boldsymbol{\theta }}^{},\boldsymbol{\theta }^*\right) }\left( \frac{%
c_{\alpha }^{\beta ,\gamma }}{2n}-d_{\gamma }(f_{\boldsymbol{\theta }^{}},f_{\boldsymbol{\theta }^*})\right) \right),
\end{equation*}%
%
%
%
%
%
%
i.e.%
\begin{equation*}
n^{}=\frac{A+B+\sqrt{A(A+2B)}}{2d_{\gamma }(f_{\boldsymbol{\theta }%
^{}},f_{\boldsymbol{\theta }^*})^{2}},
\end{equation*}%
%
%
%
%
%
%
where
\begin{equation*}
A=\sigma _{\beta ,\gamma }^{2}\left( \mathbf{\boldsymbol{\theta }}^{},%
\boldsymbol{\theta }^*\right) \left( \Phi ^{-1}\left( 1-\pi \right)
\right) ^{2},
\end{equation*}%
%
%
%
%
%
%
and $B=c_{\alpha }^{\beta ,\gamma }d_{\gamma }(f_{\boldsymbol{\theta }^{}},f_{\boldsymbol{\theta }^*}).$ Then the required sample size is $n=\left[
n^{}\right] +1,$ where $\left[ \cdot \right] $ is used to denote
\textquotedblleft integer part of\textquotedblright .
\end{remark}

We may also find an alternative approximation of the power of $T_{\gamma }(\widehat{%
\boldsymbol{\theta }}_{\beta },\widetilde{\boldsymbol{\theta }}_{\beta })$
at an alternative close to the null hypothesis. Let $\boldsymbol{\theta }%
_{n}\in \Theta -\Theta _{0}$ be a given sequence of alternatives, and let $\boldsymbol{%
\theta }_{0}$ be the element in $\Theta _{0}$ closest to $\boldsymbol{\theta }%
_{n}$ in the Euclidean distance sense. One possibility to introduce
contiguous alternative hypotheses is to consider a fixed $\boldsymbol{d}\in
\mathbb{R}^{p}$ and to permit $\boldsymbol{\theta }_{n}$ to move towards $%
\boldsymbol{\theta }_{0}$ as $n$ increases in the manner specified by the hypothesis 
\begin{equation}
H_{1,n}:\boldsymbol{\theta }_{n}=\boldsymbol{\theta }_{0}+n^{-1/2}%
\boldsymbol{d}.  \label{P3}
\end{equation}%
%
%
%
%
%
%
%
%
%
%

\begin{theorem}\label{theorem9}
Suppose that the model satisfies the Lehmann and Basu et al. conditions.
Under the contiguous alternative hypotheses $H_{1,n}$ given in (\ref{P3}),
the asymptotic distribution of $T_{\gamma }(\widehat{\boldsymbol{\theta }}%
_{\beta },\widetilde{\boldsymbol{\theta }}_{\beta })$ coincides with the
distribution of
\begin{equation*}
{\sum\limits_{i=1}^{k}}\lambda _{i}^{\beta ,\gamma }(\boldsymbol{\theta }%
_{0})\left( Z_{i}+w_{i}\right) ^{2}+\eta,
\end{equation*}%
%
%
%
%
%
%
%
%
%
%
where $Z_{1},\ldots ,Z_{r}$ are independent standard normal variables, $%
\lambda _{1}^{\beta ,\gamma }(\boldsymbol{\theta }_{0}),\ldots ,\lambda
_{k}^{\beta ,\gamma }(\boldsymbol{\theta }_{0})$ are the positive
eigenvalues of $\boldsymbol{A}_{\gamma }\left( \boldsymbol{\theta }%
_{0}\right) \boldsymbol{B}_{\beta }\left( \boldsymbol{\theta }_{0}\right)
\boldsymbol{K}_{\beta }(\boldsymbol{\theta }_{0})\boldsymbol{B}_{\beta
}\left( \boldsymbol{\theta }_{0}\right) $, and the values of $\boldsymbol{w=}%
\left( w_{1},\ldots ,w_{k}\right) ^{T}$ and $\eta $ are given by
\begin{equation*}
\boldsymbol{w}=\boldsymbol{\Lambda }_{k}^{-1}\boldsymbol{V}^{T}\boldsymbol{S}%
^{T}\boldsymbol{A}_{\gamma }(\boldsymbol{\theta }_{0})\boldsymbol{B(%
\boldsymbol{\theta }_{0})\boldsymbol{J}_{\beta }(\boldsymbol{\theta }_{0})d}%
,~~\eta =\left( \boldsymbol{B}(\boldsymbol{\theta }_{0})\boldsymbol{Jd}%
\right) ^{T}\boldsymbol{A}_{\gamma }(\boldsymbol{\theta }_{0})\boldsymbol{B(%
\boldsymbol{\theta }_{0})\boldsymbol{J}_{\beta }(\boldsymbol{\theta }_{0})d}-%
\boldsymbol{w}^{T}\boldsymbol{\Lambda }_{k}\boldsymbol{w}.
\end{equation*}%
%
%
%
%
%
%
%
%
%
%
Also $\boldsymbol{S}$ is any square root of $\boldsymbol{B}_{\beta }\left(
\boldsymbol{\theta }_{0}\right) \boldsymbol{K}_{\beta }(\boldsymbol{\theta }%
_{0})\boldsymbol{B}_{\beta }\left( \boldsymbol{\theta }_{0}\right) $, $%
\boldsymbol{\Lambda }_{k}\boldsymbol{=}diag\boldsymbol{(}\lambda _{1}^{\beta
,\gamma }(\boldsymbol{\theta }_{0}),\ldots ,\lambda _{k}^{\beta ,\gamma }(%
\boldsymbol{\theta }_{0}))$ and $\boldsymbol{V}$ is the matrix of
corresponding orthonormal eigenvectors.
\end{theorem}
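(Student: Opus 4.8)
The plan is to reduce $T_{\gamma }(\widehat{\boldsymbol{\theta }}_{\beta },\widetilde{\boldsymbol{\theta }}_{\beta })$ to a quadratic form in $n^{1/2}(\widehat{\boldsymbol{\theta }}_{\beta }-\widetilde{\boldsymbol{\theta }}_{\beta })$, exactly as in the proof of Theorem \ref{Theorem3}, the only genuinely new ingredient being the non-centrality induced by the local alternatives $H_{1,n}$. First I would reuse the second-order expansion of the divergence underlying Theorem \ref{Theorem3}: since $d_{\gamma }(f_{\boldsymbol{\theta }_{1}},f_{\boldsymbol{\theta }_{2}})$ vanishes together with its gradient on the diagonal $\boldsymbol{\theta }_{1}=\boldsymbol{\theta }_{2}$, its Taylor expansion about $\boldsymbol{\theta }_{1}=\boldsymbol{\theta }_{2}=\boldsymbol{\theta }_{0}$ is $\tfrac{1}{2}(\boldsymbol{\theta }_{1}-\boldsymbol{\theta }_{2})^{T}\boldsymbol{A}_{\gamma }(\boldsymbol{\theta }_{0})(\boldsymbol{\theta }_{1}-\boldsymbol{\theta }_{2})$, whence
\[
T_{\gamma }(\widehat{\boldsymbol{\theta }}_{\beta },\widetilde{\boldsymbol{\theta }}_{\beta })=n(\widehat{\boldsymbol{\theta }}_{\beta }-\widetilde{\boldsymbol{\theta }}_{\beta })^{T}\boldsymbol{A}_{\gamma }(\boldsymbol{\theta }_{0})(\widehat{\boldsymbol{\theta }}_{\beta }-\widetilde{\boldsymbol{\theta }}_{\beta })+o_{P}(1).
\]
Under $H_{1,n}$ both estimators remain $\sqrt{n}$-consistent for $\boldsymbol{\theta }_{0}$ (because $\boldsymbol{\theta }_{n}\rightarrow \boldsymbol{\theta }_{0}$), so the remainder is $o_{P}(1)$; contiguity of $P_{\boldsymbol{\theta }_{n}}^{n}$ to $P_{\boldsymbol{\theta }_{0}}^{n}$, which is guaranteed by the Lehmann conditions, ensures that every $o_{P}(1)$ statement established under the null persists under the alternatives.

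Second, I would establish the joint linear representation of the two estimators. Writing $\boldsymbol{\psi }_{\boldsymbol{\theta }_{0}}(x)=\boldsymbol{u}_{\boldsymbol{\theta }_{0}}(x)f_{\boldsymbol{\theta }_{0}}^{\beta }(x)-\int \boldsymbol{u}_{\boldsymbol{\theta }_{0}}(y)f_{\boldsymbol{\theta }_{0}}^{1+\beta }(y)dy$ for the per-observation estimating function from (\ref{2.2}) and $\boldsymbol{Z}_{n}=n^{-1/2}\sum_{i=1}^{n}\boldsymbol{\psi }_{\boldsymbol{\theta }_{0}}(X_{i})$, the expansions behind Theorems \ref{Theorem1} and \ref{theorem:dpd} give $n^{1/2}(\widehat{\boldsymbol{\theta }}_{\beta }-\boldsymbol{\theta }_{0})=\boldsymbol{J}_{\beta }^{-1}(\boldsymbol{\theta }_{0})\boldsymbol{Z}_{n}+o_{P}(1)$ and $n^{1/2}(\widetilde{\boldsymbol{\theta }}_{\beta }-\boldsymbol{\theta }_{0})=\boldsymbol{P}_{\beta }(\boldsymbol{\theta }_{0})\boldsymbol{Z}_{n}+o_{P}(1)$. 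Subtracting and using $\boldsymbol{J}_{\beta }^{-1}-\boldsymbol{P}_{\beta }=\boldsymbol{B}_{\beta }$, which follows directly from (\ref{P}) and (\ref{B}), yields $n^{1/2}(\widehat{\boldsymbol{\theta }}_{\beta }-\widetilde{\boldsymbol{\theta }}_{\beta })=\boldsymbol{B}_{\beta }(\boldsymbol{\theta }_{0})\boldsymbol{Z}_{n}+o_{P}(1)$. The crux is then the limit law of $\boldsymbol{Z}_{n}$ under $H_{1,n}$. A direct computation gives the covariance identity $\mathrm{Cov}_{\boldsymbol{\theta }_{0}}(\boldsymbol{\psi }_{\boldsymbol{\theta }_{0}},\boldsymbol{u}_{\boldsymbol{\theta }_{0}})=\boldsymbol{J}_{\beta }(\boldsymbol{\theta }_{0})$, so by Le Cam's third lemma (equivalently, by expanding $E_{\boldsymbol{\theta }_{n}}[\boldsymbol{\psi }_{\boldsymbol{\theta }_{0}}]$ to first order in $\boldsymbol{\theta }_{n}-\boldsymbol{\theta }_{0}=n^{-1/2}\boldsymbol{d}$) we obtain $\boldsymbol{Z}_{n}\overset{\mathcal{L}}{\longrightarrow }\mathcal{N}(\boldsymbol{J}_{\beta }(\boldsymbol{\theta }_{0})\boldsymbol{d},\boldsymbol{K}_{\beta }(\boldsymbol{\theta }_{0}))$. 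Hence $n^{1/2}(\widehat{\boldsymbol{\theta }}_{\beta }-\widetilde{\boldsymbol{\theta }}_{\beta })\overset{\mathcal{L}}{\longrightarrow }\boldsymbol{W}\sim \mathcal{N}(\boldsymbol{\mu }_{W},\boldsymbol{\Sigma }_{W})$ with $\boldsymbol{\mu }_{W}=\boldsymbol{B}_{\beta }(\boldsymbol{\theta }_{0})\boldsymbol{J}_{\beta }(\boldsymbol{\theta }_{0})\boldsymbol{d}$ and $\boldsymbol{\Sigma }_{W}=\boldsymbol{B}_{\beta }\boldsymbol{K}_{\beta }\boldsymbol{B}_{\beta }$, and the continuous mapping theorem gives $T_{\gamma }\overset{\mathcal{L}}{\longrightarrow }\boldsymbol{W}^{T}\boldsymbol{A}_{\gamma }(\boldsymbol{\theta }_{0})\boldsymbol{W}$.

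Third, I would convert this non-central quadratic form into the stated mixture. Writing $\boldsymbol{W}=\boldsymbol{\mu }_{W}+\boldsymbol{SZ}$ with $\boldsymbol{SS}^{T}=\boldsymbol{\Sigma }_{W}$ and $\boldsymbol{Z}\sim \mathcal{N}(\boldsymbol{0},\boldsymbol{I})$, and spectrally decomposing the symmetric matrix $\boldsymbol{S}^{T}\boldsymbol{A}_{\gamma }\boldsymbol{S}=\boldsymbol{V}\boldsymbol{\Lambda }_{k}\boldsymbol{V}^{T}$ (whose nonzero eigenvalues are precisely those of $\boldsymbol{A}_{\gamma }\boldsymbol{B}_{\beta }\boldsymbol{K}_{\beta }\boldsymbol{B}_{\beta }$, i.e. the $\lambda _{i}^{\beta ,\gamma }$), the quadratic form splits as $\boldsymbol{Z}^{T}\boldsymbol{S}^{T}\boldsymbol{A}_{\gamma }\boldsymbol{SZ}+2\boldsymbol{\mu }_{W}^{T}\boldsymbol{A}_{\gamma }\boldsymbol{SZ}+\boldsymbol{\mu }_{W}^{T}\boldsymbol{A}_{\gamma }\boldsymbol{\mu }_{W}$. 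Setting $\boldsymbol{Y}=\boldsymbol{V}^{T}\boldsymbol{Z}\sim \mathcal{N}(\boldsymbol{0},\boldsymbol{I}_{k})$, the first term is $\sum_{i=1}^{k}\lambda _{i}Y_{i}^{2}$. The key algebraic observation is that, because $\boldsymbol{A}_{\gamma }(\boldsymbol{\theta }_{0})$ is positive semidefinite, any $\boldsymbol{z}$ with $\boldsymbol{z}^{T}\boldsymbol{S}^{T}\boldsymbol{A}_{\gamma }\boldsymbol{Sz}=0$ satisfies $\boldsymbol{A}_{\gamma }\boldsymbol{Sz}=\boldsymbol{0}$; hence $\boldsymbol{S}^{T}\boldsymbol{A}_{\gamma }\boldsymbol{\mu }_{W}$ lies in the column space of $\boldsymbol{V}$, so the cross term depends on $\boldsymbol{Z}$ only through $\boldsymbol{Y}$, equalling $2\boldsymbol{c}^{T}\boldsymbol{Y}$ with $\boldsymbol{c}=\boldsymbol{V}^{T}\boldsymbol{S}^{T}\boldsymbol{A}_{\gamma }\boldsymbol{\mu }_{W}$. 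Completing the square coordinatewise gives $\sum_{i}\lambda _{i}Y_{i}^{2}+2\boldsymbol{c}^{T}\boldsymbol{Y}=\sum_{i}\lambda _{i}(Y_{i}+w_{i})^{2}-\boldsymbol{w}^{T}\boldsymbol{\Lambda }_{k}\boldsymbol{w}$ with $w_{i}=c_{i}/\lambda _{i}$, i.e. $\boldsymbol{w}=\boldsymbol{\Lambda }_{k}^{-1}\boldsymbol{V}^{T}\boldsymbol{S}^{T}\boldsymbol{A}_{\gamma }(\boldsymbol{\theta }_{0})\boldsymbol{B}_{\beta }(\boldsymbol{\theta }_{0})\boldsymbol{J}_{\beta }(\boldsymbol{\theta }_{0})\boldsymbol{d}$. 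Adding back the constant $\boldsymbol{\mu }_{W}^{T}\boldsymbol{A}_{\gamma }\boldsymbol{\mu }_{W}=(\boldsymbol{B}_{\beta }\boldsymbol{J}_{\beta }\boldsymbol{d})^{T}\boldsymbol{A}_{\gamma }\boldsymbol{B}_{\beta }\boldsymbol{J}_{\beta }\boldsymbol{d}$ produces $\eta $, and relabelling $Y_{i}$ as $Z_{i}$ yields exactly $\sum_{i=1}^{k}\lambda _{i}^{\beta ,\gamma }(\boldsymbol{\theta }_{0})(Z_{i}+w_{i})^{2}+\eta $, as claimed.

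I expect the main obstacle to be the second step rather than the final algebra: one must rigorously transfer the stochastic expansions of both the restricted and unrestricted estimators, derived under $P_{\boldsymbol{\theta }_{0}}^{n}$, to the contiguous sequence $P_{\boldsymbol{\theta }_{n}}^{n}$ with the remainders still $o_{P}(1)$, and identify the asymptotic mean shift. This is where contiguity and the local asymptotic normality of the model are essential, and where the covariance identity $\mathrm{Cov}_{\boldsymbol{\theta }_{0}}(\boldsymbol{\psi }_{\boldsymbol{\theta }_{0}},\boldsymbol{u}_{\boldsymbol{\theta }_{0}})=\boldsymbol{J}_{\beta }(\boldsymbol{\theta }_{0})$ does the decisive work, feeding the shift $\boldsymbol{J}_{\beta }(\boldsymbol{\theta }_{0})\boldsymbol{d}$ into both $\boldsymbol{w}$ and $\eta $. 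Once the limit $\boldsymbol{W}\sim \mathcal{N}(\boldsymbol{B}_{\beta }\boldsymbol{J}_{\beta }\boldsymbol{d},\boldsymbol{B}_{\beta }\boldsymbol{K}_{\beta }\boldsymbol{B}_{\beta })$ is in hand, the diagonalization is routine and mirrors the null-hypothesis computation of Theorem \ref{Theorem3}, which is recovered by setting $\boldsymbol{d}=\boldsymbol{0}_{p}$.
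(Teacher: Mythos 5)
Your proposal is correct and its overall architecture coincides with the paper's: reduce $T_{\gamma }(\widehat{\boldsymbol{\theta }}_{\beta },\widetilde{\boldsymbol{\theta }}_{\beta })$ to the quadratic form $n(\widehat{\boldsymbol{\theta }}_{\beta }-\widetilde{\boldsymbol{\theta }}_{\beta })^{T}\boldsymbol{A}_{\gamma }(\boldsymbol{\theta }_{0})(\widehat{\boldsymbol{\theta }}_{\beta }-\widetilde{\boldsymbol{\theta }}_{\beta })+o_{P}(1)$, show that $n^{1/2}(\widehat{\boldsymbol{\theta }}_{\beta }-\widetilde{\boldsymbol{\theta }}_{\beta })$ is asymptotically $\mathcal{N}(\boldsymbol{B}_{\beta }\boldsymbol{J}_{\beta }\boldsymbol{d},\,\boldsymbol{B}_{\beta }\boldsymbol{K}_{\beta }\boldsymbol{B}_{\beta })$ under $H_{1,n}$, and convert the resulting non-central Gaussian quadratic form into the weighted, shifted chi-square mixture. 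You differ from the paper in the two sub-steps. For the mean shift, the paper simply reuses its null-hypothesis expansion (\ref{curl_theta}), writes $n^{1/2}(\widehat{\boldsymbol{\theta }}_{\beta }-\boldsymbol{\theta }_{0})=n^{1/2}(\widehat{\boldsymbol{\theta }}_{\beta }-\boldsymbol{\theta }_{n})+\boldsymbol{d}$, and asserts that $n^{1/2}(\widehat{\boldsymbol{\theta }}_{\beta }-\boldsymbol{\theta }_{n})$ keeps its usual limit law under the drifting sequence; you instead pass through the joint influence-function representation $n^{1/2}(\widehat{\boldsymbol{\theta }}_{\beta }-\widetilde{\boldsymbol{\theta }}_{\beta })=\boldsymbol{B}_{\beta }\boldsymbol{Z}_{n}+o_{P}(1)$ (using $\boldsymbol{J}_{\beta }^{-1}-\boldsymbol{P}_{\beta }=\boldsymbol{B}_{\beta }$, which is a correct consequence of (\ref{P})--(\ref{B})) and identify the shift of $\boldsymbol{Z}_{n}$ via the identity $\tfrac{\partial }{\partial \boldsymbol{\theta }^{T}}E_{\boldsymbol{\theta }}[\boldsymbol{\psi }_{\boldsymbol{\theta }_{0}}]\big|_{\boldsymbol{\theta }_{0}}=\boldsymbol{J}_{\beta }(\boldsymbol{\theta }_{0})$ together with contiguity/Le Cam's third lemma. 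This is the more careful treatment of the one point the paper glosses over, namely why remainders that are $o_{P}(1)$ under $P_{\boldsymbol{\theta }_{0}}^{n}$ remain so under $P_{\boldsymbol{\theta }_{n}}^{n}$. For the final step, the paper invokes Corollary 2.2 of Dik and de Gunst (1985) as a black box, whereas you rederive it by writing $\boldsymbol{W}=\boldsymbol{\mu }_{W}+\boldsymbol{S}\boldsymbol{Z}$, diagonalizing $\boldsymbol{S}^{T}\boldsymbol{A}_{\gamma }\boldsymbol{S}$, observing (correctly, using positive semidefiniteness of $\boldsymbol{A}_{\gamma }$) that the cross term lives in the span of the nonzero-eigenvalue eigenvectors, and completing the square; this reproduces exactly the stated $\boldsymbol{w}$ and $\eta $. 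Your version buys self-containedness and a rigorous contiguity argument at the cost of slightly more machinery; setting $\boldsymbol{d}=\boldsymbol{0}_{p}$ recovers Theorem \ref{Theorem3}, as you note.
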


\begin{proof}
See the Appendix.
\hspace*{\fill}\bigskip
\end{proof}

From a practical point of view we will estimate the eigenvalues as well as $\boldsymbol{w}$ and $\eta$ by their consistent estimators.

\section{Normal Case: Connection with the Likelihood Ratio Test}

\label{SEC:likelihood_ratio}

Under the $\mathcal{N}(\mu ,\sigma ^{2})$ model, consider the problem of testing
\begin{equation}
H_{0}:\mu =\mu _{0}\text{ versus }H_{1}:\mu \neq \mu _{0} , \label{A}
\end{equation}%
%
%
%
%
%
%
%
where $\sigma$ is an unknown nuisance parameter. In this case the
unrestricted and null parameter spaces are given by $\Theta =\{(\mu ,\sigma
)^T\in {\mathbb{R}}^{2}|\mu \in {\mathbb{R}},\sigma \in {\mathbb{R}}%
^{+}\}$ and $\Theta _{0}=\{(\mu ,\sigma)^T\in {\mathbb{R}}^{2}|\mu =\mu
_{0},\sigma \in {\mathbb{R}}^{+}\}$ respectively. If we consider the
function $g(\boldsymbol{\theta })=\mu -\mu _{0},$ with $\boldsymbol{\theta }%
=\left( \mu ,\sigma \right) ^{T}$, the null hypothesis $H_{0}$ can be
written as
\begin{equation*}
H_{0}:g(\boldsymbol{\theta })=0,
\end{equation*}%
%
%
%
%
%
%
and we are in the situation considered in (\ref{A}). We can observe that in
our case $G\left( \boldsymbol{\theta }\right) =\left( 1,0\right) ^{T}.$
Based on (\ref{B2}) and taking into account the fact that $f_{\boldsymbol{%
\theta }}(x)$ is the normal density with mean $\mu $ and variance $\sigma
^{2}$, the estimator $\widehat{\boldsymbol{\theta }}_{\beta }=(\widehat{\mu }%
_{\beta },\widehat{\sigma }_{\beta })^{T}$ of $\boldsymbol{\theta }=(\mu
,\sigma )^{T}$ is given by
\begin{equation*}
(\widehat{\mu }%
_{\beta },\widehat{\sigma }_{\beta })^{T} =
\arg \min_{(\mu
,\sigma )^{T} \in \mathbb{R} \times \mathbb{R}^+}
\frac{1}{\sigma ^{\beta }\left( 2\pi \right)
^{\frac{\beta }{2}}}\left( \frac{1}{\left( 1+\beta \right) ^{3/2}} -\frac{1}{n\beta }\sum_{i=1}^{n}\exp \left\{ -%
\frac{\beta }{2}\left( \frac{X_{i}-\mu }{{\sigma }}\right) ^{2}\right\} \right) ,
\end{equation*}%
where $\beta >0$.
%
%
Similarly, the estimator $\widetilde{\boldsymbol{\theta }}_{\beta }=\left(
\mu _{0},\widetilde{\sigma }_{\beta }\right) ^{T}$, when $\mu =\mu _{0}$, will be obtained from
\begin{equation*}
\widetilde{\sigma }_{\beta } = \arg \min_{\sigma \in \mathbb{R}^+}
\frac{1}{\sigma ^{\beta }\left( 2\pi
\right) ^{\frac{\beta }{2}}}\left( \frac{1}{\left( 1+\beta \right) ^{3/2}} -\frac{1}{n\beta }\sum_{i=1}^{n}\exp
\left\{ -\frac{1}{2}\beta \left( \frac{X_{i}-\mu _{0}}{\sigma }\right)
^{2}\right\} \right).
\end{equation*}%
%
%
%
%
Simple calculations yield the expressions
\begin{equation*}
\boldsymbol{\boldsymbol{J}}_{\beta }(\boldsymbol{\theta })=\frac{1}{%
\sqrt{1+\beta }\left( 2\pi \right) ^{\beta /2}\sigma ^{2+\beta }}\left(
\begin{array}{cc}
\frac{1}{1+\beta } & 0 \\
0 & \frac{\beta ^{2}+2}{\left( 1+\beta \right) ^{2}}%
\end{array}%
\right),
\end{equation*}%
and
\begin{equation*}
\boldsymbol{K}_{\beta }(\boldsymbol{\theta })=\frac{1}{\sigma ^{2+2\beta
}\left( 2\pi \right) ^{\beta }}\left( \frac{1}{(1+2\beta )^{3/2}}\left(
\begin{array}{cc}
1 & 0 \\
0 & \frac{4\beta ^{2}+2}{1+2\beta }%
\end{array}%
\right) -\left(
\begin{array}{cc}
0 & 0 \\
0 & \frac{\beta ^{2}}{(1+\beta )^{3}}%
\end{array}%
\right) \right) .
\end{equation*}%
Based on these matrices we get
\begin{equation*}
\boldsymbol{B}_{\beta }\left( \boldsymbol{\theta }\right) =\allowbreak
\left(
\begin{array}{cc}
\sigma ^{\beta +2}\left( \beta +1\right) ^{\frac{3}{2}}\left( 2\pi \right)
^{\beta /2} & 0 \\
0 & 0%
\end{array}%
\right) .
\end{equation*}%
On the other hand
\begin{equation*}
\boldsymbol{A}_{\gamma }\left( \boldsymbol{\theta }\right) =\frac{1}{\left( 2\pi \right)
^{\gamma /2}\sigma ^{2+\gamma }(1+\gamma )^{1/2}}\left(
\begin{array}{cc}
1 & 0 \\
0 & \frac{\gamma ^{2}+2}{\left( 1+\gamma \right) }%
\end{array}%
\right),
\end{equation*}%
%
%
%
%
and
\begin{equation}
\boldsymbol{A}_{\gamma }\left( \boldsymbol{\theta }\right) \boldsymbol{B}_{\beta }\left(
\boldsymbol{\theta }\right) \boldsymbol{K}_{\beta }(\boldsymbol{\theta })%
\boldsymbol{B}_{\beta }\left( \boldsymbol{\theta }\right) =\left(
\begin{array}{cc}
\frac{1}{\sigma ^{\gamma }}\frac{\left( \beta +1\right) ^{3}}{\sqrt{\gamma +1%
}\left( 2\beta +1\right) ^{\frac{3}{2}}}\frac{1}{\left( 2\pi \right) ^{\frac{%
\gamma }{2}}} & 0 \\
0 & 0%
\end{array}%
\right) ,  \label{EQ:normal_eigenvalue}
\end{equation}%
%
%
%
%
%
%
%
which is identical to the matrix presented in Corollary \ref{corollary4}. In order to
apply the results of Theorem \ref{Theorem3} in this connection, we need to
get the expression of $T_{\gamma }(\boldsymbol{\widehat{\theta }}_{\beta },\boldsymbol{\widetilde{%
\theta }}_{\beta })$. As in \cite{MR3011625} we have 
\begin{align*}
T_{\gamma }(\widehat{\boldsymbol{\theta} }_{\beta },\widetilde{\boldsymbol{\theta} }_{\beta })&
=2nd_\gamma(f_{\widehat{\boldsymbol{\theta
}}_{\beta }},f_{\widetilde{\boldsymbol{\theta }}_{\beta }}) \\
& =\frac{2n}{\widetilde{\sigma }_{\beta }^{\gamma }\sqrt{1+\gamma }\left(
2\pi \right) ^{\gamma /2}}-\left( 1+\frac{1}{\gamma }\right) \frac{1}{%
\widetilde{\sigma }^{\gamma -1}\left( \gamma \widehat{\sigma }_{\beta }^{2}+%
\widetilde{\sigma }_{\beta }^{2}\right) ^{1/2}\left( 2\pi \right) ^{{\gamma
/2}}}\exp \left( -\frac{1}{2}\frac{\mu _{0}^{2}}{\left( \frac{\widehat{%
\sigma }_{\beta }}{\sqrt{\gamma }}\right) ^{2}}+\frac{\widehat{\mu }_{\beta
}^{2}}{\widehat{\sigma }_{\beta }^{2}}\right) \\
& \times \exp \left( \frac{1}{2}\frac{\left( \widehat{\sigma }_{\beta }^{2}\mu _{0}+%
\widehat{\mu }_{\beta }\left( \frac{\widetilde{\sigma }_{\beta }}{\sqrt{%
\gamma }}\right) ^{2}\right) ^{2}}{\left( \widehat{\sigma }_{\beta
}^{2}+\left( \frac{\widehat{\sigma }_{\beta }}{\sqrt{\gamma }}\right)
^{2}\right) \left( \frac{\widetilde{\sigma }_{\beta }}{\sqrt{\gamma }}%
\right) ^{2}\widehat{\sigma }_{\beta }^{2}}\right) +\frac{1}{\gamma \widehat{%
\sigma }_{\beta }^{\gamma }\sqrt{1+\gamma }\left( 2\pi \right) ^{\gamma /2}}.
\end{align*}%
%
%
%
%
%
%
%
Using Corollary \ref{corollary4} and the single nonzero eigenvalue of the matrix given in (\ref%
{EQ:normal_eigenvalue}), we then get
\begin{equation}
\frac{\widetilde{\sigma }_{\beta } ^{\gamma } \sqrt{\gamma +1}\left( 2\beta +1\right) ^{3/2} \left( 2\pi \right) ^{\gamma
/2}}{\left( \beta +1\right) ^{3}}
T_{\gamma }\left( \widehat{\boldsymbol{\theta} }_{\beta },\widetilde{\boldsymbol{\theta} }_{\beta
}\right) \underset{n\rightarrow \infty }{\overset{L}{\longrightarrow }}\chi
^{2}(1).  \label{dpd_stat_normal}
\end{equation}%
%
%
%
%
%
%
%

A special case of interest is the situation where $\beta =0$ and $\gamma =0.$
The likelihood ratio test for the problem under study is equivalent to the
ordinary $t$-test and one can determine the exact small sample critical
values for this test. On the other hand the standard asymptotic formulation
of the likelihood ratio test leads to the rejection of the null hypothesis
when $-2\log \lambda (X_{1},X_{2},\ldots ,X_{n})>\chi _\alpha^{2}(1)$,
where 
\begin{equation*}
\lambda (X_{1},X_{2},\ldots ,X_{n})=\frac{\sup_{\theta \in \Theta
_{0}}f_{\theta }(X_{1},X_{2},\ldots ,X_{n})}{\sup_{\theta \in \Theta
}f_{\theta }(X_{1},X_{2},\ldots ,X_{n})}
\end{equation*}%
%
%
is the likelihood ratio, and $\chi _\alpha^{2}(1)$ is the quantile of order $(1 - \alpha)$ for the $\chi^2(1)$ distribution. The
MLE of $\boldsymbol{\theta }$ under the parameter space $\Theta $ is
\begin{equation*}
\widehat{\boldsymbol{\theta }}_{n}=\left( \bar{X},\hat{\sigma}_{n}^{2}=\frac{%
1}{n}\sum_{i=1}^{n}{(X_{i}-\bar{X})^{2}}\right) ^{T},
\end{equation*}%
%
%
while the MLE under $\Theta _{0}$ is
\begin{equation*}
\widetilde{\boldsymbol{\theta }}_{n}=\left( \mu _{0},\tilde{\sigma}_{n}^{2}=%
\frac{1}{n}\sum_{i=1}^{n}{(X_{i}-\mu _{0})^{2}}\right) ^{T}.
\end{equation*}%
%
%
Straightforward calculations show that asymptotically we reject the null
hypothesis when
\begin{equation}
-2\log \lambda (X_{1},X_{2},\ldots ,X_{n})=n\log \left( \frac{{\tilde{\sigma}%
_{n}}^{2}}{{\hat{\sigma}_{n}}^{2}}\right) >\chi _\alpha^{2}(1).
\label{EQ:asymptotic_LRT}
\end{equation}%
%
%
This test may be looked upon as the asymptotic likelihood ratio test, as
opposed to the usual $t$-test which may be regarded as the exact version of
the likelihood ratio test for the normal mean problem with unknown variance.

What is the relation of the test statistic $T_{\gamma }(\widehat{\boldsymbol{%
\theta }}_{\beta },\widetilde{\boldsymbol{\theta }}_{\beta })$ given in (\ref%
{TRES.1}) with the above test statistics? In the following we will
demonstrate that for $\gamma =0$ and $\beta =0$, our test statistic
coincides with the asymptotic likelihood ratio test described in (\ref%
{EQ:asymptotic_LRT}). Note that the density power divergence for the case $%
\gamma =0$ between the densities of two normal distributions with different
means and variances is given by
\begin{equation*}
d_{\gamma =0}(f_{{\boldsymbol{\theta}}_1},f_{{\boldsymbol{\theta}}_2})
=\log {\frac{\sigma _{2}}{\sigma _{1}}}-\frac{1}{2}+\frac{1}{2}\frac{%
\sigma _{1}^{2}}{\sigma _{2}^{2}}+\frac{1}{2\sigma _{2}^{2}}(\mu _{1}-\mu
_{2})^{2}.
\end{equation*}%
%
%
%
%
%
%
Therefore for $\gamma =0$ and $\beta =0$, we get
\begin{eqnarray*}
T_{\gamma =0}(\hat{\boldsymbol{\theta }}_{\beta =0},\tilde{\boldsymbol{%
\theta }}_{\beta =0}) 
=n\left( \log {\frac{\tilde{\sigma}_{n}^{2}}{\hat{\sigma}_{n}^{2}}}-1+%
\frac{\hat{\sigma}_{n}^{2}}{\tilde{\sigma}_{n}^{2}}+\frac{(\bar{X}-\mu
_{0})^{2}}{\tilde{\sigma}_{n}^{2}}\right) .
\end{eqnarray*}%
%
%
%
%
%
%
A routine calculation shows that
\begin{equation*}
\frac{\hat{\sigma}_{n}^{2}}{\tilde{\sigma}_{n}^{2}}+\frac{(\bar{X}-\mu
_{0})^{2}}{\tilde{\sigma}_{n}^{2}}=1,
\end{equation*}%
%
%
%
%
%
%
so that
\begin{equation}
T_{\gamma =0}(\hat{\boldsymbol{\theta }}_{\beta =0},\tilde{\boldsymbol{%
\theta }}_{\beta =0})=n\log \left( {\frac{\tilde{\sigma}_{n}^{2}}{\hat{\sigma%
}_{n}^{2}}}\right) , \label{EQ:DPD_LRT}
\end{equation}%
%
%
%
%
%
%
and by equations (\ref{EQ:asymptotic_LRT}) and (\ref{EQ:DPD_LRT}), the
asymptotic likelihood ratio test statistic is exactly same as the DPDTS for $\gamma =0$ and $\beta =0$. Therefore when we are comparing
the usual $t$-test with the test statistic $T_{\gamma =0}(\widehat{%
\boldsymbol{\theta }}_{\beta =0},\widetilde{\boldsymbol{\theta }}_{\beta
=0}),$ we are comparing an exact likelihood ratio test with an asymptotic
likelihood ratio test.


\section{Testing for the Weibull Distribution} \label{sec:weibull}
While the normal model is the most important model where our methods are useful, it is also important to explore the applicability of the method in other models to demonstrate the general nature of the method. For this purpose we will include numerical results based on the Weibull distribution in our subsequent numerical study, together with the results on the normal model. Here we describe the statistic for the Weibull case. 
The probability density function of $\mathcal{W}(\sigma,p)$, a two parameter Weibull distribution, is given by
\bee
f_{\boldsymbol{\theta }}(x) = \frac{p}{\sigma} \left(\frac{x}{\sigma}\right)^{p-1} \exp\left\{ -\left(\frac{x}{\sigma}\right)^p\right\}, \ x>0,
\eee
where $\boldsymbol{\theta }%
=(\sigma, p)^{T}$, and the parameter space is given by $\Theta =\{(\sigma, p
)|\sigma \in {\mathbb{R}}^{+}, p \in {\mathbb{R}}^{+}\}$. 
We are interested in testing
\begin{equation}
H_{0}:\sigma =\sigma _{0}\text{ versus }H_{1}:\sigma \neq \sigma _{0} , \label{w}
\end{equation}
where $p$ is a nuisance parameter. Let us consider the
function $g(\boldsymbol{\theta })=\sigma -\sigma _{0}$. Then, as in the normal case which was considered in Section \ref{SEC:likelihood_ratio}, the null hypothesis $H_{0}$ can be
written as
\begin{equation*}
H_{0}:g(\boldsymbol{\theta })=0,
\end{equation*}
and  $\boldsymbol{G}\left( \boldsymbol{\theta }\right) =\left( 1,0\right) ^{T}.$

Let us define
\bee
\xi_{\alpha, \beta}( \boldsymbol{\theta }) = \int_0^\infty \left(\frac{x}{\sigma}\right)^{\alpha} f_{\boldsymbol{\theta }}^\beta(x) dx ,
\eee
and
\bee
\eta_{\alpha, \beta, \gamma}( \boldsymbol{\theta }) = \int_0^\infty \left(\frac{x}{\sigma}\right)^{\alpha} \left[ \log \left(\frac{x}{\sigma}\right)\right]^{\beta} f_{\boldsymbol{\theta }}^\gamma(x) dx .
\eee
It can be shown that
\be
\xi_{\alpha, \beta}( \boldsymbol{\theta })
=\left(\frac{p}{\sigma}\right)^{\beta-1} \beta^{-\frac{\beta p - \beta + \alpha +1}{p}} 
         \Gamma\left(\frac{\beta p - \beta + \alpha +1}{p}\right) ,
\label{xi}
\ee
and
\be
\eta_{\alpha, \beta, \gamma}( \boldsymbol{\theta }) =  \sigma \left(\frac{p}{\sigma}\right)^\gamma \int_0^\infty y^{\alpha + \gamma p - \gamma} 
            ( \log y)^\beta \exp( -\gamma y^p) dy  ,            
\label{eta}            
\ee
where $\Gamma(\cdot)$ denotes the gamma function.
Note that $\xi_{\alpha, \gamma}( \boldsymbol{\theta }) = \eta_{\alpha, 0, \gamma}( \boldsymbol{\theta })$. For $\beta \neq 0$ the value of $\eta_{\alpha, \beta, \gamma}( \boldsymbol{\theta })$ is calculated using numerical integration.
Let us define
\bee
\boldsymbol{R}_\gamma( \boldsymbol{\theta }) = \int_0^\infty \boldsymbol{u}_{\boldsymbol{\theta }}(x) \boldsymbol{u}_{\boldsymbol{\theta }}^T(x) f_{\boldsymbol{\theta }}^\gamma(x) dx = \left(\begin{array}{cc}
 r_{11} & r_{12} \\  r_{12} & r_{21}
\end{array}\right),
\eee
where $\boldsymbol{u}_{\boldsymbol{\theta }}(x)$, the score function of the Weibull distribution, is given by
\bee
\boldsymbol{u}_{\boldsymbol{\theta }}(x) = \frac{\partial \log f_{\boldsymbol{\theta }}(x)}{\partial \boldsymbol{\theta}}
= \left(\begin{array}{c}
  -\frac{p}{\sigma} + \frac{p}{\sigma}\left(\frac{x}{\sigma}\right)^{p} \\
  \frac{1}{p} + \log\left(\frac{x}{\sigma}\right) - \left(\frac{x}{\sigma}\right)^p \log \left(\frac{x}{\sigma}\right) 
\end{array}\right).
\eee
Then it can be shown that
\bee
r_{11} = \left(\frac{p}{\sigma}\right)^2 \left\{  \xi_{0, \gamma}( \boldsymbol{\theta })
 - 2 \xi_{p, \gamma}( \boldsymbol{\theta })  +   \xi_{2p, \gamma}( \boldsymbol{\theta }) \right\} ,
\eee
\bee
r_{12} =  \frac{p}{\sigma}   \left\{ -\frac{1}{p} \xi_{0, \gamma}( \boldsymbol{\theta }) - \eta_{0, 1, \gamma}( \boldsymbol{\theta })  + 2 \eta_{p, 1, \gamma}( \boldsymbol{\theta })
+ \frac{1}{p}  \xi_{p, \gamma}( \boldsymbol{\theta })  -   \eta_{2p, 1, \gamma}( \boldsymbol{\theta })      \right\} ,
\eee
and
\baa
r_{22} &=& \frac{1}{p^2} \xi_{0, \gamma}( \boldsymbol{\theta }) + \eta_{0, 2, \gamma}( \boldsymbol{\theta })  +   \eta_{2p, 2, \gamma}( \boldsymbol{\theta })   + \frac{2}{p} \eta_{0, 1, \gamma}( \boldsymbol{\theta })   - 2 \eta_{p, 2, \gamma}( \boldsymbol{\theta })  - \frac{2}{p} \eta_{p, 1, \gamma}( \boldsymbol{\theta }) .
\eaa
Now 
\be
\boldsymbol{J}_\gamma( \boldsymbol{\theta }) = \int_0^\infty \boldsymbol{u}_{\boldsymbol{\theta }}(x) \boldsymbol{u}_{\boldsymbol{\theta }}^T(x) f_{\boldsymbol{\theta }}^{1+\gamma}(x) dx = \boldsymbol{R}_{1+\gamma}( \boldsymbol{\theta }),
\label{J}
\ee
\be
\boldsymbol{K}_\gamma( \boldsymbol{\theta }) = \int_0^\infty \boldsymbol{u}_{\boldsymbol{\theta }}(x) \boldsymbol{u}_{\boldsymbol{\theta }}^T(x) f_{\boldsymbol{\theta }}^{1+2\gamma}(x) dx = \boldsymbol{R}_{1+2\gamma}( \boldsymbol{\theta }),
\label{K}
\ee
and 
\be
\boldsymbol{A}_\gamma( \boldsymbol{\theta }) = (1+\gamma) \int_0^\infty \boldsymbol{u}_{\boldsymbol{\theta }}(x) \boldsymbol{u}_{\boldsymbol{\theta }}^T(x) f_{\boldsymbol{\theta }}^{1+\gamma}(x) dx = (1+\gamma) \boldsymbol{R}_{1+\gamma}( \boldsymbol{\theta }).
\label{A_gama}
\ee
Suppose we have two densities $f_{\boldsymbol{\theta}_1}$ and $f_{\boldsymbol{\theta}_2}$ from Weibull family, where $\boldsymbol{\theta}_1 =(\sigma_1, p_1)^T$ and $\boldsymbol{\theta}_2 =(\sigma_2, p_2)^T$. If $\gamma>0$, then using (\ref{xi}) we get from equation (\ref{Uno.1})
\be
d_{\gamma}(f_{\boldsymbol{\theta}_1},f_{\boldsymbol{\theta}_2}) = \xi_{0, 1+\gamma}(\boldsymbol{\theta}_2) - \left( 1+\frac{1}{\gamma}\right) \psi_\gamma(\boldsymbol{\theta}_1, \boldsymbol{\theta}_2) + \frac{1}{\gamma} \xi_{0, 1+\gamma}(\boldsymbol{\theta}_1),  
\label{weibull_dpd_est}
\ee
where
\bee
\psi_\gamma(\boldsymbol{\theta}_1, \boldsymbol{\theta}_2) = \int f_{\boldsymbol{\theta}_2}^{\gamma }(x)f_{\boldsymbol{\theta}_1}(x) dx .
\eee 
The value of $\psi_\gamma(\boldsymbol{\theta}_1, \boldsymbol{\theta}_2)$ can also be calculated using numerical integration. 
For $\gamma=0$ it can be shown that
\begin{align}
d_{\gamma=0}(f_{\boldsymbol{\theta}_1},& f_{\boldsymbol{\theta}_2}) = \log p_1 - \log \sigma_1 + (p_1-1)\eta_{0, 1,1}(\boldsymbol{\theta}_1) - \xi_{p_1,1}(\boldsymbol{\theta}_1)\nn
&+  \log p_2 - \log \sigma_2 +(p_2-1) \log\left(\frac{\sigma_1}{\sigma_2}\right) + (p_2-1)\eta_{0, 1,1}(\boldsymbol{\theta}_1) - \left(\frac{\sigma_1}{\sigma_2}\right)^{p_2} \xi_{p_2,1}(\boldsymbol{\theta}_1).
\label{weibull_dpd_est0}
\end{align}
Using equations (\ref{J})-(\ref{weibull_dpd_est0}) we calculate the test statistic as well as its asymptotic distribution.

Suppose $\widehat{\sigma }_{\beta }$ and $\widehat{p }_{\beta }$ are the unconstrained estimators of $\sigma$ and $p$ respectively, and $\widetilde{p }_{\beta }$ is the RMDPDE of $p$ under the null hypothesis. 
For $\gamma>0$, the test statistic can be simplified as
\begin{align}
& T_{\gamma }(\widehat{\sigma }_{\beta },\widehat{p }_{\beta },\sigma
_{0},\widetilde{p }_{\beta })  \notag \\
& =\frac{2n\widetilde{r}_{\gamma +1}^{(2,2)}(\widetilde{p }_{\beta })\left[ 
\widetilde{r}_{\gamma +1}^{(1,1)}(\widetilde{p }_{\beta })\widetilde{r}_{\gamma
+1}^{(2,2)}(\widetilde{p }_{\beta })-\left( \widetilde{r}_{\gamma +1}^{(1,2)}(%
\widetilde{p }_{\beta })\right) ^{2}\right] }{(1+\gamma )%
\begin{pmatrix}
-\widetilde{r}_{\gamma +1}^{(2,2)}(\widetilde{p }_{\beta }) & \widetilde{r}%
_{\gamma +1}^{(1,2)}(\widetilde{p }_{\beta })%
\end{pmatrix}%
\begin{pmatrix}
\widetilde{r}_{2\gamma +1}^{(1,1)}(\widetilde{p }_{\beta }) & \widetilde{r}%
_{2\gamma +1}^{(1,2)}(\widetilde{p }_{\beta }) \\ 
\widetilde{r}_{2\gamma +1}^{(1,2)}(\widetilde{p }_{\beta }) & \widetilde{r}%
_{2\gamma +1}^{(2,2)}(\widetilde{p }_{\beta })%
\end{pmatrix}%
\begin{pmatrix}
-\widetilde{r}_{\gamma +1}^{(2,2)}(\widetilde{p }_{\beta }) \\ 
\widetilde{r}_{\gamma +1}^{(1,2)}(\widetilde{p }_{\beta })%
\end{pmatrix}%
}  \notag \\
& \times \left\{ \frac{1}{\gamma }\left( \frac{\widehat{p }_{\beta }\sigma
_{0}}{\widetilde{p }_{\beta }\widehat{\sigma }_{\beta }}\right) ^{\gamma
}\varepsilon _{0,\gamma +1}(\widehat{p }_{\beta })+\varepsilon _{0,\gamma
+1}(\widetilde{p }_{\beta })-\frac{\gamma +1}{\gamma }\frac{1}{\sigma
_{0}^{\gamma (\widetilde{p }_{\beta }-1)}}\frac{\widehat{p }_{\beta }}{%
\widehat{\sigma }_{\beta }^{\widehat{p }_{\beta }}}\overline{I}_{\gamma }(%
\widehat{\sigma }_{\beta },\widehat{p }_{\beta },\sigma _{0},\widehat{p}%
_{\beta })\right\},   \notag \label{eq222}
\end{align}%
where
\begin{equation*}
\widetilde{r}_{\gamma }^{(1,1)}(\widetilde{p }_{\beta })=\varepsilon _{0,\gamma
}(\widetilde{p }_{\beta })-2\varepsilon _{\widetilde{p }_{\beta },\gamma }(%
\widetilde{p }_{\beta })+\varepsilon _{2\widetilde{p }_{\beta },\gamma }(\widehat{p%
}_{\beta }),  \label{eq4a}
\end{equation*}
\begin{equation*}
\widetilde{r}_{\gamma }^{(1,2)}(\widetilde{p }_{\beta })=-\frac{1}{\widehat{p}%
_{\beta }}\varepsilon _{0,\gamma }(\widetilde{p }_{\beta })+\left( \log 
\widetilde{p }_{\beta }+\frac{1}{\widetilde{p }_{\beta }}\right) \varepsilon _{%
\widetilde{p }_{\beta },\gamma }(\widetilde{p }_{\beta })-\log \widehat{p}_{\beta
}\varepsilon _{2\widetilde{p }_{\beta },\gamma }(\widetilde{p }_{\beta })-\kappa
_{0,1,\gamma }(\widetilde{p }_{\beta })+\kappa _{\widetilde{p }_{\beta },1,\gamma
}(\widetilde{p }_{\beta }) , \label{eq4b}
\end{equation*}
\begin{align*}
\widetilde{r}_{\gamma }^{(2,2)}(\widetilde{p }_{\beta })& =\frac{1}{\widehat{p}%
_{\beta }^{2}}\varepsilon _{0,\gamma }(\widetilde{p }_{\beta })-\frac{2}{%
\widetilde{p }_{\beta }}\log \widetilde{p }_{\beta }\varepsilon _{\widehat{p}%
_{\beta },\gamma }(\widetilde{p }_{\beta })+(\log \widehat{p}_{\beta
})^{2}\varepsilon _{2\widetilde{p }_{\beta },\gamma }(\widetilde{p }_{\beta }) 
\notag \\
& +\frac{2}{\widetilde{p }_{\beta }}\kappa _{0,1,\gamma }(\widehat{p}_{\beta
})+\kappa _{0,2,\gamma }(\widetilde{p }_{\beta })-2\log \widehat{p}_{\beta
}\kappa _{\widetilde{p }_{\beta },1,\gamma }(\widetilde{p }_{\beta }) , 
\end{align*}
\begin{equation*}
\overline{I}_{\gamma }(\widehat{\sigma }_{\beta },\widetilde{p}_{\beta
},\sigma _{0},\widetilde{p }_{\beta })=\int_{0}^{\infty }x^{\gamma (\widehat{p}%
_{\beta }-1)+\widehat{p }_{\beta }-1}\exp \left\{ -\gamma \left( \frac{x}{%
\sigma _{0}}\right) ^{\widetilde{p }_{\beta }}-\left( \frac{x}{\widetilde{%
\sigma }_{\beta }}\right) ^{\widehat{p }_{\beta }}\right\} dx  ,\label{eq3}
\end{equation*}%
\begin{align*}
\xi _{\alpha ,\gamma }(\sigma ,p)& =\left( \frac{p}{\sigma }\right) ^{\gamma
-1}\varepsilon _{\alpha ,\gamma }(p),  \notag \\
\varepsilon _{\alpha ,\gamma }(p)& =\gamma ^{\frac{(p-1)\gamma +\alpha +1}{p}%
}\Gamma \left( \frac{(p-1)\gamma +\alpha +1}{p}\right) ,  
\end{align*}%
\begin{align*}
\eta _{\alpha ,\delta ,\gamma }(\sigma _{0},\widetilde{p }_{\beta })& =\widehat{%
p}_{\beta }\left( \frac{\widetilde{p }_{\beta }}{\sigma _{0}}\right) ^{\gamma
-1}\int_{0}^{\infty }(\log y)^{\delta }y^{(\widetilde{p }_{\beta }-1)\gamma
+\alpha }\exp \{-\gamma y^{\widetilde{p }_{\beta }}\}dy  \notag \\
& =\left( \frac{\widetilde{p }_{\beta }}{\sigma _{0}}\right) ^{\gamma -1}\kappa
_{\alpha ,\delta ,\gamma }(\widetilde{p }_{\beta }),  \notag \\
\kappa _{\alpha ,\delta ,\gamma }(\widetilde{p }_{\beta })& =\widehat{p}_{\beta
}\int_{0}^{\infty }(\log y)^{\delta }y^{(\widetilde{p }_{\beta }-1)\gamma
+\alpha }\exp \{-\gamma y^{\widetilde{p }_{\beta }}\}dy  .
\end{align*}%

\section{Numerical Studies}\label{sec:simulation}
In this section we provide some extensive numerical evidence of the performance of the proposed methods, 
demonstrating, in particular, their strong robustness properties. Notice that the test statistic depends
on the data only through the value of the estimator (both unconstrained and constrained), so that the 
robustness of the test would appear to depend directly on the robustness of the estimator. However, it is
still useful to develop actual theoretical robustness properties of the proposed tests. Fortunately 
there is a wealth of material available in this context which makes our work easy. \cite{toma2011dual} and \cite{toma2010robust} have, in general, touched upon the issue of theoretical robustness
 properties of tests. 
They have considered several theoretical measures of robustness in this context. In a more limited, but a more focused setting \cite{ghosh2014robustness} have considered the robustness measures of test statistics based on the family of $S$-divergences which include the DPD as a special case; in particular the influence functions of the tests and the so called level and power influence functions are derived. Taken together, the above references further reinforce the notion that the robustness of these tests are directly dependent on the robustness of the estimators as the influence function of the tests turn out to be directly related to the influence function of the estimators.  The \cite{ghosh2014robustness} paper relates only to the case of the simple null hypothesis; however it is not difficult to intuitively see how the robustness of these tests extend to the case of the composite hypothesis. The theoretical robustness properties of some similar tests have been considered  in the Ph.D. dissertation of \cite{GhoshThesis}. 
On the whole, there is substantial overall indication and evidence of the theoretical robustness properties 
of the tests under study. For the sake of brevity we do not repeat these results here, but concentrate instead on 
the performance of the tests as observed in simulations and actual real data examples. 



\subsection{Real Data Examples}



\subsubsection{Telephone-Fault Data}

\label{SEC:Telephone_example}

We consider the data on telephone line faults presented and analyzed by
 \cite{MR909365}; the data were also analyzed by  \cite{MR999667}. The data are
given in Table \ref{TAB:telephone-line-faults}, and consist of the ordered
differences between the inverse test rates and the inverse control rates in
14 matched pairs of areas. A parametric approach to analyze this  would be to model
these data as a random sample from a normal distribution with mean $\mu$ and
standard deviation $\sigma$. It is obvious that
the first observation of this dataset is a huge outlier with respect to the
normal model, while the remaining 13 observations appear to be reasonable
with respect to the same. 
%

\cite{MR3011625} provided a limited analysis of these data by testing simple null hypotheses under the normal model. They tested null hypothesis about the mean by assuming the variance to be known, and also tested null hypothesis about the variance by assuming the mean to be known. These are contrived situations, and are less meaningful than the more realistic situation where both parameters are unknown. In this paper we consider tests for the normal mean without assuming the scale parameter to be known. 



For the full data, the $t$-test for the null hypothesis $H_0: \mu = 0$
against $H_1: \mu \neq 0$ fails to reject the null due to the presence of
the large outlier (two sided $p$-value is 0.6584); however the robust
Hellinger deviance test \citep{MR999667} comfortably rejects the null (two
sided $p$-value based on the chi-square null distribution is 0.0061), as
does the $t$-test based on the cleaned data after the removal of the large
outlier (two sided $p$-value is 0.0076).

\begin{table}[h]
\caption{Telephone-Fault Data}
\label{TAB:telephone-line-faults}
\tabcolsep=0.15cm
\begin{center}
\begin{tabular}{lcccccccccccccc}
\hline
Pair & 1 & 2 & 3 & 4 & 5 & 6 & 7 & 8 & 9 & 10 & 11 & 12 & 13 & 14 \\
Difference & $-988$ & $-135$ & $-78$ & 3 & 59 & 83 & 93 & 110 & 189 & 197 &
204 & 229 & 289 & 310 \\ \hline
\end{tabular}%
\end{center}
\end{table}

\begin{figure}[tbp]
\centering%
\begin{tabular}{rl}
\includegraphics[height=7.5cm, width=8cm]{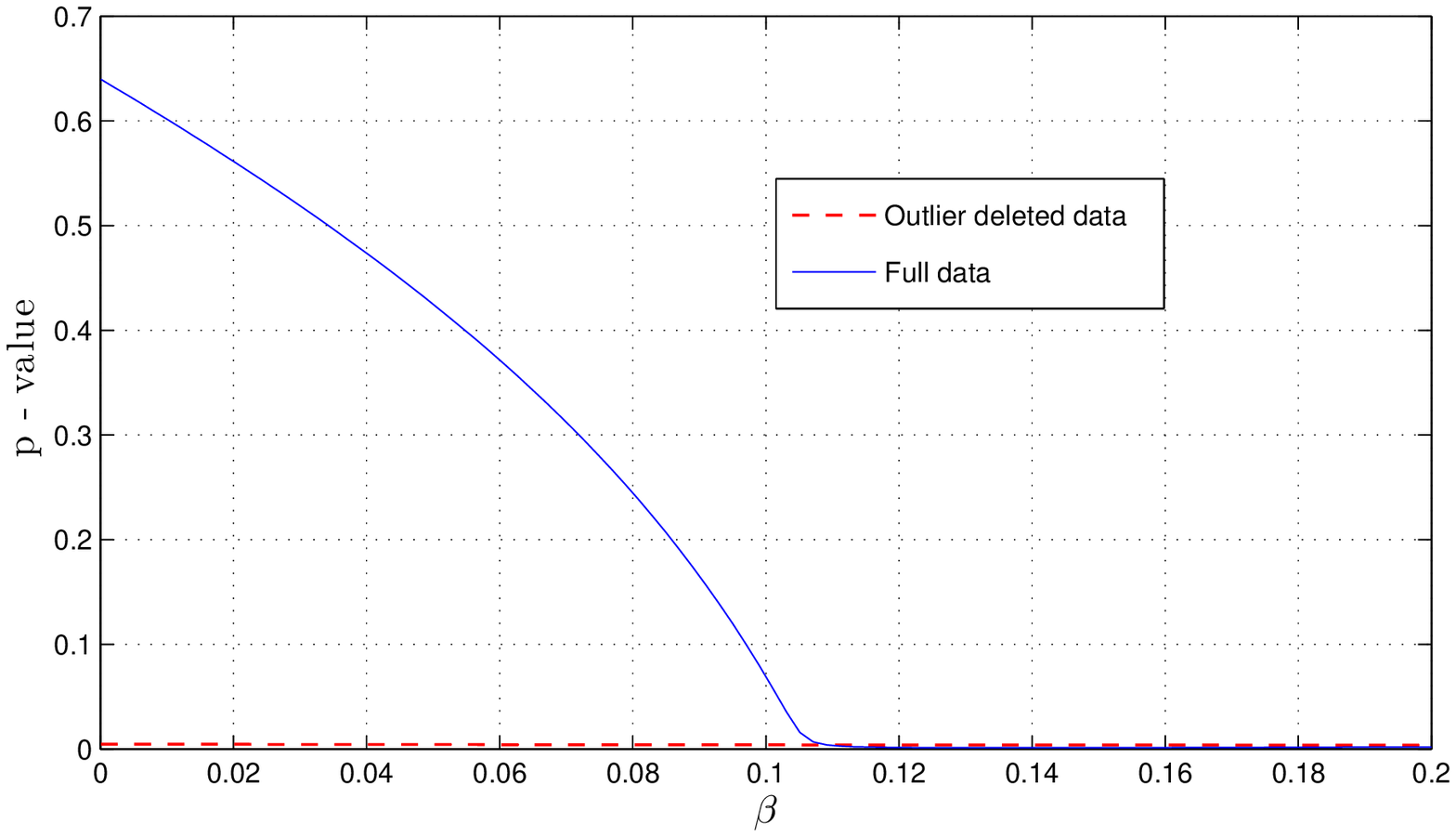}\negthinspace &
\negthinspace \includegraphics[height=7.5cm, width=8cm]{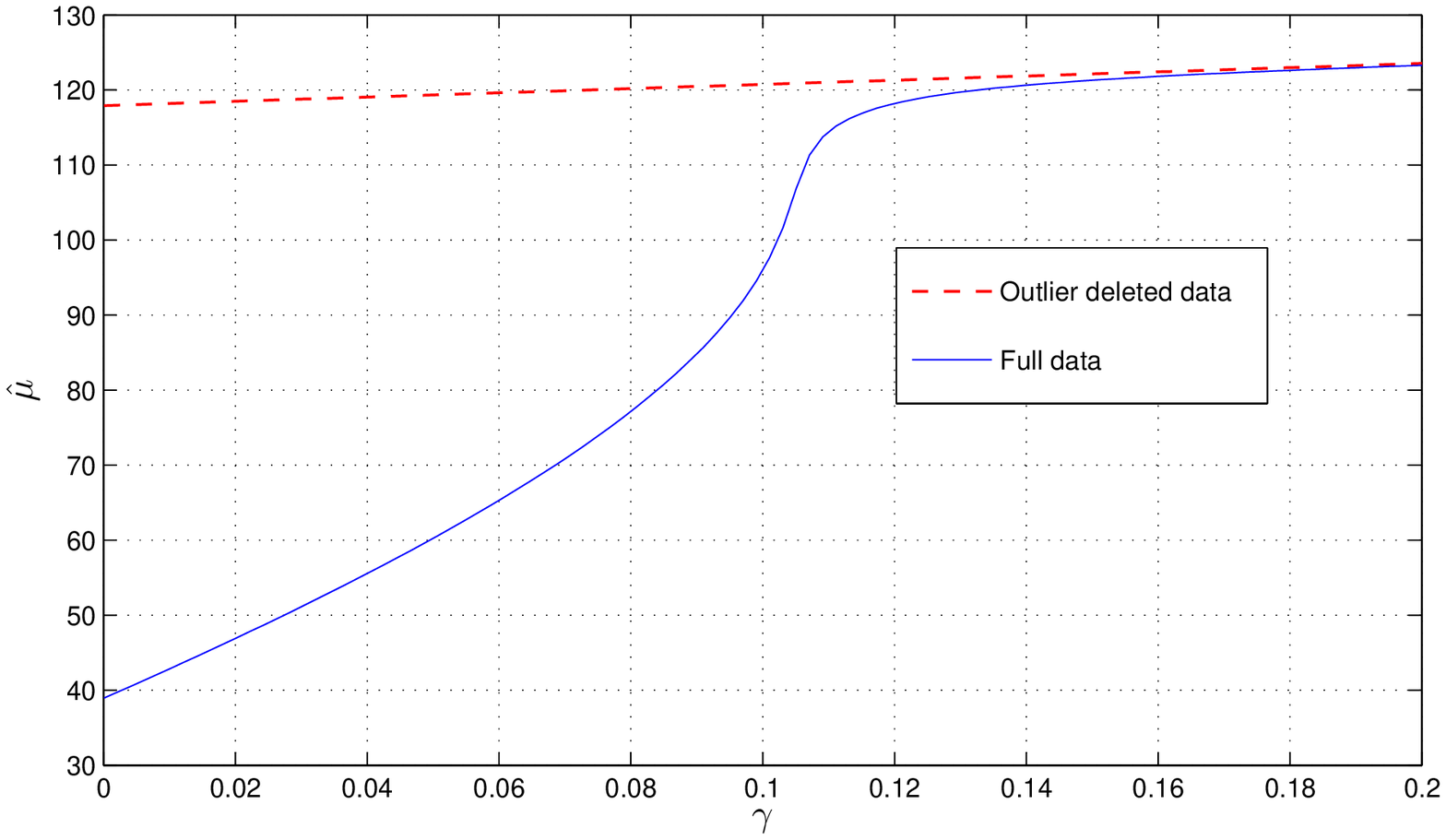} 
\end{tabular}%
\caption{(a) Two sided $p$-values of the density power divergence tests and 
(b) estimates of $\protect\mu$ for different values of $\protect\gamma$ in case of the telephone-fault data.}
\label{fig:Tele_mu0_composite}
\end{figure}

%

Under the normal model, the maximum likelihood estimates of $\mu$ (and $\sigma$) are
highly distorted due to the presence of the large outlier, and as a result
the likelihood ratio test under the normal model fails to reject the null
hypothesis. From the robustness perspective, this is precisely what we will
like to avoid, and here we demonstrate that proper choices of the tuning
parameter within the class of tests developed in this paper achieve this
goal. Here we analyze the performance of the density power divergence tests
with $\beta = \gamma$. Figure \ref{fig:Tele_mu0_composite}(a) represents the $p$%
-values of the test $H_0: \mu = 0$ versus $H_1: \mu \neq 0$  for different values of $\beta$ in a region of
interest. While it is clearly seen that the tests fail to reject the null
hypothesis for these data at very small values of $\beta$, the decision
turns around sharply, as $\beta$ crosses and goes beyond 0.1. On the other hand,
the $p$-values of the same test based on the outlier deleted data remain stable,
supporting rejection, at all values of $\beta$ (Figure \ref{fig:Tele_mu0_composite}(a)). The stable
behavior of the test statistic based on the density power divergence for the full data
approximately coincides with the stability of the density power divergence
estimate of $\mu$ itself, obtained under a two-parameter normal model, which
is presented in Figure \ref{fig:Tele_mu0_composite}(b). The minimum density power
divergence estimators of $\mu$ for the full data and the outlier deleted data are
practically identical for $\beta > 0.12$. At least in this example, the
robustness of the test statistic is clearly linked to the robustness of the
estimator.

\begin{figure}[tbp]
\centering
\includegraphics[
height=7.5278cm,
width=11.5339cm
]{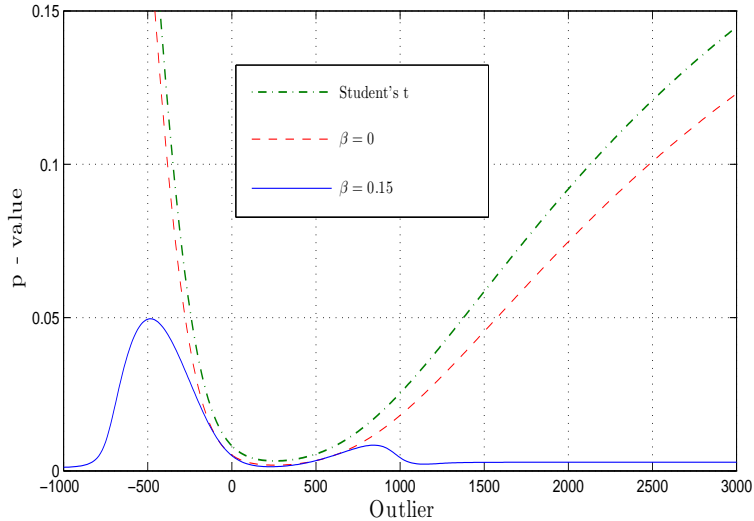}
\caption{Two sided $p$-values for the tests for the mean for the
telephone-fault data under the normal model against the first outlying
observation.}
\label{outliers}
\end{figure}

To further explore the robustness properties of the density power divergence
tests we look at the two sided $p$-values for different values of the
outlier. For this purpose we vary the first outlying observation in the
range from $-1000$ to 3000 by keeping the remaining 13 observations fixed.
Figure \ref{outliers} shows the corresponding $p$-values of the density
power divergence tests with $\beta = 0.15$ as well as $\beta=0$ and the
ordinary $t$-test. It shows that initially the $p$-value of the density
power divergence test with $\beta = 0.15$ increases as the first
observation moves away from the center of the data set, but after a certain
limit the test gradually nullifies the effect of the outlier. On the other
hand, the $p$-values of the $t$-test and the density power divergence test
with $\beta = 0$ keep on increasing with the outlier on either tail.
 Indeed the $p$-values of these two tests are remarkably
close to each other.


\subsubsection{Darwin's Plant Fertilization Data}

\label{SEC:Darwin_example}

Charles Darwin had performed an experiment which may be used to determine whether
self-fertilized plants and cross-fertilized plants have different growth
rates. In this experiment pairs of \textit{Zea mays} plants, one self and
the other cross-fertilized, were planted in pots, and after a specific time
period the height of each plant was measured. A particular sample of 15 such
pairs of plants led to the paired differences (cross-fertilized minus self
fertilized) presented in increasing order in Table \ref{TAB:Darwin-data}
(see  \citealp{darwin1878}).

\begin{table}[h]
\caption{Darwin's Plant Fertilization Data}
\label{TAB:Darwin-data}
\begin{center}
\begin{tabular}{lccccccccccccccc}
\hline
Pair & 1 & 2 & 3 & 4 & 5 & 6 & 7 & 8 & 9 & 10 & 11 & 12 & 13 & 14 & 15 \\
Difference & $-67$ & $-48$ & 6 & 8 & 14 & 16 & 23 & 24 & 28 & 29 & 41 & 49 &
56 & 60 & 75 \\ \hline
\end{tabular}%
\end{center}
\end{table}

As in the previous example, we assume a normal model for the paired
differences and test $H_0: \mu = 0$ against $H_1: \mu \neq 0$, i.e. we test whether
the mean of the paired differences is different from zero. The unconstrained
minimum DPD estimates of $\mu$ 
under the normal model corresponding to different values of the tuning
parameter $\beta$ are presented in Figure \ref{fig:Darwin}(b).
The two negative paired differences appear to be geometrically well
separated from the rest of the data, though they are perhaps not as huge
outliers as the first observation in the telephone-fault data. These two
observations do have a substantial impact on the parameter estimates and the
test statistic for testing $H_0$ using density power divergence tests with
very small values of $\gamma = \beta$, and it is instructive to compare to the
case where these two outliers have been removed from the data.
For small values of $\beta$, the two sided $p$-values of the test
statistics are drastically different for the full data and outlier deleted
cases (Figure \ref{fig:Darwin}(b)), but they get closer with increasing $%
\beta$, and they essentially coincide for $\beta \geq 0.45$. Once again
this seems to be directly linked to the robustness of the parameter
estimates; Figure \ref{fig:Darwin}(b), 
which also depicts the progression of the parameter estimates for the
outlier deleted data, clearly demonstrates that. For comparison we note that the two sided $p$-values for the ordinary $t$-test in this case are 0.0497 (for full data) and $1.3119 \times 10^{-4}$ (for the cleaned data with the two outliers removed).
\begin{figure}[tbp]
\centering%
\begin{tabular}{rl}
\includegraphics[height=7.5cm, width=8cm]{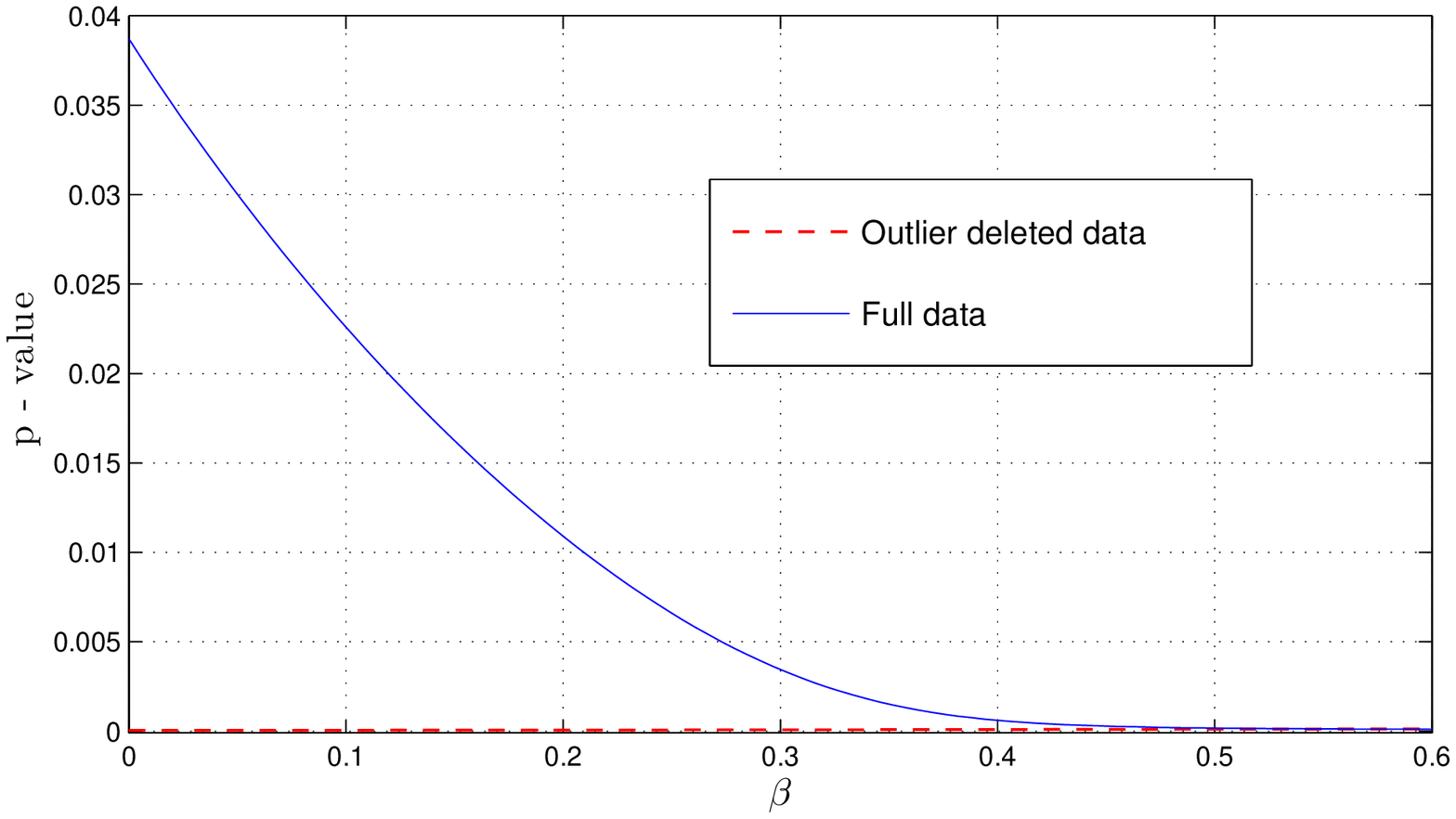}\negthinspace &
\negthinspace \includegraphics[height=7.5cm, width=8cm]{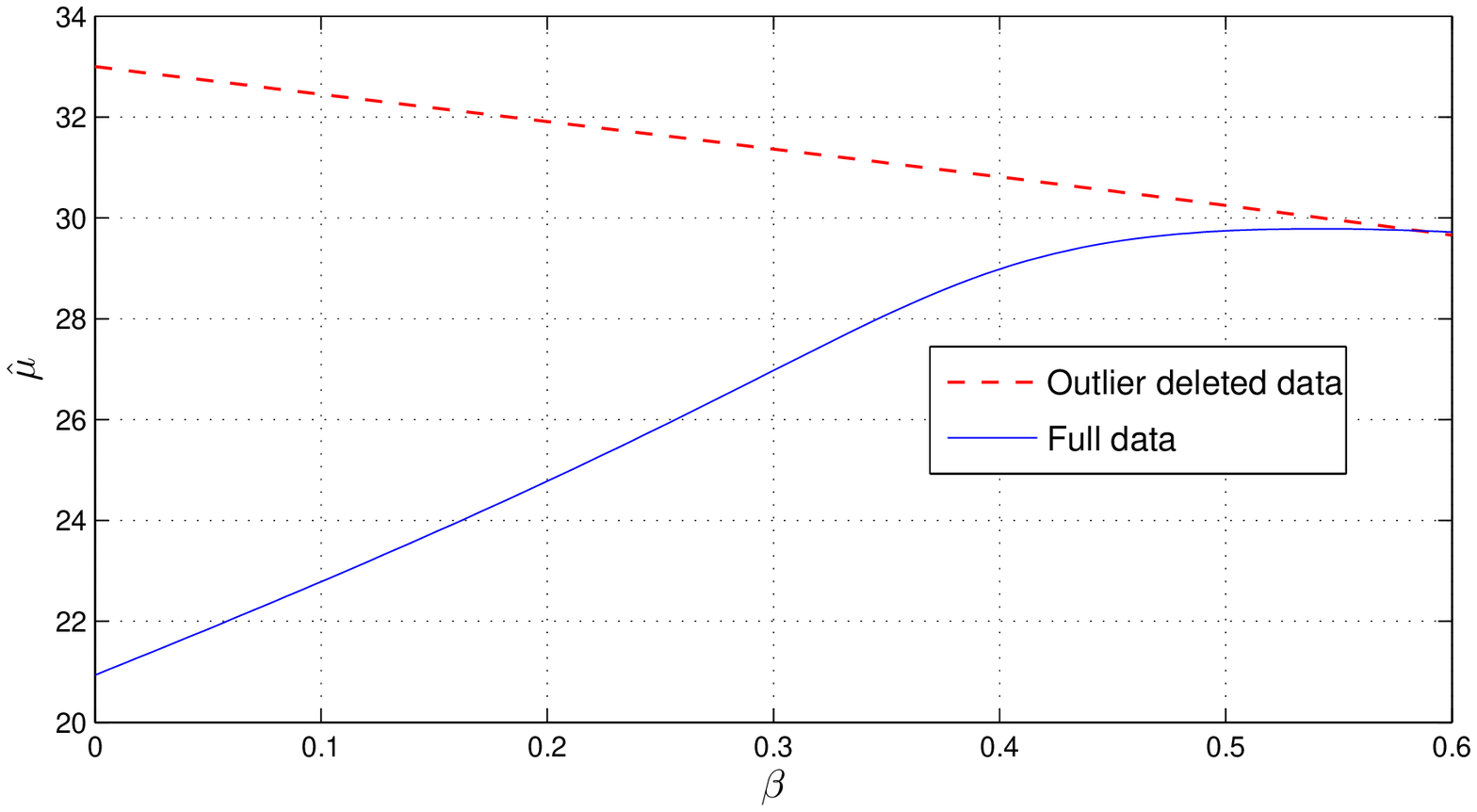} 
\end{tabular}%
\caption{(a) Two sided $p$-values of the density power divergence tests and 
(b) estimates of $\protect\mu$ for different values of $\protect\beta$ in case of Darwin's fertilization data.}
\label{fig:Darwin}
\end{figure}
%
%
%


\subsubsection{One Sided Tests}
\label{SEC:One_Sided}
In general, the default alternative hypotheses considered in our proposed tests are of the two sided type. Depending on the nature of the problem and the dimension of the parameter, one sided alternatives may sometimes be of interest. For the telephone fault data the primary interest could be in determining whether the mean fault rate is higher than zero (rather than simply whether it is different from zero). It is presumable that Darwin's interest in the fertilization problem was to determine whether cross fertilization leads to a higher growth rate compared to self fertilization; indeed the result of the test performed by R. A. Fisher (reported in \citealp{fisher1935design}) on the plant fertilization data relates to the one sided alternative. In this subsection we consider appropriate one sided tests for these two real data examples presented earlier in this section. For this purpose we consider the signed divergence statistic (the signed square root of the statistic presented in (\ref{dpd_stat_normal})) as was done in \cite{MR999667}. The relevant one sided $p$-values are determined using the normal approximation, or that based on the $t$-distribution. In the following we will describe the problem of testing $H_0: \mu = 0$ against $H_1: \mu > 0$ under the normal model with unknown scale. 

The formal theory of constrained statistical inference (see \citealp{MR2099529}) established the expression of the asymptotic likelihood ratio test for the hypotheses $H_0: \mu = 0$ against $H_1: \mu > 0$ to be
\begin{equation*}
T_{\gamma =0,\beta =0}^{(1)}=I(\bar{X}>0)T_{\gamma =0}(\boldsymbol{\hat{%
\theta}}_{\beta =0},\boldsymbol{\tilde{\theta}}_{\beta =0})=I(\bar{X}%
>0)n\log \left( \tfrac{\tilde{\sigma}_{n}^{2}}{\hat{\sigma}_{n}^{2}}\right) 
\end{equation*}%
with asymptotic distribution equal to $\tfrac{1}{2}\chi _{0}^{2}+\tfrac{1}{2}%
\chi _{1}^{2}$ under $H_0$, where $\chi _{0}^{2}=0$ a.s., $I(\cdot )$ is
the indicator function and $\boldsymbol{\hat{\theta}}_{\beta =0}$, $%
\boldsymbol{\tilde{\theta}}_{\beta =0}$ are respectively the MDPDE
and RMPDE for $\boldsymbol{\theta }=(\mu ,\sigma )^{T}$ when the parameter
spaces are the unrestricted and restricted ones of the two sided test (\ref{A}). This test is almost the same as the one provided by the signed divergence likelihood ratio test statistic, 
\begin{equation*}
\widetilde{T}_{\gamma =0,\beta =0}^{(1)}=sign(\bar{X})\sqrt{T_{\gamma =0}(%
\boldsymbol{\hat{\theta}}_{\beta =0},\boldsymbol{\tilde{\theta}}_{\beta =0})}%
=sign(\bar{X})\sqrt{n\log \left( \tfrac{\tilde{\sigma}_{n}^{2}}{\hat{\sigma}%
_{n}^{2}}\right) },
\end{equation*}%
since the corresponding $p$-values at $t=n\log \left( \frac{\tilde{\sigma}%
_{n}^{2}}{\hat{\sigma}_{n}^{2}}\right) $ are given by%
\begin{align*}
p\mbox{-value}_{T_{\gamma =0,\beta =0}^{(1)}}(I(\bar{x}>0)t)& =\frac{1}{2}\Pr (\chi _{1}^{2}>I(\bar{x}>0)t), \\
p\mbox{-value}_{\widetilde{T}_{\gamma =0,\beta =0}^{(1)}}\left( sign(\bar{x%
})\sqrt{t}\right) & =\Pr (Z>sign(\bar{x})\sqrt{t}),
\end{align*}%
where $Z$ follows standard normal distribution. This means that if $\bar{x}>0$, both $p$-values\ are equal, whereas for $\bar{x}\leq 0$, 
\[ p\mbox{-value}_{T_{\gamma =0,\beta
=0}^{(1)}}(I(\bar{x}>0)t)=1>p\mbox{-value}_{\widetilde{T}_{\gamma =0,\beta
=0}^{(1)}}\left( sign(\bar{x})\sqrt{t}\right) >\frac{1}{2}.
\]
Both tests are in practice equivalent, and such a difference for big $p$%
-values comes from the fact that $\widetilde{T}_{\gamma =0,\beta =0}^{(1)}$
is formally more appropriate for $H_{0}:\mu \leq 0$ against $H_{1}:\mu >0$.
We shall restrict ourselves, for simplicity, only to the signed divergence likelihood ratio test statistics and their DPD based analogues;
the latter class of signed divergence DPDTS may be defined as 
$$\widetilde{T}_{\gamma,\beta}^{(1)} = sign(\hat{\mu}_{\beta })\left\{ \frac{\widetilde{\sigma }_{\beta } ^{\gamma } \sqrt{\gamma +1}\left( 2\beta +1\right) ^{3/2} \left( 2\pi \right) ^{\gamma
/2}}{\left( \beta +1\right) ^{3}} T_{\gamma }(\boldsymbol{\hat{\theta}}_{\beta },\boldsymbol{\tilde{\theta}}_{\beta })\right\}^{-1/2}$$
with asymptotic
distribution equal to the standard normal under $H_0$. 
In calculating the one sided $p$-values based on the signed divergence Hellinger distance test in case of the telephone fault data, \cite{MR999667} used an approximation based on the $t$-distribution, as the sample size was only 14.  In large samples, the distribution of the statistic is approximately normal. 
The
problem for a normal distribution with dimension bigger than one with
inequality restrictions is more complicated and requires a specific theory
based on \cite{MR2099529}. \cite{martin2013hypothesis} 
illustrate the procedure of handing this problem when $\phi$-divergence based
test statistics and MLEs are applied.



\bigskip


\noindent {\bf Telephone-fault data} The one sided $p$-values for the signed divergence DPDTSs corresponding to $\beta = 0.15$ and 0.3 are presented in Table \ref{tab:tele} for the full as well as outlier deleted data, using both the standard normal 
($Z$) and $t$ (with suitable degrees of freedom) approximations. The result for the ordinary one-sided $t$-test are also presented for comparison. The presence of the large outlier masks the 
significance in case of the $t$-test, but the signed divergence DPDTSs provide consistent significant results with and without the outlier. Similar results were reported by \cite{MR999667} with the signed divergence Hellinger distance test. The mean of the ordered differences between the inverse test rates and inverse control rates does appear to be greater than zero.

\begin{table}
\caption{$p$-values of the one sided tests for the mean in case of the telephone-fault data.}
\label{tab:tele}
\begin{center}
\begin{tabular}{ccccccccc}
 \hline
        & \multicolumn{8}{c}{Scenario}\\ \cline{2-9}
cutoff & \multicolumn{2}{c}{$t$-test} & & \multicolumn{2}{c}{DPD(0.15)} & & \multicolumn{2}{c}{DPD(0.3)}\\ \cline{2-3} \cline{5-6} \cline{8-9} 
       & Full & Deleted & & Full & Deleted & & Full & Deleted\\
\hline
$Z$ & -- & -- & & 0.0006 & 0.0019 & & 0.0013 & 0.0017\\
$t$ & 0.3481 & 0.0037 & & 0.0032 & 0.0068 & & 0.0050 & 0.0064\\
\hline
\end{tabular} 
\end{center}
\end{table}

\bigskip

\noindent {\bf Darwin's plant fertilization data} The results are presented in Table \ref{tab:dar}. The full data $p$-value was reported by \cite{fisher1935design}. In this case the one sided $p$-values for the $t$-test  lead to a shift from marginal significance to solid rejection due to the deletion of the (two) outliers. This also seems 
to be the case for signed divergence DPDTSs for very small values of $\beta$. However, larger values of $\beta$ lead to a more consistent behavior of the tests. This dataset requires stronger downweighting compared to the telephone-fault data, as the outliers here are less extreme, and therefore more difficult to identify. Under a suitable robust test, it appears that the mean growth of cross fertilized plants would be declared to be significantly higher than self fertilized plants.

\begin{table}
\caption{$p$-values of the one sided tests for the mean in case of the Darwin's plant fertilization data.}
\begin{center}
\begin{tabular}{ccccccccc}
 \hline
        & \multicolumn{8}{c}{Scenario}\\ \cline{2-9}
cutoff & \multicolumn{2}{c}{$t$-test} & & \multicolumn{2}{c}{DPD(0.15)} & & \multicolumn{2}{c}{DPD(0.3)}\\ \cline{2-3} \cline{5-6} \cline{8-9} 
       & Full & Deleted & & Full & Deleted & & Full & Deleted\\
\hline
$Z$ & -- & -- & & 0.0081 & $< 10^{-4}$ & & 0.0017 & $< 10^{-4}$\\
$t$ & 0.0252 & $< 10^{-4}$ & & 0.0153 & 0.0008 & & 0.0055 & 0.0009\\
\hline
\label{tab:dar}
\end{tabular} 
\end{center}
\end{table}


\subsection{Simulation Results}
\label{SEC:simulation}

\subsubsection{Normal Case} \label{sec:sim_mormal}

To further explore the performance of our proposed test statistic in case of
the $\mathcal{N}(\mu ,\sigma ^{2})$ problem, we studied the behavior of the
tests through simulation. We considered the hypothesis $H_{0}:\mu =0$
against the alternative $H_{1}:\mu \neq 0$ with $\sigma ^{2}$ unknown when
data were generated from the $\mathcal{N}(0,1)$ distribution. Subsequently,
the same hypotheses were tested when the data were generated from the $%
\mathcal{N}(1,1)$ distribution. In the first case our interest was in
studying the observed level (measured as the proportion of test statistics
exceeding the chi-square critical value in a large number -- here 10000 --
of replications) of the test under the correct null hypothesis, and in the
second case we were interested in the observed power (obtained in a similar manner as
above) of the test under the incorrect null hypothesis. The results are
given in Figures \ref{fig:normal}(a) and \ref{fig:normal}(b). In either case the
nominal level was $0.05$. We have used the ordinary $t$-test together with several DPD test statistics, corresponding to $\beta = 0, 0.1, 0.15$ and $0.25$, in this particular study. The horizontal lines in Figure \ref{fig:normal}(a), and later in Figure \ref{fig:normal}(c), represent the nominal level of 0.05. 

It may be noticed that all the tests excepting the exact likelihood ratio test
(the $t$-test) are slightly liberal for very small sample sizes and lead to
somewhat inflated observed levels. However this discrepancy decreases
rapidly, and by the time the sample size is 30 or more the  observed
levels have settled down reasonably around acceptable values.  The observed powers of the tests as given in
Figure \ref{fig:normal}(b) are, in fact, extremely close; in very small sample
sizes the other tests have slightly higher power than the $t$-test, but this
must be a consequence of the observed levels of these tests being higher
than the latter for such sample sizes. On the whole the proposed tests appear to
be quite competitive to the ordinary $t$-test for pure normal data.

To evaluate the stability of the level of the tests under contamination, we repeated the tests for $H_0: \mu = 0$ against $H_1: \mu \neq 0$ under data generated from the mixture of $N(0, 1)$ and $N(-10, 1)$, where the mixing weight of the first component is 0.9. To illustrate the stability of power, the tests were performed with data generated under a mixture of $N(1, 1)$ and $N(-10, 1)$, where the mixing weight of the first component is again 0.9. The results are given in \ref{fig:normal}(c) and \ref{fig:normal}(d) respectively.


In this case there is a drastic and severe inflation in the observed level
of the $t$-test and that of the DPD(0) test. As $\beta$ increases, however,
the resistant nature of the tests are clearly apparent. By the time $\beta
= 0.25$, the levels have already been reduced to acceptable values. The
opposite behavior is seen in case of power. There appears to be a complete
breakdown in power for small values of $\beta$, but the power remains quite
stable for values of $\beta$ equal to 0.25 or greater.

On the whole it appears to be fair to claim that for sample sizes equal to
or larger than 30 the efficiency of many of our DPDTSs are very close to
the efficiency of the $t$-test, but the robustness properties of our tests
are often significantly better than the $t$-test in terms of maintaining the
stability of both the level and power.

\begin{figure}[tbp]
\centering%
\begin{tabular}{rl}
\includegraphics[height=7.5cm, width=8cm]{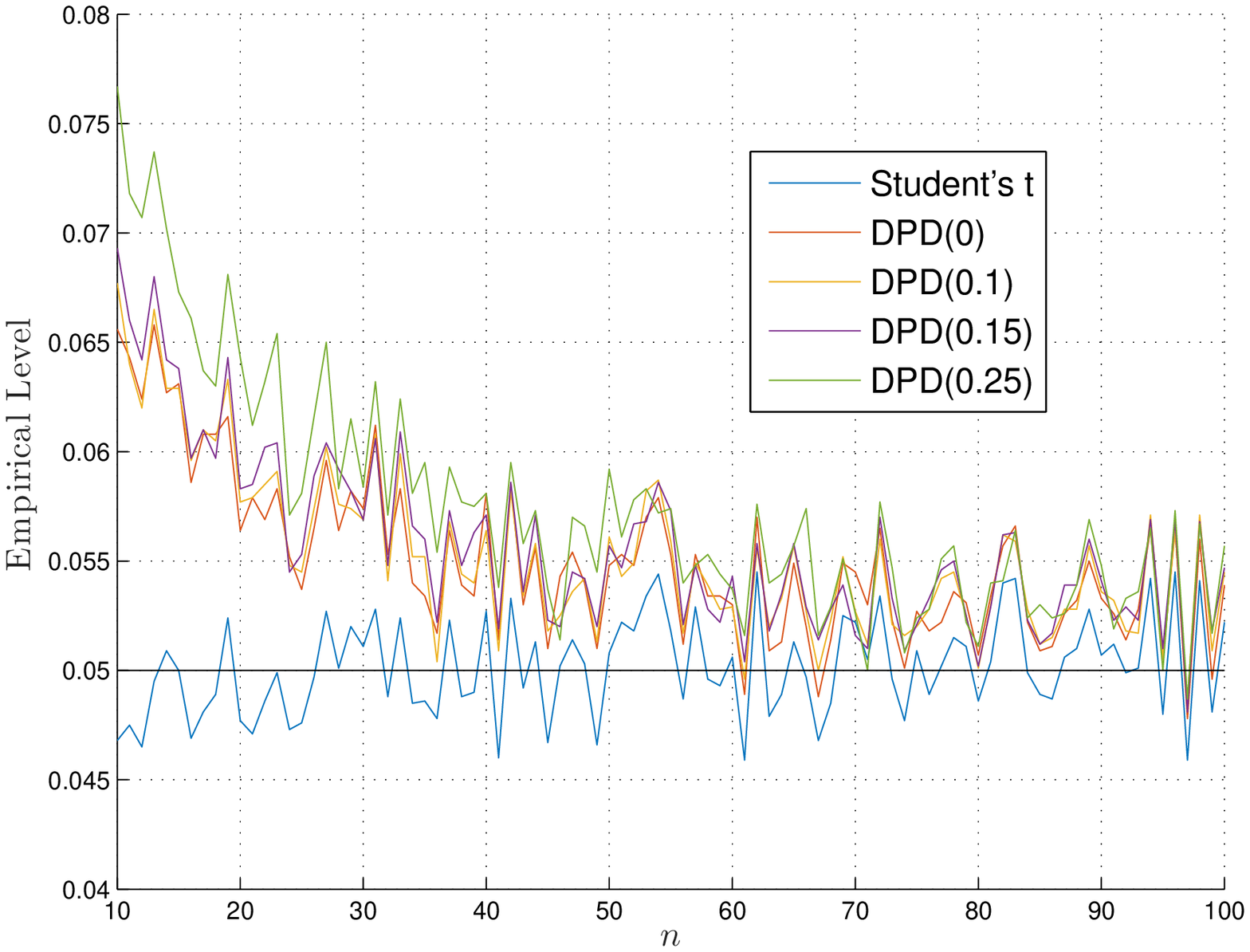}\negthinspace &
\negthinspace \includegraphics[height=7.5cm, width=8cm]{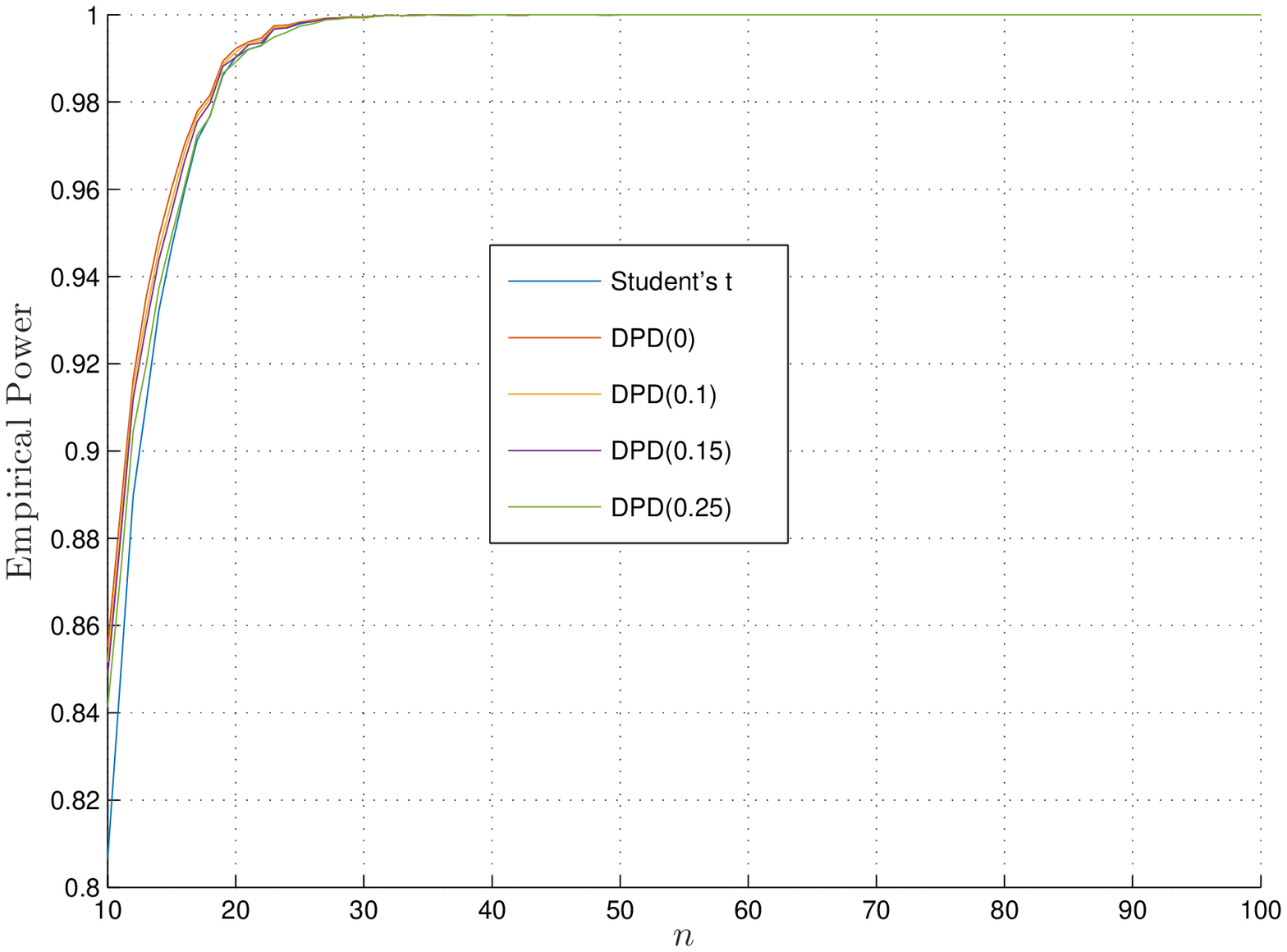} \\
\multicolumn{1}{c}{\textbf{(a)}} & \multicolumn{1}{c}{\textbf{(b)}} \\
\includegraphics[height=7.5cm, width=8cm]{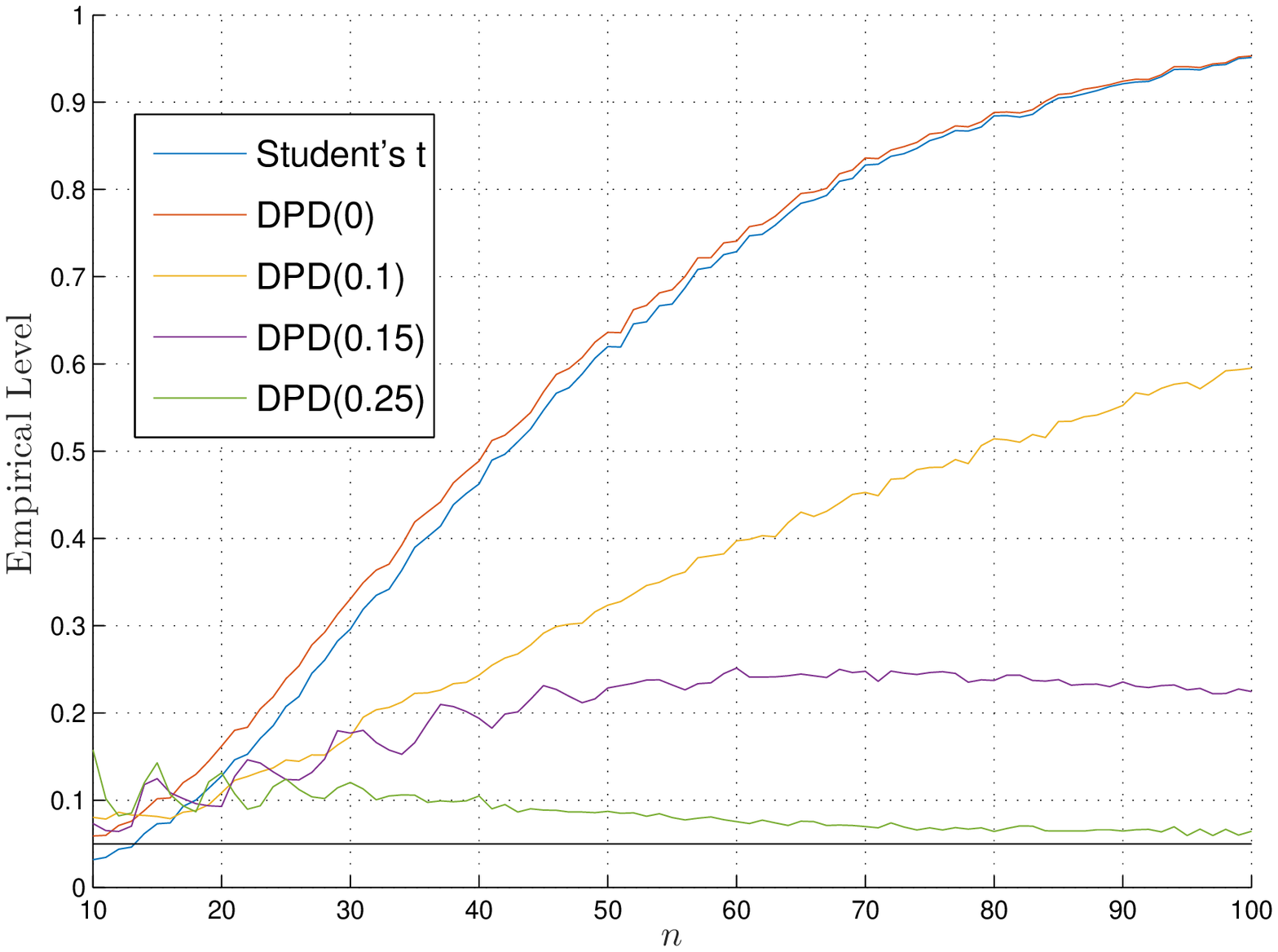}\negthinspace &
\negthinspace \includegraphics[height=7.5cm, width=8cm]{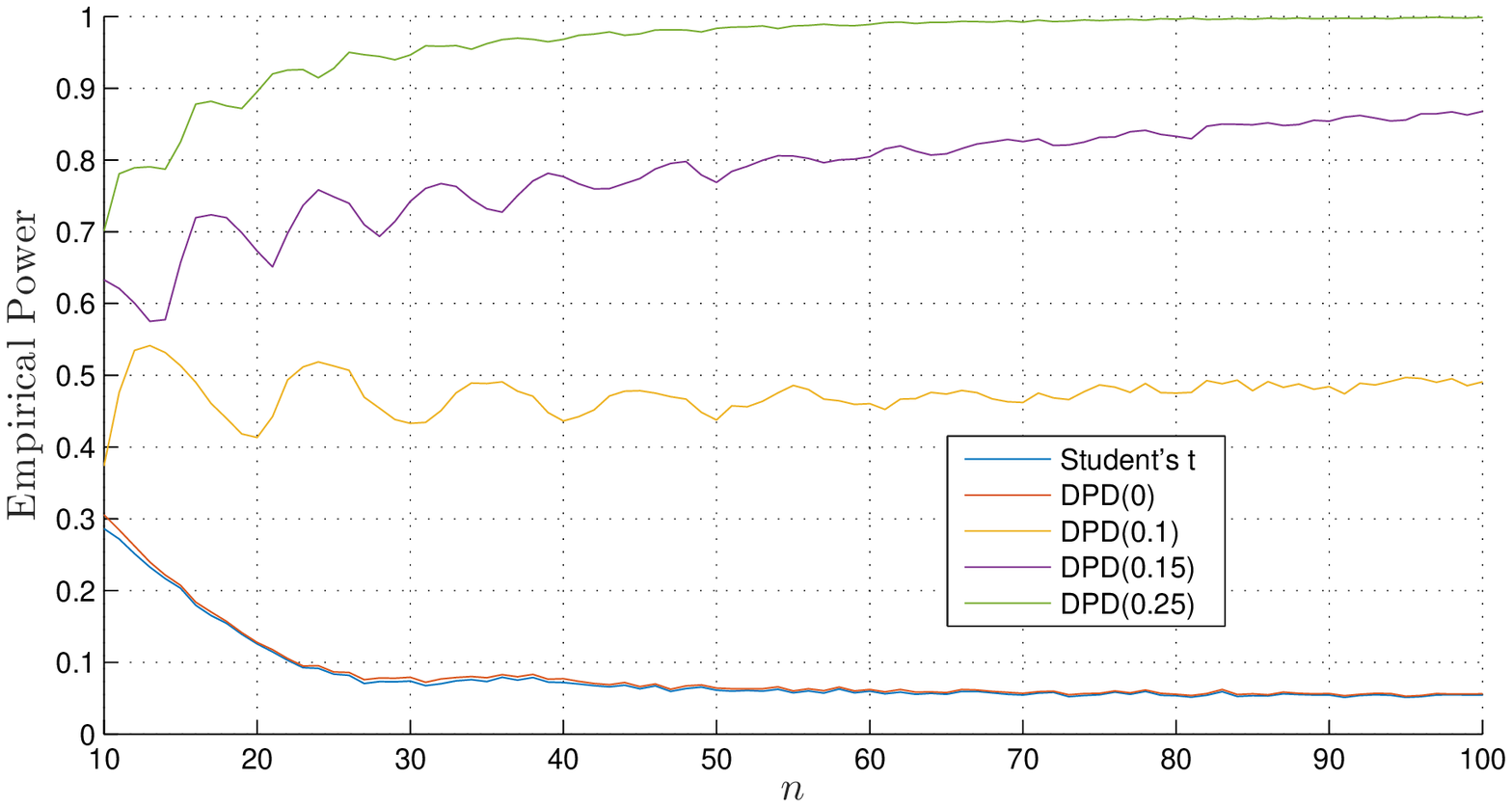} \\
\multicolumn{1}{c}{\textbf{(c)}} & \multicolumn{1}{c}{\textbf{(d)}}%
\end{tabular}%
\caption{Simulated levels and powers of the DPDTSs for pure and contaminated
data in case of the normal distribution.}
\label{fig:normal}
\end{figure}

\begin{figure}
\centering%
\begin{tabular}{rl}
\includegraphics[height=7.5cm, width=8cm]{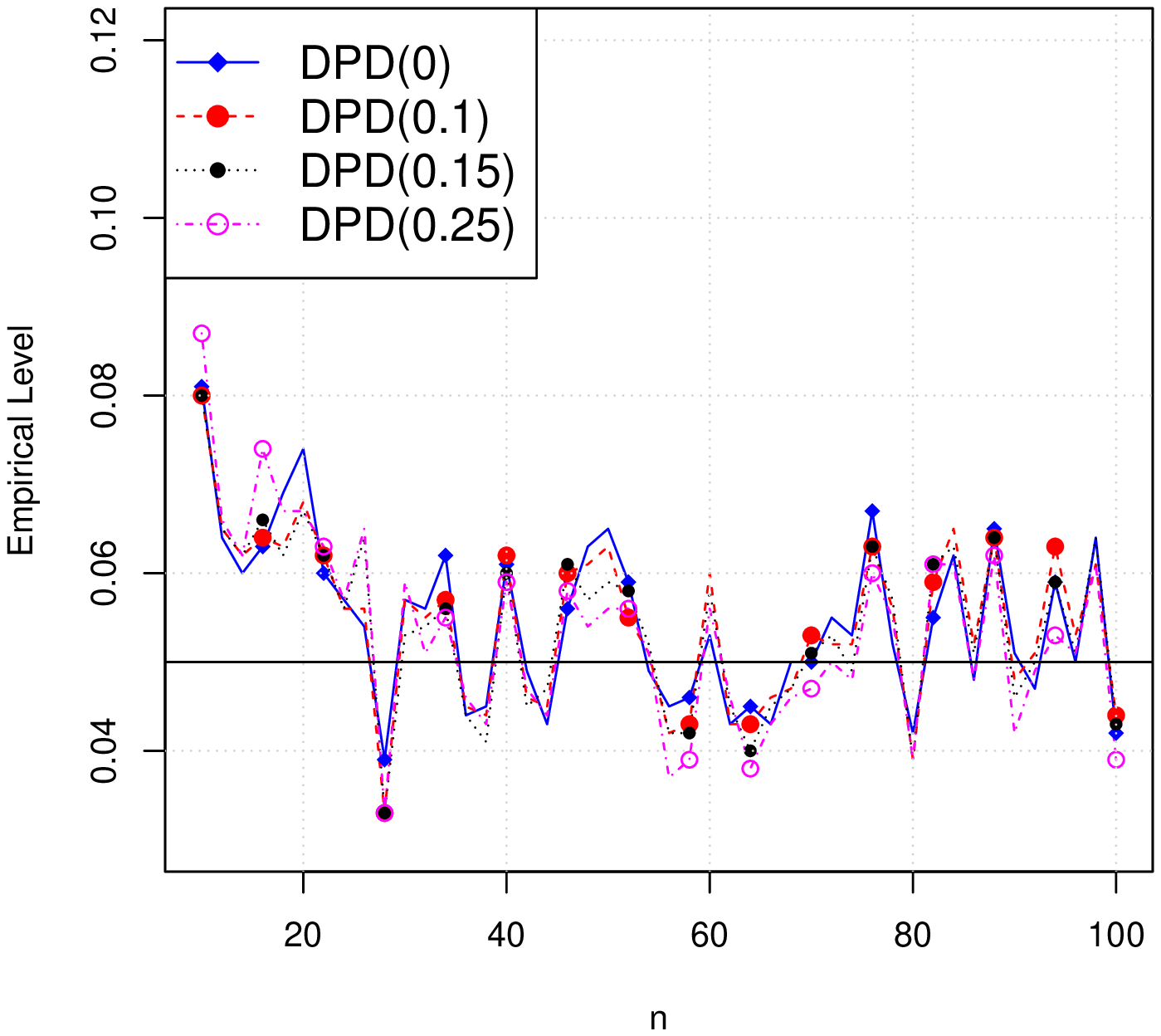}\negthinspace &
\negthinspace \includegraphics[height=7.5cm, width=8cm]{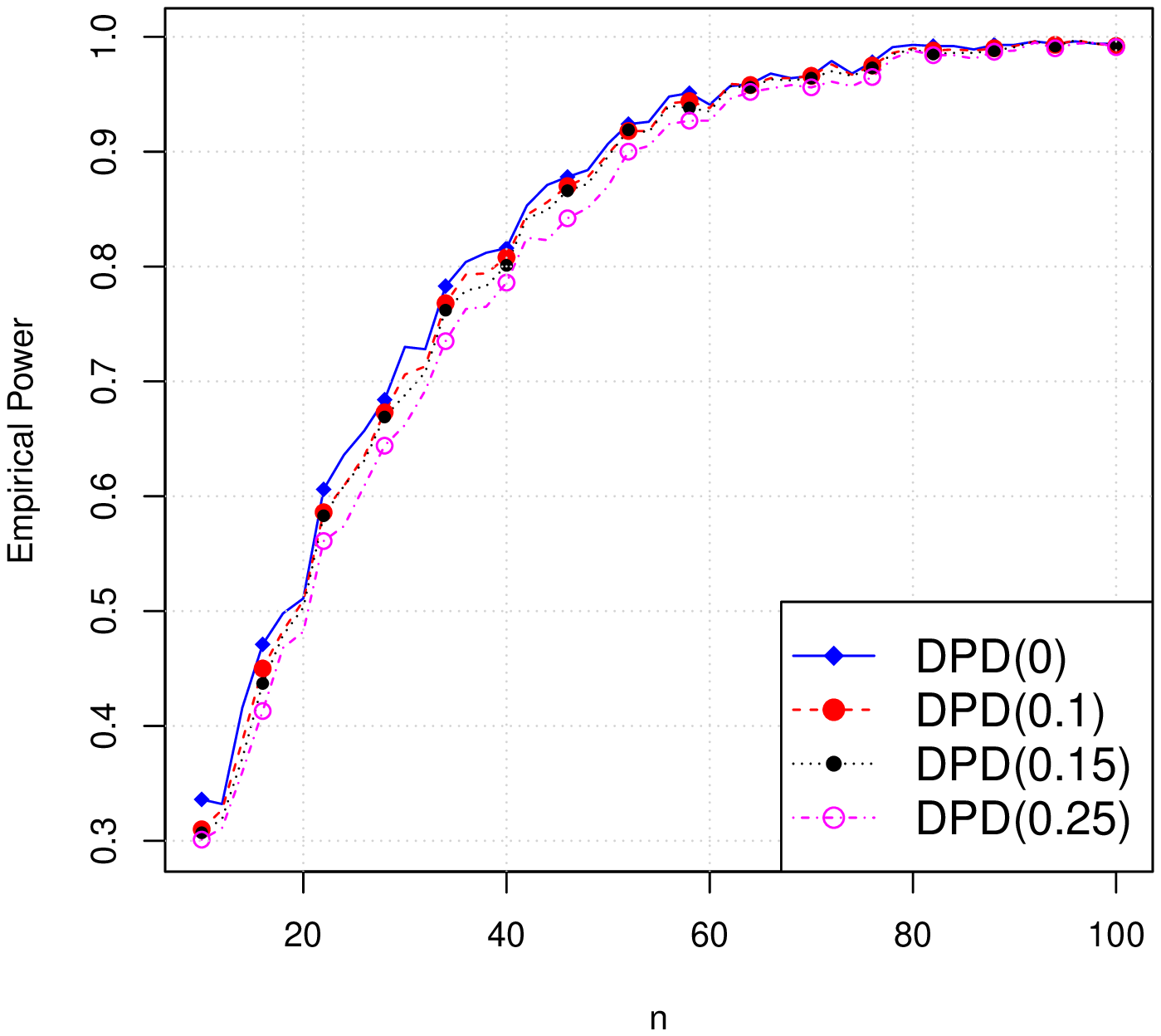} \\
\multicolumn{1}{c}{\textbf{(a)}} & \multicolumn{1}{c}{\textbf{(b)}} \\
\includegraphics[height=7.5cm, width=8cm]{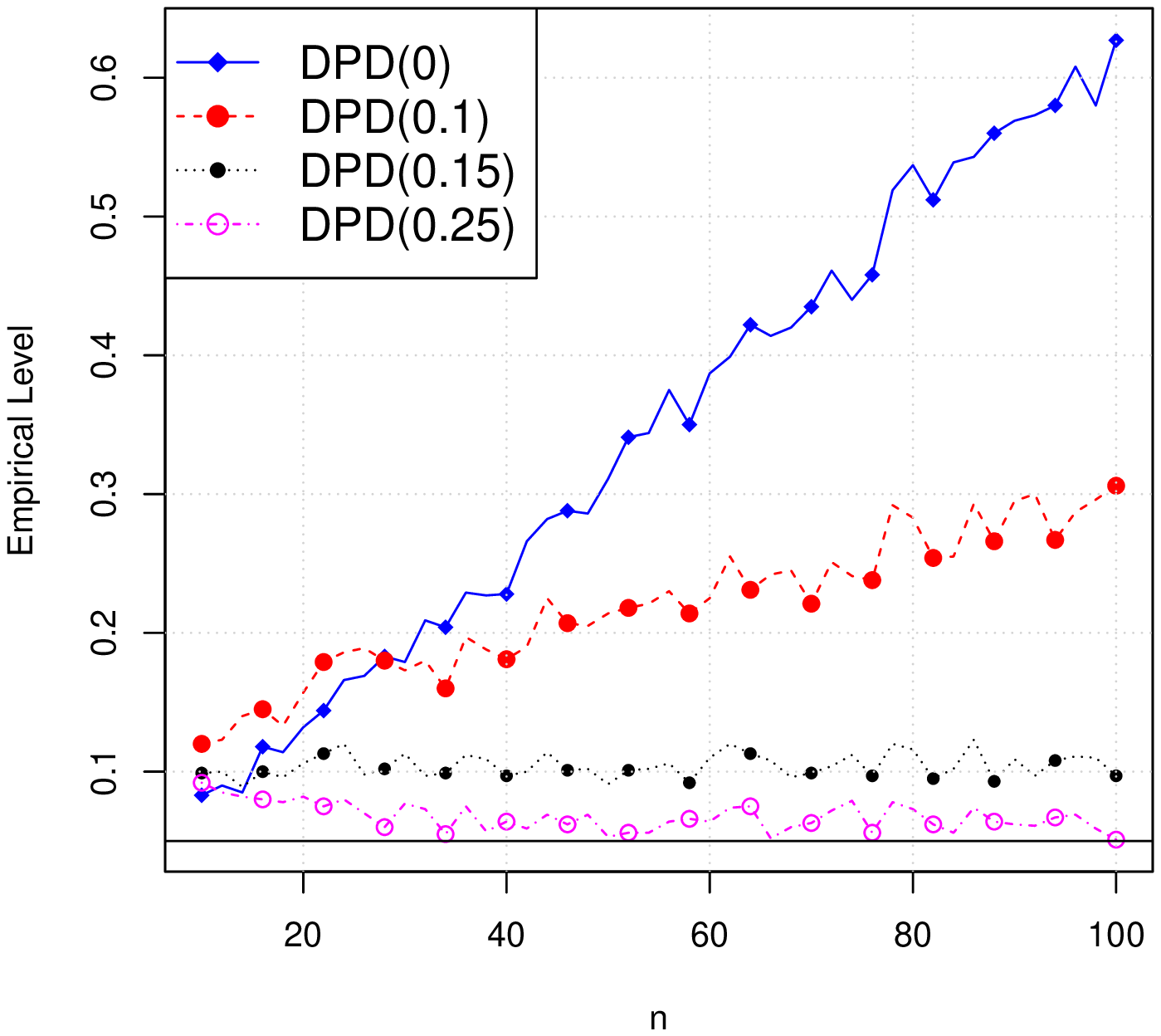}\negthinspace &
\negthinspace \includegraphics[height=7.5cm, width=8cm]{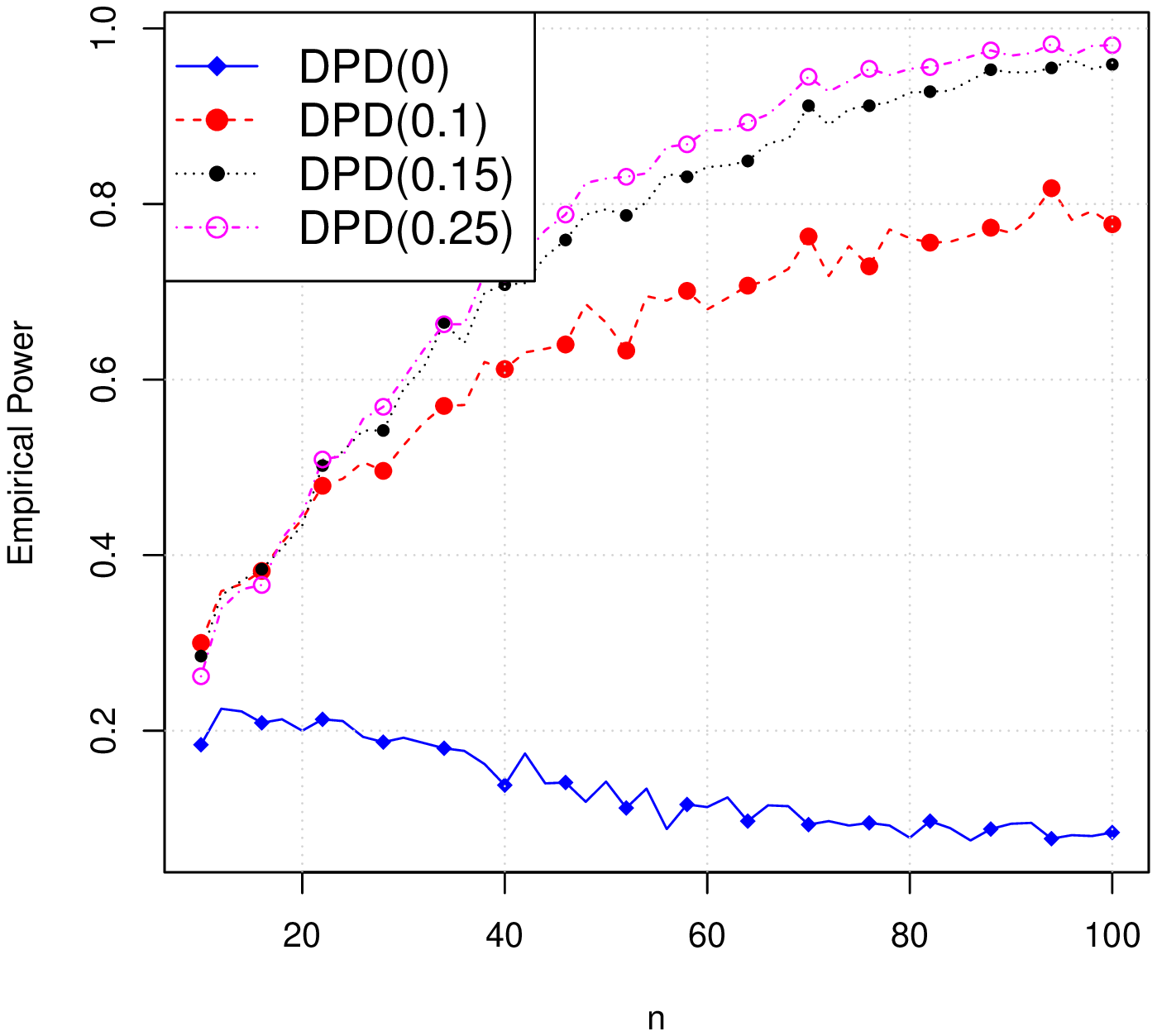} \\
\multicolumn{1}{c}{\textbf{(c)}} & \multicolumn{1}{c}{\textbf{(d)}}%
\end{tabular}%
\caption{Simulated levels and powers of the DPDTSs for pure and contaminated
data in case of the Weibull distribution.}
\label{fig:weibull}
\end{figure}
\subsubsection{Weibull Case}
As we have mentioned before, it is important to demonstrate the properties of the proposed method in models other than the normal so that one has a better idea about the scope of the method. Accordingly we performed tests of composite hypotheses under the Weibull model in the spirit of Section \ref{sec:sim_mormal}. 
Let us consider the hypothesis defined in (\ref{w}), where $\sigma_0$ is taken to be 1.5. In the first study we have 
generated data from the $\mathcal{W}(1.5,1.5)$ distribution. The plot for the observed level for the hypothesis $H_0:\sigma = 1.5$ against the two sided alternative is given in Figure \ref{fig:weibull}(a), where we have used 1,000 replications. Next
the same hypotheses were tested when the data were generated from the $%
\mathcal{W}(1.1,1.5)$ distribution. The observed power function is plotted in Figure \ref{fig:weibull}(b) for different values of $\beta$. The powers are remarkably close. In all cases the nominal level was $0.05$.

To evaluate the stability of the level and the power of the tests under
contamination, we repeated the tests with data generated from the Weibull mixture consisting of 95\% $\mathcal{W}(1.5,1.5)$ and 5\% $\mathcal{W}%
(25,1.5)$, and then from mixture of 95\% $\mathcal{W}(1.1,1.5)$ and 5\% $\mathcal{W}(25,1.5)$. In either case the first larger component is our target.
 In Figures \ref{fig:weibull}(c), the levels of the statistics under the contamination of first type are presented indicating the stability of levels for moderately large values of $\beta$. Figure \ref{fig:weibull}(d) demonstrates the stability of powers under contaminated data of the second type for the same values of $\beta$.

 \subsubsection{Comparison with Other Robust Tests}

Here we provide a comparison of our proposed tests with some other popular resistant tests available in the literature. In particular we have used a parametric test -- the Winsorized test of \cite{dixon1968approximate} together with three nonparametric tests -- the one sample Kolmogorov-Smirnov (KS) test, the two sided Wilcoxon signed rank test and the two sided sign test. 
The model, the hypotheses, the parameters chosen, the level of significance and other details of the set up of this simulation are the same as those in Section \ref{sec:sim_mormal}. 

We have Winsorized the 15\% extreme observations on each tail of the data distribution in case of the Winsorized $t$-test. 
Note that the null hypotheses are slightly different for the nonparametric tests. For the KS-test we first standardize the data using robust statistics, and then test whether the corresponding distribution is a standard normal. The data are standardized using the transformation $Z = (X - \mu_0)/\mbox{MAD}.$ Here $\mu_0$ is the null value and ${\rm MAD}$ is $1.4826 \times$(median absolute deviation about the median). In case of the Wilcoxon test and the sign test we perform tests for the population median without making any parametric model assumptions. For comparison just one DPDTS is used in these simulations, that corresponding to the tuning parameter $\beta = 0.25$. To emphasize the robustness properties of these tests we have also included the Student's $t$-test in these investigations, so that the robust tests stand out in contrast. Our simulation results are presented in Figure \ref{fig:normal_comp}. 

From Figure \ref{fig:normal_comp}(a) it may be observed that the empirical levels of the Winsorized $t$-test, the KS-test  and the Wilcoxon test are very close to the nominal level for pure normal data. For small sample sizes the DPDTS is slightly liberal; however even at a sample size of 30, it is off by only one percent compared to the nominal level. On the other hand the sign test is a bit conservative, even at fairly large samples. The observed powers of all the tests in Figure \ref{fig:normal_comp}(b) rapidly approach unity in fairly small samples. The results in Figure \ref{fig:normal_comp}(c) demonstrate that for contaminated data all tests except the DPDTS fail to maintain the nominal level. The observed level of the sign test is close to the nominal level for small sample sizes but eventually as the sample size increases it also breaks down. The powers of the tests for the contaminated data, plotted in Figure \ref{fig:normal_comp}(d) show that  all the robust tests exhibit stable power. For small sample sizes the DPDTS exhibits the highest power. The overall observation on the basis of all the above appears to be that the DPD based test is superior to the classical Wald test under contamination, and is also competitive or better than several other standard resistant tests in terms of robustness, at least to the extent this particular simulation study is concerned. 


\begin{figure}
\centering%
\begin{tabular}{rl}
\includegraphics[height=7.5cm, width=8cm]{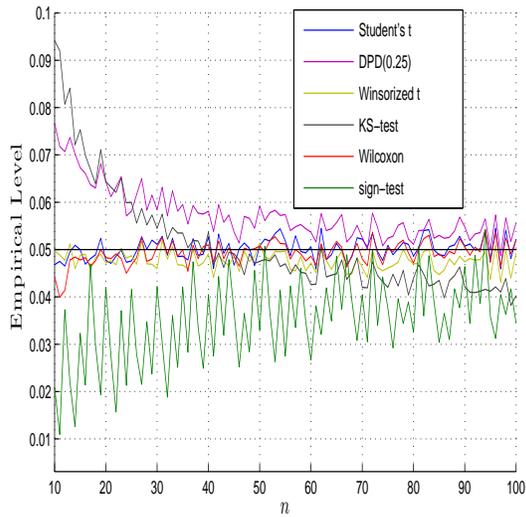}\negthinspace &
\negthinspace \includegraphics[height=7.5cm, width=8cm]{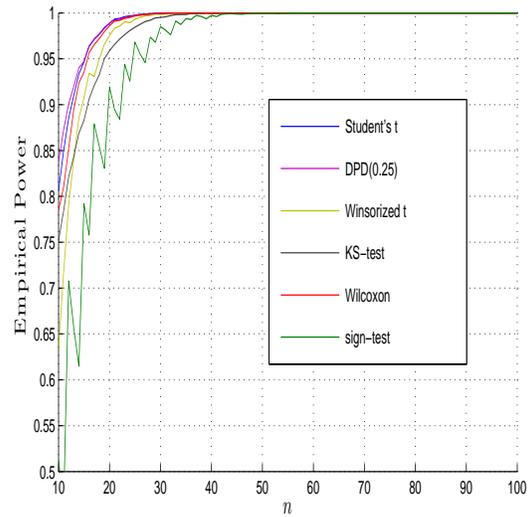} \\
\multicolumn{1}{c}{\textbf{(a)}} & \multicolumn{1}{c}{\textbf{(b)}} \\
\includegraphics[height=7.5cm, width=8cm]{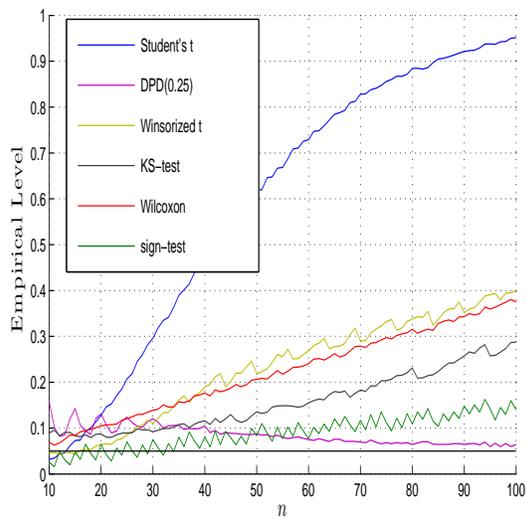}\negthinspace &
\negthinspace \includegraphics[height=7.5cm, width=8cm]{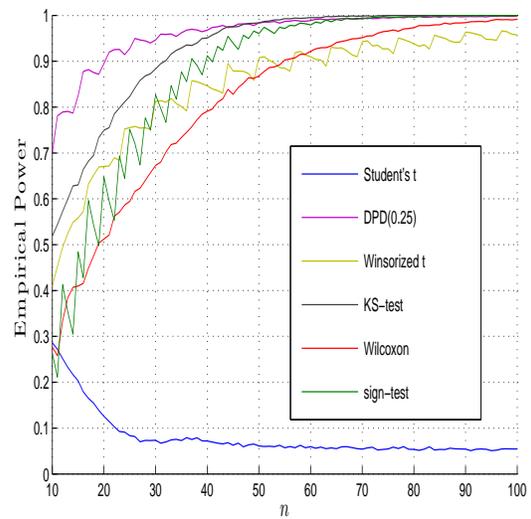} \\
\multicolumn{1}{c}{\textbf{(c)}} & \multicolumn{1}{c}{\textbf{(d)}}%
\end{tabular}%
\caption{Simulated levels and powers of some robust tests for pure and contaminated
data in case of the normal distribution.}
\label{fig:normal_comp}
\end{figure}

\section{Choosing the Tuning Parameter}\label{tuning}

By construction, the test statistic in (\ref{TRES.1}) employs two different tuning parameters $\beta$ and $\gamma$. These
parameters have two different roles, in two different stages in the hypothesis testing process. The parameter 
$\beta$ is used to evaluate the robust unconstrained and constrained (under the null hypothesis) estimators. In the 
next stage a density power 
divergence with parameter $\gamma$ is constructed to quantify the disparity between the fitted unrestricted and the 
restricted models. As seen in Theorem \ref{Theorem3}, the null distribution of the statistic can be derived for all values of 
the parameters $\beta, \gamma > 0$, and in practice one can choose them independently of one another. As the 
robustness of the test statistic depends primarily on the robustness of the estimators, the choice of the parameter 
$\beta$ turns out to be more critical in our testing procedure. In repeated simulations
(not presented here) our observation is that the parameter $\gamma$ does not have a significant impact on the 
robustness of the procedure. Thus while the generality of the method allows us the choice of possibly different 
tuning parameters, for simplicity of implementation we will let $\beta = \gamma$, so that the selection problem 
reduces to that of a single parameter. Throughout the paper, we have used $\beta = \gamma$ in our simulations and 
real data examples. 

In a real situation, the experimenter will require some guidance on the choice of this single tuning parameter $\beta$. \cite{broniatowski2012decomposable} have reported that values of $\beta \in [0.1, 0.25]$ are often reasonable choices; we largely agree with this view, although tentative outliers and heavier contamination may require greater downweighting through a larger value of $\beta$; this is the case, for example, in Darwin's plant fertilization data example. However, apart from fixed choices, other data driven and adaptive choices could also be useful, as one can then tune the parameter to make the procedure more robust as required.
%
In this paper we follow the approach of \cite{Warwick} for this purpose, which minimizes an empirical 
measure of the mean square error of the estimator to determine the ``optimal'' tuning parameter. This requires the use
of a robust pilot estimator of the parameter. \cite{Warwick} suggested the use of the MPDPDE corresponding
to $\beta = 1$. The optimal parameter depends on the choice of the pilot estimator, however, and as larger values of
$\beta$ lead to a loss in efficiency, \cite{MR3117102} suggested the choice of the MDPDE with $\beta = 0.5$ as the 
pilot estimator, which appears to be reasonable in most cases. Our subsequent analysis is based on the \cite{Warwick} method, with the \cite{MR3117102} modification. 

We do acknowledge that the criterion to be considered for the choice of the optimal $\beta$ for the testing problem is 
not necessarily the same for the estimation problem. In hypothesis testing the appropriate criterion should involve a
suitable linear combination of the inflation in the observed level under the null and the drop in power under contiguous
alternatives in a contaminated scenario. However, an appropriate measure of this sort is not easy to construct. As it 
appears that the robustness of the proposed tests correspond almost exactly to the robustness of the MDPDEs, we feel 
that the optimal choice of $\beta$ as described in the previous paragraph would generally work reasonably well in case 
of the hypothesis testing problem also. As of now, we recommend the choice of $\beta$ according to the above recipe. 

The above criterion leads to estimated optimal choices of $\beta$ to
be 0.1919 for the telephone fault data, and 0.5657 for Darwin's plant fertilization data respectively. As the first 
observation in the telephone fault data is a massive outlier, it is easily recognized by the testing procedures
even at fairly small values of $\beta$. However for Darwins' plant fertilization data the outliers are more tentative, 
and therefore require stronger downweighting to eliminate their effect.  

\section{Concluding Remarks}

\label{SEC:concluding}


This paper provides the appropriate theoretical machinery to perform general
parametric tests of hypotheses based on the density power divergence.
We demonstrate that one can construct a class of parametric tests of 
hypotheses based on the above measure which allows the experimenter to test for composite null hypotheses
 under the presence of nuisance parameters. The tests of this class have been shown to have excellent robustness properties in simulation studies and have a huge scope of application; for the purpose of numerical demonstration we have
chosen the scenario of the usual $t$-test and illustrated that for this
situation the proposed test provides extremely satisfactory results.
%
Similar
improvements are also demonstrated outside the normal model, when data are 
generated from the Weibull distribution. When
considered with the benefit of not requiring any intermediate smoothing
technique as in the case of the Hellinger deviance test, our proposed techniques appear
to prominently stand out among classes of robust tests for composite
hypotheses. Our results also appropriately generalize the results of  \cite{MR3011625}.


\bigskip

\noindent\textbf{Acknowledgments }This work was partially supported by Grant MTM-2012-33740.

\bibliographystyle{abbrvnat}
\bibliography{reference}

\section*{Appendix}
There is some overlap between the Lehmann and Basu et al. conditions. In the following we present the consolidated set of
conditions which are the useful ones in our context. 

\bigskip

\noindent \textbf{Lehmann and Basu et al. conditions}
\begin{itemize}
 \item[(LB1)] The model distributions $F_{\boldsymbol{\theta }}$ of $X$ have common support, so that the set $\mathcal{X} =
\{x|f_{\boldsymbol{\theta }} (x) > 0\}$ is independent of $\boldsymbol{\theta }$. The true distribution $H$ is also supported
on $\mathcal{X}$, on which the corresponding density $h$ is greater than zero.

 \item[(LB2)] There is an open subset of $\omega$ of the parameter space $\Theta$, containing the best fitting parameter $\boldsymbol{\theta }_0$ such that for almost all $x \in \mathcal{X}$, and all $\boldsymbol{\theta } \in \omega$,
the density $f_{\boldsymbol{\theta }} (x)$ is three times differentiable with respect to $\boldsymbol{\theta }$ and the
third partial derivatives are continuous with respect to $\boldsymbol{\theta }$.
 \item[(LB3)] The integrals $\int f_{\boldsymbol{\theta }}^{ 1+\beta} (x)dx$ and $\int f_{ \boldsymbol{\theta }}^\beta (x)h(x)dx$ can be differentiated three
times with respect to $\boldsymbol{\theta }$, and the derivatives can be taken under the
integral sign.

 \item[(LB4)] 
 The $p \times p$ matrix $\boldsymbol{J}_\beta(\boldsymbol{\theta })$, defined  in (\ref{2.3}),
is positive definite. 

 \item[(LB5)] There exists a function $M_{jkl} (x)$ such that
$|\nabla_{jkl} V_{ \boldsymbol{\theta }} (x)| \leq M_{jkl} (x)$ for all $\boldsymbol{\theta } \in \omega$,
where $E_h [M_{jkl} (X)] = m_{jkl} < \infty$ for all $j$, $k$ and $l$, where $V_\theta(x)$ is as defined in (\ref{B2}).
\end{itemize}

\noindent \textbf{Proof of Theorem \ref{theorem:dpd} } 
This proof closely follows the approach of \cite{MR2566692}. Let%
\begin{equation}
h_{n}(\boldsymbol{\theta })=\frac{1}{1+\beta }\left[ \int f_{\boldsymbol{%
\theta }}^{1+\beta }(x)dx-\left( 1+\frac{1}{\beta }\right) \frac{1}{n}%
\sum_{i=1}^{n}f_{\boldsymbol{\theta }}^{\beta }(X_{i})\right]  \label{EQ:Hn}
\end{equation}%
%
%
%
be the function (\ref{B2}) divided by $1+\beta $. By differentiating both
sides of equation (\ref{EQ:Hn}) with respect to $\boldsymbol{\theta }$ we
get
\begin{equation}
\frac{\partial }{\partial \boldsymbol{\theta }}h_{n}(\boldsymbol{\theta }%
)=\int \boldsymbol{u}_{\boldsymbol{\theta }}(x)f_{\boldsymbol{\theta }%
}^{1+\beta }(x)dx-\frac{1}{n}\sum_{i=1}^{n}\boldsymbol{u}_{\boldsymbol{%
\theta }}(X_{i})f_{\boldsymbol{\theta }}^{\beta }(X_{i})  \notag
\end{equation}%
%
%
and differentiating again with respect to $\boldsymbol{\theta }$%
\begin{align}
\frac{\partial }{\partial \boldsymbol{\theta }^{T}}\frac{\partial }{\partial
\boldsymbol{\theta }}h_{n}(\boldsymbol{\theta })& =(1+\beta )\int
\boldsymbol{u}_{\boldsymbol{\theta }}(x)\boldsymbol{u}_{\boldsymbol{\theta }%
}^{T}(x)f_{\boldsymbol{\theta }}^{1+\beta }(x)dx-\int \boldsymbol{I}_{%
\boldsymbol{\theta }}(x)f_{\boldsymbol{\theta }}^{1+\beta }(x)dx
\notag \\
& -\frac{\beta }{n}\sum_{i=1}^{n}\boldsymbol{u}_{\boldsymbol{\theta }}(X_{i})%
\boldsymbol{u}_{\boldsymbol{\theta }}^{T}(X_{i})f_{\boldsymbol{\theta }%
}^{\beta }(X_{i})+\frac{1}{n}\sum_{i=1}^{n}\boldsymbol{I}_{\boldsymbol{%
\theta }}(X_{i})f_{\boldsymbol{\theta }}^{\beta }(X_{i}). \notag
\end{align}%
Here $\boldsymbol{u}_{{\boldsymbol{\theta }}}(x)=\frac{\partial }{\partial
\boldsymbol{\theta }}\log f_{{\boldsymbol{\theta }}}(x)$ and $%
\boldsymbol{I}_{\boldsymbol{\theta }}(x)=-\frac{\partial }{\partial
\boldsymbol{\theta }}\boldsymbol{u}_{\boldsymbol{\theta }}(x)$.
%
%
We assume that the null hypothesis is true, and $\boldsymbol{\theta }_{0} \in \Theta_0$ is the true value of the parameter. Since the model is correct $\frac{\partial }{\partial \boldsymbol{\theta }^{T}}\frac{%
\partial }{\partial \boldsymbol{\theta }}\left. h_{n}(\boldsymbol{\theta }%
)\right\vert _{\boldsymbol{\theta =\theta }_{0}}$ converges in probability
to 
\begin{align}
\lim_{n\rightarrow \infty }\frac{\partial }{\partial \boldsymbol{\theta }^{T}%
}\frac{\partial }{\partial \boldsymbol{\theta }}h_{n}(\boldsymbol{\theta }) \Big\vert _{\boldsymbol{\theta =\theta }_{0}}&
=(1+\beta )\int \boldsymbol{u}_{\boldsymbol{\theta }_{0}}(x)\boldsymbol{u}_{%
\boldsymbol{\theta }_{0}}^{T}(x)f_{\boldsymbol{\theta }_{0}}^{1+\beta
}(x)dx-\int \boldsymbol{I}_{\boldsymbol{\theta }_{0}}(x)f_{%
\boldsymbol{\theta }_{0}}^{1+\beta }(x)dx  \notag \\
& -\beta \int \boldsymbol{u}_{\boldsymbol{\theta }_{0}}(x)\boldsymbol{u}_{%
\boldsymbol{\theta }_{0}}^{T}(x)f_{\boldsymbol{\theta }_{0}}^{1+\beta
}(x)dx+\int \boldsymbol{I}_{\boldsymbol{\theta }_{0}}(x)f_{%
\boldsymbol{\theta }_{0}}^{1+\beta }(x)dx  \notag \\
& =\int \boldsymbol{u}_{\boldsymbol{\theta }_{0}}(x)\boldsymbol{u}_{%
\boldsymbol{\theta }_{0}}^{T}(x)f_{\boldsymbol{\theta }_{0}}^{1+\beta }(x)dx.
\notag
\end{align}%
%
%
Notice that $\lim_{n\rightarrow \infty }\frac{\partial }{\partial
\boldsymbol{\theta }^{T}}\frac{\partial }{\partial \boldsymbol{\theta }}%
h_{n}(\boldsymbol{\theta })\vert _{\boldsymbol{\theta =\theta }_{0}}=\boldsymbol{J}_\beta(\boldsymbol{\theta }_{0})$
defined earlier in equation (\ref{model_variance}).
Since $f_{\boldsymbol{\theta }_{0}}$ represents the true distribution, some
simple algebra establishes that
\begin{equation*}
{\rm E}\left[n^{1/2}\frac{\partial }{\partial \boldsymbol{\theta }}\left. h_{n}(%
\boldsymbol{\theta })\right\vert _{\boldsymbol{\theta =\theta }_{0}}\right]=%
\boldsymbol{0}_{p},~~\mathrm{and ~~ Var} \left[n^{1/2}\frac{\partial }{\partial
\boldsymbol{\theta }}\left. h_{n}(\boldsymbol{\theta })\right\vert _{%
\boldsymbol{\theta =\theta }_{0}}\right]=\boldsymbol{K}_\beta(\boldsymbol{\theta }_{0})
\end{equation*}%
%
%
where $\boldsymbol{K}_\beta(\boldsymbol{\theta }_{0})$ is as
defined in equation (\ref{m_v}). Thus, asymptotically, $n^{1/2}\tfrac{%
\partial }{\partial \boldsymbol{\theta }}\left. h_{n}(\boldsymbol{\theta }%
)\right\vert _{\boldsymbol{\theta =\theta }_{0}}$ has a $\mathcal{N}(%
\boldsymbol{0}_{p},\boldsymbol{K}_\beta(\boldsymbol{\theta }_{0}))$ distribution.

The restricted minimum density power divergence estimator of $\boldsymbol{%
\theta }$, i.e. $\widetilde{\boldsymbol{\theta }}_{\beta }$, will satisfy
\begin{equation}
\left\{
\begin{array}{r}
n\tfrac{\partial }{\partial \boldsymbol{\theta }}\left. h_{n}(\boldsymbol{%
\theta })\right\vert _{\boldsymbol{\theta =}\widetilde{\boldsymbol{\theta }}%
_{\beta }}+\boldsymbol{G}(\widetilde{\boldsymbol{\theta }}_{\beta })%
\boldsymbol{\lambda }_{n}=\boldsymbol{0}_{p}, \\
\boldsymbol{g}(\widetilde{\boldsymbol{\theta }}_{\beta })=\boldsymbol{0}_{r},%
\end{array}%
\right.  \label{EQ:lagrange_restrictions}
\end{equation}%
%
%
%
%
where $\boldsymbol{\lambda }_{n}$ is a vector of Lagrangian multipliers. 
Now
we consider the Taylor expansion of $\tfrac{\partial }{\partial \boldsymbol{%
\theta }}\left. h_{n}(\boldsymbol{\theta })\right\vert _{\boldsymbol{\theta =%
}\widetilde{\boldsymbol{\theta }}_{\beta }}$\ about the point $\boldsymbol{%
\theta }_{0}$%
\begin{equation}
\tfrac{\partial }{\partial \boldsymbol{\theta }}\left. h_{n}(\boldsymbol{%
\theta })\right\vert _{\boldsymbol{\theta =}\widetilde{\boldsymbol{\theta }}%
_{\beta }}=\tfrac{\partial }{\partial \boldsymbol{\theta }}\left. h_{n}(%
\boldsymbol{\theta })\right\vert _{\boldsymbol{\theta =\theta }_{0}}+\frac{%
\partial }{\partial \boldsymbol{\theta }^{T}}\frac{\partial }{\partial
\boldsymbol{\theta }}\left. h_{n}(\boldsymbol{\theta })\right\vert _{%
\boldsymbol{\theta =\theta }_{1}}(\widetilde{\boldsymbol{\theta }}_{\beta }-%
\boldsymbol{\theta }_0),
\label{delh}
\end{equation}%
where $\boldsymbol{\theta }_1$ belongs to the line segment joining $\boldsymbol{\theta }_0$ and $\widetilde{\boldsymbol{\theta }}_{\beta }$.
%
%
%
%
%
%
%
Now using the Khintchine's weak law of large numbers we have
\begin{equation*}
 \frac{%
\partial }{\partial \boldsymbol{\theta }}\frac{\partial }{\partial
\boldsymbol{\theta }^{T}}\left. h_{n}(\boldsymbol{\theta })\right\vert _{%
\boldsymbol{\theta =\theta }_{1}} \underset%
{n\rightarrow \infty }{\overset{\mathcal{P}}{\longrightarrow }} 
\boldsymbol{J}_\beta(\boldsymbol{\theta }%
_{0}).
\end{equation*}
Therefore, from (\ref{delh}) we get
\begin{equation}
n^{1/2}\tfrac{\partial }{\partial \boldsymbol{\theta }}\left. h_{n}(%
\boldsymbol{\theta })\right\vert _{\boldsymbol{\theta =}\widetilde{%
\boldsymbol{\theta }}_{\beta }}=n^{1/2}\tfrac{\partial }{\partial
\boldsymbol{\theta }}\left. h_{n}(\boldsymbol{\theta })\right\vert _{%
\boldsymbol{\theta =\theta }_{0}}+\boldsymbol{J}_\beta(\boldsymbol{\theta }%
_{0})n^{1/2}(\widetilde{\boldsymbol{\theta }}_{\beta }-\boldsymbol{\theta }%
_{0})+o_{p}(1).  \label{EQ:Theorem2_B}
\end{equation}%
On the other hand, the Taylor expansion of $\boldsymbol{g}(\widetilde{%
\boldsymbol{\theta }}_{\beta })$\ about the point $\boldsymbol{\theta }_{0}$
is%
\begin{equation}
n^{1/2}\boldsymbol{g}(\widetilde{\boldsymbol{\theta }}_{\beta })=\boldsymbol{%
G}^{T}(\boldsymbol{\theta }_{0})n^{1/2}(\widetilde{\boldsymbol{\theta }}%
_{\beta }-\boldsymbol{\theta }_{0})+o_{p}(1).  \label{EQ:Theorem2_C}
\end{equation}%
%
%
%
Combining equations (\ref{EQ:lagrange_restrictions}) and (\ref{EQ:Theorem2_B}) we have
\begin{equation}
n^{1/2}\tfrac{\partial }{\partial \boldsymbol{\theta }}\left. h_{n}(%
\boldsymbol{\theta })\right\vert _{\boldsymbol{\theta =\theta }_{0}}+%
\boldsymbol{J}_\beta(\boldsymbol{\theta }_{0})n^{1/2}(\widetilde{\boldsymbol{%
\theta }}_{\beta }-\boldsymbol{\theta }_{0})+\boldsymbol{G}(\boldsymbol{%
\theta }_{0})n^{-1/2}\boldsymbol{\lambda }_{n}+o_{p}(1)=\boldsymbol{0}_{p}.
\label{EQ:Theorem2_D}
\end{equation}%
%
%
%
The last expression also uses the fact that $\boldsymbol{G}(\widetilde{\boldsymbol{\theta }}_{\beta }) - \boldsymbol{G}(\boldsymbol{%
\theta }_{0})$ is an $o_p(1)$ term. Similarly from (\ref{EQ:lagrange_restrictions}) and (\ref{EQ:Theorem2_C}) it follows that
\begin{equation}
\boldsymbol{G}^{T}(\boldsymbol{\theta }_{0})n^{1/2}(\widetilde{\boldsymbol{%
\theta }}_{\beta }-\boldsymbol{\theta }_{0})+o_p(1)=\boldsymbol{0}%
_{r}.  \label{EQ:Theorem2_E}
\end{equation}%
%
%
%
%
%
%
Now we can express equations (\ref{EQ:Theorem2_D}) and (\ref{EQ:Theorem2_E})
in the matrix form as
\begin{equation*}
\left(
\begin{array}{cc}
\boldsymbol{J}_\beta(\boldsymbol{\theta }_{0}) & \boldsymbol{G}(\boldsymbol{\theta
}_{0}) \\
\boldsymbol{G}^{T}(\boldsymbol{\theta }_{0}) & \boldsymbol{0}_{r\times r}%
\end{array}%
\right) \left(
\begin{array}{c}
n^{1/2}(\widetilde{\boldsymbol{\theta }}_{\beta }-\boldsymbol{\theta }_{0})
\\
n^{-1/2}\boldsymbol{\lambda }_{n}%
\end{array}%
\right) =\left(
\begin{array}{c}
-n^{1/2}\frac{\partial }{\partial \boldsymbol{\theta }}\left. h_{n}(%
\boldsymbol{\theta })\right\vert _{\boldsymbol{\theta =\theta }_{0}} \\
\boldsymbol{0}_{r}%
\end{array}%
\right) +o_{p}(1).
\end{equation*}%
%
%
%
%
%
Therefore
\begin{equation*}
\left(
\begin{array}{c}
n^{1/2}(\widetilde{\boldsymbol{\theta }}_{\beta }-\boldsymbol{\theta }_{0})
\\
n^{-1/2}\boldsymbol{\lambda }_{n}%
\end{array}%
\right) =\left(
\begin{array}{cc}
\boldsymbol{J}_\beta(\boldsymbol{\theta }_{0}) & \boldsymbol{G}(\boldsymbol{\theta
}_{0}) \\
\boldsymbol{G}^{T}(\boldsymbol{\theta }_{0}) & \boldsymbol{0}_{r\times r}%
\end{array}%
\right) ^{-1}\left(
\begin{array}{c}
-n^{1/2}\frac{\partial }{\partial \boldsymbol{\theta }}\left. h_{n}(%
\boldsymbol{\theta })\right\vert _{\boldsymbol{\theta =\theta }_{0}}\\
\boldsymbol{0}_{r}%
\end{array}%
\right) +o_{p}(1) .
\end{equation*}%
%
%
%
%
But
\begin{equation*}
\left(
\begin{array}{cc}
\boldsymbol{J}_\beta(\boldsymbol{\theta }_{0}) & \boldsymbol{G}(\boldsymbol{\theta
}_{0}) \\
\boldsymbol{G}^{T}(\boldsymbol{\theta }_{0}) & \boldsymbol{0}%
\end{array}%
\right) ^{-1}={\left(
\begin{array}{cc}
\boldsymbol{P}(\boldsymbol{\theta }_{0}) & \boldsymbol{Q}(\boldsymbol{\theta
}_{0}) \\
\boldsymbol{Q}(\boldsymbol{\theta }_{0})^{T} & \boldsymbol{R}(\boldsymbol{%
\theta }_{0})%
\end{array}%
\right) } ,
\end{equation*}%
%
%
%
where
$\boldsymbol{P}(\boldsymbol{\theta }_{0})$ and $\boldsymbol{Q}(\boldsymbol{\theta }_{0})$ are as given in (\ref{P}) and (\ref{Q}) respectively.
The matrix $\boldsymbol{R}(\boldsymbol{\theta}_0)$ is the quantity needed to make the right hand side of the above equation equal to the indicated inverse. 
 Then
\begin{equation}
n^{1/2}(\widetilde{\boldsymbol{\theta }}_{\beta }-\boldsymbol{\theta }_{0})=-%
\boldsymbol{P}(\boldsymbol{\theta }_{0})n^{1/2}\frac{\partial }{\partial
\boldsymbol{\theta }}\left. h_{n}(\boldsymbol{\theta })\right\vert _{%
\boldsymbol{\theta =\theta }_{0}}+o_{p}(1) , \label{[E]}
\end{equation}%
%
%
%
and we know
%
\begin{equation}
n^{1/2}\frac{\partial }{\partial \boldsymbol{\theta }}\left. h_{n}(%
\boldsymbol{\theta })\right\vert _{\boldsymbol{\theta =\theta }_{0}}\underset%
{n\rightarrow \infty }{\overset{\mathcal{L}}{\longrightarrow }}\mathcal{N%
}(\boldsymbol{0},\boldsymbol{K}_{\beta }(\boldsymbol{\theta }_0)).  \label{[F]}
\end{equation}%
%
%
%
%
Finally combining (\ref{[E]}) and (\ref{[F]}) we get the desired result. 

\bigskip
\noindent \textbf{Proof of Theorem \ref{Theorem3} } 
Consider the expression $d_{\gamma }(f_{\boldsymbol{\theta }},f_{\widetilde{%
\boldsymbol{\theta }}_{\beta }})$. A Taylor expansion for an arbitrary $%
\boldsymbol{\theta }\in \Theta $, around $\widetilde{\boldsymbol{\theta }}%
_{\beta }$ leads to the relation
\begin{align*}
d_{\gamma }(f_{\boldsymbol{\theta }},f_{\widetilde{\boldsymbol{\theta }}%
_{\beta }})& =d_{\gamma }(f_{\widetilde{\boldsymbol{\theta }}_{\beta }},f_{%
\widetilde{\boldsymbol{\theta }}_{\beta }})+{\textstyle\sum\limits_{i=1}^{p}}%
\left( \frac{\partial d_{\gamma }(f_{\boldsymbol{\theta }},f_{\widetilde{%
\boldsymbol{\theta }}_{\beta }})}{\partial \theta _{i}}\right) _{\boldsymbol{%
\theta =}\widetilde{\boldsymbol{\theta }}_{\beta }}\left( \theta _{i}-%
\widetilde{\theta }_{i,{\beta }}\right) \\
& +\frac{1}{2}{\textstyle\sum\limits_{i=1}^{p}}{\textstyle%
\sum\limits_{j=1}^{p}}\left( \frac{\partial ^{2}d_{\gamma }(f_{\boldsymbol{%
\theta }},f_{\widetilde{\boldsymbol{\theta }}_{\beta }})}{\partial \theta
_{i}\partial \theta _{j}}\right) _{\boldsymbol{\theta =}\widetilde{%
\boldsymbol{\theta }}_{\beta }}\left( \theta _{i}-\widetilde{\theta }_{i,{%
\beta }}\right) \left( \theta _{j}-\widetilde{\theta }_{j,{\beta }}\right)
+o\left( \left\Vert \boldsymbol{\theta }-\widetilde{\boldsymbol{\theta }}%
_{\beta }\right\Vert ^{2}\right) .
\end{align*}%
%
%
%
%
%
It is clear that $d_{\gamma }(f_{\widetilde{\boldsymbol{\theta }}_{\beta
}},f_{\widetilde{\boldsymbol{\theta }}_{\beta }})=0$, $\left( \frac{\partial
d_{\gamma }(f_{\boldsymbol{\theta }},f_{\widetilde{\boldsymbol{\theta }}%
_{\beta }})}{\partial \boldsymbol{\theta }_{i}}\right) _{\boldsymbol{\theta =%
}\widetilde{\boldsymbol{\theta }}_{\beta }}=0$ for each $i$, and
\begin{equation*}
a_{ij}^{\gamma }\left( \widetilde{\boldsymbol{\theta }}_{\beta }\right)
=\left( \frac{\partial ^{2}d_{\gamma }(f_{\boldsymbol{\theta }},f_{%
\widetilde{\boldsymbol{\theta }}_{\beta }})}{\partial \theta _{i}\partial
\theta _{j}}\right) _{\boldsymbol{\theta =}\widetilde{\boldsymbol{\theta }}%
_{\beta }}=\left( 1+\gamma \right) \int\nolimits_{\mathcal{X}}f_{\widetilde{%
\boldsymbol{\theta }}_{\beta }}^{\gamma -1}\left( x\right) \frac{\partial f_{%
\widetilde{\boldsymbol{\theta }}_{\beta }}\left( x\right) }{\partial \theta
_{i}}\frac{\partial f_{\widetilde{\boldsymbol{\theta }}_{\beta }}\left(
x\right) }{\partial \theta _{j}}dx.
\end{equation*}%
%
%
%
%
%
%
Therefore,
\begin{equation*}
T_{\boldsymbol{\gamma }}(\widehat{\boldsymbol{\theta }}_{\beta },\widetilde{%
\boldsymbol{\theta }}_{\beta })=2nd_{\gamma }(f_{\widehat{\boldsymbol{\theta
}}},f_{\widetilde{\boldsymbol{\theta }}_{\beta }})=n^{1/2}(\widehat{%
\boldsymbol{\theta }}_{\beta }-\widetilde{\boldsymbol{\theta }}_{\beta })^{T}%
\boldsymbol{A}_{\gamma }(\widetilde{\boldsymbol{\theta }}_{\beta })n^{1/2}(%
\widehat{\boldsymbol{\theta }}_{\beta }-\widetilde{\boldsymbol{\theta }}%
_{\beta })+n\times o\left( \left\Vert \widehat{\boldsymbol{\theta }}_{\beta
}-\widetilde{\boldsymbol{\theta }}_{\beta })\right\Vert ^{2}\right) .
\end{equation*}%
%
%
%
%
%
%
%
%
%
%
%
%
%
%
%
%
%
%
%
%
Under $\boldsymbol{\theta }_{0} \in \Theta_0$
\begin{equation*}
\boldsymbol{A}_{\gamma }(\widetilde{\boldsymbol{\theta }}_{\beta })\underset{%
n\rightarrow \infty }{\overset{\mathcal{P}}{\longrightarrow }}%
\boldsymbol{A}_{\gamma }\left( \boldsymbol{\theta }_{0}\right) .
\end{equation*}%
%
%
%
%
%
Using 
(\ref{[E]}) and
\begin{equation*}
n^{1/2}\tfrac{\partial }{\partial \boldsymbol{\theta }}\left. h_{n}(%
\boldsymbol{\theta })\right\vert _{\boldsymbol{\theta =\theta }_{0}}=-n^{1/2}%
\boldsymbol{J}_\beta\left( \boldsymbol{\theta }_{0}\right) (\widehat{\boldsymbol{%
\theta }}_{\beta }-\boldsymbol{\theta }_{0})+o_{p}(1),
\end{equation*}%
%
%
%
%
we get
\begin{align*}
n^{1/2}(\widetilde{\boldsymbol{\theta }}_{\beta }-\boldsymbol{\theta }_{0})&
=\boldsymbol{P}(\boldsymbol{\theta }_{0})n^{1/2}\boldsymbol{J}_\beta\left(
\boldsymbol{\theta }_{0}\right) (\widehat{\boldsymbol{\theta }}_{\beta }-%
\boldsymbol{\theta }_{0})+o_{p}(1) \\
& =\boldsymbol{J}_\beta^{-1}\left( \boldsymbol{\theta }_{0}\right) n^{1/2}%
\boldsymbol{J}_\beta\left( \boldsymbol{\theta }_{0}\right) (\widehat{\boldsymbol{%
\theta }}_{\beta }-\boldsymbol{\theta }_{0})-\boldsymbol{Q}\left(
\boldsymbol{\theta }_{0}\right) \boldsymbol{G}^{T}(\boldsymbol{\theta }_{0}) n^{1/2}(\widehat{\boldsymbol{\theta }}%
_{\beta }-\boldsymbol{\theta }_{0})+o_{p}(1) \\
& =n^{1/2}(\widehat{\boldsymbol{\theta }}_{\beta }-\boldsymbol{\theta }_{0})-%
\boldsymbol{Q}\left( \boldsymbol{\theta }_{0}\right) \boldsymbol{G}^{T}(\boldsymbol{\theta }_{0}) n^{1/2}(\widehat{%
\boldsymbol{\theta }}_{\beta }-\boldsymbol{\theta }_{0})+o_{p}(1).
\end{align*}%
%
%
%
%
Therefore
\begin{equation}
n^{1/2}(\widehat{\boldsymbol{\theta }}_{\beta }-\widetilde{\boldsymbol{%
\theta }}_{\beta })=\boldsymbol{Q}(\boldsymbol{\theta }_{0}) \boldsymbol{G}^{T}(\boldsymbol{\theta }_{0}) n^{1/2}(\widehat{%
\boldsymbol{\theta }}_{\beta }-\boldsymbol{\theta }_{0})+o_{p}(1).  
\label{curl_theta}
\end{equation}%
%
%
%
On the other hand, $n^{1/2}(\widehat{\boldsymbol{\theta }}_{\beta }-%
\boldsymbol{\theta }_{0})\underset{n\rightarrow \infty }{\overset{%
\mathcal{L}}{\longrightarrow }}\mathcal{N}(0,\boldsymbol{J}_{\beta }^{-1}(%
\boldsymbol{\theta }_{0})\boldsymbol{K}_\beta(\boldsymbol{\theta }_{0})\boldsymbol{%
J}_{\beta }^{-1}(\boldsymbol{\theta }_{0}))$. From equations (\ref{Q}) and (\ref{B}) we have $\boldsymbol{B}_{\beta } 
\left( \boldsymbol{\theta }_{0}\right) = \boldsymbol{Q}_{\beta } \left( \boldsymbol{%
\theta }_{0}\right) \boldsymbol{G}^{T}(\boldsymbol{\theta }_{0}) \boldsymbol{J}_{\beta }^{-1}(%
\boldsymbol{\theta }_{0})$. Therefore it follows that
\begin{equation*}
n^{1/2}(\widehat{\boldsymbol{\theta }}_{\beta }-\widetilde{\boldsymbol{%
\theta }}_{\beta })\underset{n\rightarrow \infty }{\overset{\mathcal{L}}{%
\longrightarrow }}\mathcal{N}(0,\boldsymbol{B}_{\beta }\left( \boldsymbol{%
\theta }_{0}\right) \boldsymbol{K}_{\beta }(\boldsymbol{\theta }_{0})%
\boldsymbol{B}_{\beta }\left( \boldsymbol{\theta }_{0}\right) ).
\end{equation*}%
Now the asymptotic distribution of the random variables
$T_{\boldsymbol{\gamma }}(\widehat{\boldsymbol{\theta }}_{\beta },\widetilde{%
\boldsymbol{\theta }}_{\beta })=2nd_{\gamma }(f_{\widehat{\boldsymbol{\theta
}}_{\beta }},f_{\widetilde{\boldsymbol{\theta }}_{\beta }})$
and
\begin{equation*}
n^{1/2}(\widehat{\boldsymbol{\theta }}_{\beta }-\widetilde{\boldsymbol{%
\theta }}_{\beta })^{T}\boldsymbol{A}_{\gamma }\left( \boldsymbol{\theta }%
_{0}\right) n^{1/2}(\widehat{\boldsymbol{\theta }}_{\beta }-\widetilde{%
\boldsymbol{\theta }}_{\beta })
\end{equation*}%
%
%
%
%
%
are the same because
\begin{equation*}
n\times o\left( \left\Vert \widehat{\boldsymbol{\theta }}_{\beta }-%
\widetilde{\boldsymbol{\theta }}_{\beta }\right\Vert ^{2}\right)
=o_{p}\left( 1\right) .
\end{equation*}%
%
%
%
%
Now we apply Corollary 2.1 in  \cite{MR801686}, which essentially states the following.
 Let $\boldsymbol{X}$ be a $q$-variate normal random
variable with mean vector $\boldsymbol{0}$ and variance-covariance matrix $%
\boldsymbol{\Sigma }$. Let $\boldsymbol{M}$ be a real symmetric matrix of
order $q$. Let $k=\rm{rank}(\boldsymbol{\Sigma M\Sigma )}$, $k\geq 1$ and let $%
\lambda _{1},\ldots ,\lambda _{k},$ be the nonzero eigenvalues of $%
\boldsymbol{M\Sigma .}$ Then the distribution of the quadratic form $%
\boldsymbol{X}^{T}\boldsymbol{MX}$ coincides with the distribution of the
random variable ${\textstyle\sum\limits_{i=1}^{k}}\lambda _{i}Z_{i}^{2},$
where $Z_{1},\ldots ,Z_{k}$ are independent, each being a standard normal variable. In our case the asymptotic distribution of
$T_{\gamma }(\widehat{\boldsymbol{\theta }}_{\beta },\widetilde{\boldsymbol{%
\theta }}_{\beta })$
coincides with the distribution of the random variable
${\textstyle\sum\limits_{i=1}^{k}}\lambda _{i}^{\beta ,\gamma }Z_{i}^{2}$
where $\lambda _{1}^{\beta ,\gamma },\ldots ,\lambda _{k}^{\beta ,\gamma }$,
are the nonzero eigenvalues of $\boldsymbol{A}_{\gamma }\left( \boldsymbol{%
\theta }_{0}\right) \boldsymbol{B}_{\beta }\left( \boldsymbol{\theta }%
_{0}\right) \boldsymbol{K}_{\beta }(\boldsymbol{\theta }_{0})\boldsymbol{B}%
_{\beta }\left( \boldsymbol{\theta }_{0}\right) $,
where
\begin{equation*}
k={\rm rank}\left( \boldsymbol{B}_{\beta }\left( \boldsymbol{\theta }_{0}\right)
\boldsymbol{K}_{\beta }(\boldsymbol{\theta }_{0})\boldsymbol{B}_{\beta
}\left( \boldsymbol{\theta }_{0}\right) \boldsymbol{A}_{\gamma }\left(
\boldsymbol{\theta }_{0}\right) \boldsymbol{B}_{\beta }\left( \boldsymbol{%
\theta }_{0}\right) \boldsymbol{K}_{\beta }(\boldsymbol{\theta }_{0})%
\boldsymbol{B}_{\beta }\left( \boldsymbol{\theta }_{0}\right) \right) .%
\end{equation*}%
%
%

\bigskip
\noindent \textbf{Proof of Theorem \ref{theorem9} } 
Notice that $\boldsymbol{Q}_{\beta } \left( \boldsymbol{%
\theta }_{0}\right) \boldsymbol{G}^{T}(\boldsymbol{\theta }_{0}) = \boldsymbol{B}_{\beta } 
\left( \boldsymbol{\theta }_{0}\right)  \boldsymbol{J}_{\beta }(%
\boldsymbol{\theta }_{0})$. From equation (\ref{curl_theta}) we have
\begin{equation*}
n^{1/2}(\widehat{\boldsymbol{\theta }}_{\beta }-\widetilde{\boldsymbol{%
\theta }}_{\beta })=\boldsymbol{B}_{\beta }\left( \boldsymbol{\theta }%
_{0}\right) \boldsymbol{\boldsymbol{J}_{\beta }(\boldsymbol{\theta }_{0})}%
n^{1/2}(\widehat{\boldsymbol{\theta }}_{\beta }-\boldsymbol{\theta }%
_{0})+o_{p}(1),
\end{equation*}%
%
%
%
%
%
%
%
%
%
%
then
\begin{align*}
n^{1/2}(\widehat{\boldsymbol{\theta }}_{\beta }-\widetilde{\boldsymbol{%
\theta }}_{\beta })& =\boldsymbol{B}_{\beta }\left( \boldsymbol{\theta }%
_{0}\right) \boldsymbol{\boldsymbol{J}_{\beta }(\boldsymbol{\theta }_{0})}%
n^{1/2}(\widehat{\boldsymbol{\theta }}_{\beta }-\boldsymbol{\theta }_{n})+%
\boldsymbol{B}_{\beta }\left( \boldsymbol{\theta }_{0}\right) \boldsymbol{%
\boldsymbol{J}_{\beta }(\boldsymbol{\theta }_{0})}n^{1/2}\left( \boldsymbol{%
\ \theta }_{n}-\boldsymbol{\theta }_{0}\right) +o_{p}(1) \\
& =\boldsymbol{B}_{\beta }\left( \boldsymbol{\theta }_{0}\right) \boldsymbol{%
\boldsymbol{J}_{\beta }(\boldsymbol{\theta }_{0})}n^{1/2}(\widehat{%
\boldsymbol{\theta }}_{\beta }-\boldsymbol{\theta }_{n})+\boldsymbol{B}%
_{\beta }\left( \boldsymbol{\theta }_{0}\right) \boldsymbol{\boldsymbol{J}%
_{\beta }(\boldsymbol{\theta }_{0})d}+o_{p}(1).
\end{align*}%
%
%
%
%
%
%
%
%
%
%
Under $H_{1,n}$ one has
\begin{equation*}
n^{1/2}(\widehat{\boldsymbol{\theta }}_{\beta }-\boldsymbol{\theta }_{n})%
\overset{\mathcal{L}}{\underset{n\rightarrow \infty }{\longrightarrow }}%
\mathcal{N}\left( \mathbf{0}_{p},\boldsymbol{J}_{\beta }^{-1}(\boldsymbol{%
\theta }_{0})\boldsymbol{K}_{\beta }(\boldsymbol{\theta }_{0})\boldsymbol{J}%
_{\beta }^{-1}(\boldsymbol{\theta }_{0})\right),
\end{equation*}%
%
%
%
%
and
\begin{equation*}
n^{1/2}(\widehat{\boldsymbol{\theta }}_{\beta }-\widetilde{\boldsymbol{%
\theta }}_{\beta })\overset{\mathcal{L}}{\underset{n\rightarrow \infty }{%
\longrightarrow }}\mathcal{N}\left( \boldsymbol{B}_{\beta }\left(
\boldsymbol{\theta }_{0}\right) \boldsymbol{\boldsymbol{J}_{\beta }(%
\boldsymbol{\theta }_{0})d},\boldsymbol{B}_{\beta }\left( \boldsymbol{\theta
}_{0}\right) \boldsymbol{K}_{\beta }(\boldsymbol{\theta }_{0})\boldsymbol{B}%
_{\beta }\left( \boldsymbol{\theta }_{0}\right) \right) .
\end{equation*}%
%
%
%
%
We know that
\begin{equation*}
T_{\gamma }(\widehat{\boldsymbol{\theta }}_{\beta },\widetilde{\boldsymbol{%
\theta }}_{\beta })=n^{1/2}(\widehat{\boldsymbol{\theta }}_{\beta }-%
\widetilde{\boldsymbol{\theta }}_{\beta })^{T}\boldsymbol{A}_{\gamma }(%
\boldsymbol{\theta }_{0})n^{1/2}(\widehat{\boldsymbol{\theta }}_{\beta }-%
\widetilde{\boldsymbol{\theta }}_{\beta })+o_{p}(1).
\end{equation*}%
Then, $T_{\gamma }(\widehat{\boldsymbol{\theta }}_{\beta },\widetilde{%
\boldsymbol{\theta }}_{\beta })$ has the same asymptotic distribution as the
quadratic form $n^{1/2}(\widehat{\boldsymbol{\theta }}_{\beta }-\widetilde{%
\boldsymbol{\theta }}_{\beta })^{T}\boldsymbol{A}_{\gamma }(\boldsymbol{%
\theta }_{0})n^{1/2}(\widehat{\boldsymbol{\theta }}_{\beta }-\widetilde{%
\boldsymbol{\theta }}_{\beta }).$ Now the result follows from Corollary 2.2
of  \cite{MR801686}: Let $\boldsymbol{X}\sim \mathcal{N}_{q}(\boldsymbol{%
\mu },\boldsymbol{\Sigma })$, a $q$-variate normal distribution. Let $%
\boldsymbol{M}$ be a real symmetric non-negative definite matrix of order $q$%
. Let $k=\rm{rank}(\boldsymbol{\Sigma M\Sigma })$, $k\geq 1$, and let $\lambda
_{1},\ldots ,\lambda _{k}$ be the positive eigenvalues of $\boldsymbol{%
M\Sigma }$. Then the quadratic form $\boldsymbol{X}^{T}\boldsymbol{MX}$ has
the same distribution as the random variable%
\begin{equation*}
\sum\limits_{i=1}^{k}\lambda _{i}\left( Z_{i}+w_{i}\right) ^{2}+\eta ,
\end{equation*}%
%
%
%
%
%
%
%
%
%
%
where $Z_{1},\ldots ,Z_{k}$ are independent, each having a standard normal
distribution. Values of $\boldsymbol{w}$ and $\eta $ are given by
\begin{equation*}
\boldsymbol{w=\Lambda }_{k}^{-1}\boldsymbol{V}^{T}\boldsymbol{S}^{T}%
\boldsymbol{M\mu },~~\eta =\boldsymbol{\mu }^{T}\boldsymbol{%
M\mu }-\boldsymbol{w}^{T}\boldsymbol{\Lambda }_{k}\boldsymbol{w},
\end{equation*}%
%
%
%
%
%
%
%
%
%
%
where $\boldsymbol{S}$ is any $q\times k$ square root of $\boldsymbol{\Sigma
}$, $\boldsymbol{\Lambda }_{k}= \rm{diag}\left( \lambda _{1},\ldots ,\lambda
_{k}\right) $ and $\boldsymbol{V}$ is the matrix of corresponding
orthonormal eigenvectors. We therefore have the desired result.

\end{document}